\title{Reputation Effects under Short Memories}
\author{Harry PEI\footnote{Email: harrydp@northwestern.edu. I thank Dilip Abreu, Dan Barron, James Best, Joyee Deb, Laura Doval, Jeff Ely, Drew Fudenberg, George Georgiadis, Yingkai Li, Qingmin Liu, Daniel Luo, Lucas Maestri, Meg Meyer,
Wojciech Olszewski, Alessandro Pavan, Larry Samuelson, Ali Shourideh, Vasiliki Skreta, Andrzej Skrzypacz,
Alex Smolin, Takuo Sugaya, Caroline Thomas, Juuso V\"{a}lim\"{a}ki, Allen Vong, and Alex Wolitzky for helpful comments. I thank NSF Grant SES-1947021 for financial support.}}
\date{\today}
\documentclass[11pt]{article}

\usepackage{amsmath}
\usepackage{graphicx}
\usepackage{amsfonts}
\usepackage{amsthm}
\usepackage{setspace}
\usepackage{tikz}
\usepackage{amssymb}
\usetikzlibrary{patterns}
\usepackage{sgame}
\usepackage{color}
\usepackage{hyperref}
\usepackage{times}
\usepackage{enumitem}
\usepackage{csquotes}
\usepackage{multirow,array}
\usepackage[top=1in, bottom=1in, left=1in, right=1in]{geometry}
\begin{document}
\numberwithin{equation}{section}

\maketitle

\noindent \textbf{Abstract:} I analyze a novel reputation game between a patient seller and a sequence of myopic consumers, in which the consumers have limited memories and do not know the exact sequence of the seller's actions. I focus on the case where each consumer only observes the number of times that the seller took each of his actions in the last $K$ periods. When payoffs are monotone-supermodular, I show that the patient seller can approximately secure his commitment payoff in all equilibria as long as $K$ is at least one. I also show that the consumers can approximately attain their first-best welfare in all equilibria \textit{if and only if} their memory length $K$ is lower than some cutoff. Although a longer memory enables more consumers to punish the seller once the seller shirks, it weakens their incentives to punish the seller once they observe him shirking.\\

\noindent \textbf{Keywords:} limited memory, coarse information, commitment payoff, equilibrium behavior.

\newtheorem{Proposition}{\hskip\parindent\bf{Proposition}}
\newtheorem{Theorem}{\hskip\parindent\bf{Theorem}}
\newtheorem{Lemma}{\hskip\parindent\bf{Lemma}}[section]
\newtheorem*{Lemma1}{\hskip\parindent\bf{No-Back-Loop Lemma}}
\newtheorem*{Lemma3}{\hskip\parindent\bf{No-Back-Loop Lemma*}}
\newtheorem*{Lemma2}{\hskip\parindent\bf{Learning Lemma}}
\newtheorem{Corollary}{\hskip\parindent\bf{Corollary}}
\newtheorem{Definition}{\hskip\parindent\bf{Definition}}
\newtheorem{Assumption}{\hskip\parindent\bf{Assumption}}
\newtheorem{Condition}{\hskip\parindent\bf{Condition}}
\newtheorem{Claim}{\hskip\parindent\bf{Claim}}
\newtheorem*{Assumption1}{\hskip\parindent\bf{Assumption 1'}}

\begin{spacing}{1.5}

\section{Introduction}\label{sec1}
Economic agents benefit from good reputations. This idea was formalized in the reputation literature pioneered by Fudenberg and Levine (1989), who show that a patient player (e.g., a seller) can secure a high payoff if he builds a reputation in front of a sequence of short-run players (e.g., consumers). These reputation results assume that the consumers can observe the \textit{full history} of the seller's actions, or can observe
\textit{a sufficiently long history} of the seller's actions including the \textit{exact sequence} of these actions. The intuition is that when the consumers observe that the seller has taken a particular action (e.g., exerting high effort) for a long time,  they will be convinced that he is likely to take that same action in the future.

In practice, consumers may not have long memories and may not know the exact sequence of the seller's actions. For example,
eBay and eLance only disclose the number of positive and negative ratings each seller received in the last 6 months but not the exact timing of these ratings (Dellarocas 2006). A firm's ranking in the Better Business Bureau is based on some aggregate statistics of its performance in the last 36 months but not on the other details. This is also the case in markets without good record-keeping institutions where consumers primarily learn by talking to other consumers: It is hard for the consumers to learn about the seller's actions in the distant past since people who bought from the seller a long time ago may forget  their experiences. Although people who bought recently may share their experiences with future consumers, they usually do not share all the details they know, such as who bought before them and what they learnt from others. This makes it hard for the consumers to learn the exact sequence of actions.

This paper takes a first step to analyze reputation effects when the consumers have limited memories and do not have detailed information about the seller's history. I study a novel reputation model in which every consumer can only observe some summary statistics of the seller's last $K$ actions but \textit{cannot observe the exact sequence of these actions}.\footnote{My reputation result extends when the short-run players can observe the summary statistics of the patient player's entire history, i.e., they observe the number of times that the patient player took each of his actions since the beginning of the game. In Section \ref{sub5.1}, I study an alternative model where the short-run players can observe the \textit{exact sequence} of the patient player's last $K$ actions.} This stands in contrast to the canonical reputation model of Fudenberg and Levine (1989) and existing reputation models with limited memories such as Liu and Skrzypacz (2014) which assume that the consumers can \textit{perfectly} observe the \textit{exact sequence} of the seller's last $K$ actions.

%My model stands in contrast to the canonical reputation model of Fudenberg and Levine (1989) where the short-run players can observe the full history, as well as existing reputation models with limited memories such as Liu and Skrzypacz (2014) where the short-run players can \textit{perfectly} observe the \textit{exact sequence} of the patient player's last $K$ actions.

When the consumers do not have long memories, it is unclear whether the seller can still benefit from building reputations. This is because the consumers may not be convinced that the seller will exert high effort in the future if they can only observe him exerting high effort in the last few periods.

Nevertheless, I show that in a natural class of games, the seller can secure high returns from building reputations \textit{regardless} of the consumers' memory length $K$. This stands in contrast to existing reputation results, which \textit{require} the consumers to have long enough memories. I also show that the consumers can obtain their first-best welfare in all equilibria \textit{if and only if} their memories are short enough. This stands in contrast to
existing reputation models where the consumers can observe the exact sequence of actions (e.g., Fudenberg and Levine 1989), in which
there are equilibria where
the consumers receive low payoffs.

I study an infinitely repeated game between a patient player and a sequence of short-run players. Players' payoffs are \textit{monotone-supermodular} (MSM) in the sense that there exists a complete order on each player's action set
such that (i) the patient player's payoff decreases in his own action and increases in his opponents' action, and (ii) both players' payoff functions have strictly increasing differences.\footnote{The seller's payoff being monotone and the consumers' payoff being supermodular are standard assumptions in the reputation literature, which are also assumed in Mailath and Samuelson (2001), Liu (2011), Ekmekci (2011), Liu and Skrzypacz (2014), among many others. The difference is that Liu (2011) and Liu and Skrzypacz (2014) assume that the seller's payoff is \textit{submodular}. I motivate my supermodularity assumption in Section \ref{sub2.1}. I study the case where the seller has submodular payoff in Section \ref{sub5.2}.}
A leading example that satisfies my MSM condition is the product choice game in Mailath and Samuelson (2001, 2015):
\begin{center}
\begin{tabular}{| c | c | c |}
  \hline
   seller $\backslash$  consumer & {\color{blue}{\textbf{T}rust}} & \textbf{N}o Trust \\
  \hline
  {\color{blue}{\textbf{H}igh Effort}} & ${\color{blue}{\mathbf{1}}}, 1$ & $-c_N, x$ \\
  \hline
  \textbf{L}ow Effort & $1+c_T, -x$ & $0,0$ \\
  \hline
\end{tabular} with $0<c_T<c_N$ and $x \in (0,1)$.
\end{center}
This game satisfies MSM once players' actions are ranked according to $H \succ L$ and $T \succ N$.

The patient player privately observes his \textit{type}: He is either a \textit{commitment type} who chooses his highest action (in the example, high effort) in every period, or a \textit{strategic type} who maximizes his discounted average payoff. The patient player's \textit{reputation} is the probability his opponents assign to the commitment type.

For simplicity, my baseline model assumes that each short-run player can only observe the number of times that the patient player took each of his actions in the last  $K \in \mathbb{N}$ periods but not the exact sequence of these $K$ actions.\footnote{My theorems are robust when there is a small amount of noise in consumers' signals. My results are also robust when a small fraction of consumers know the exact sequence of the seller's last $K$ actions.
I also study an extension where there is a partition of the seller's action space such that the consumers can only observe which partition element the seller's last $K$ actions belong to.}
In order to be consistent with the existing literature on reputation games with limited memories, such as Liu (2011), Liu and Skrzypacz (2014), and Levine (2021), I make a standard assumption that the short-run players \textit{cannot} directly observe how long the game has lasted (i.e., calendar time). They have a prior belief about calendar time and update their beliefs via Bayes rule after observing their histories.
%\footnote{As explained in Levine (2021), if the short-run players can directly observe calendar time in reputation models with limited memories, then it enables players to coordinate their play on calendar time in implausible ways.}

%To fix ideas, consider markets without formal record-keeping institutions so that consumers primarily learn by talking to other consumers. After a consumer buys from a seller, she remembers who the seller is and the seller's action against her (I interpret as \textit{her experience}) for $K$ periods, and during these $K$ periods, she shares her experience with future consumers (i.e., with the next $K$ consumers). She, however, may not share everything she knows, such as who bought before her and what she learnt from others, possibly due to the time cost of communicating all these details. In this environment, consumers can learn the \textit{summary statistics} of the seller's last $K$ actions but not the exact order with which these actions were taken.

Theorem \ref{Theorem1} shows that as long as $K$ is at least $1$, the patient player receives at least his commitment payoff in every Nash equilibrium, and that he can secure this payoff by taking the highest action in  every period.
In the product choice game, my theorem implies that the patient seller's payoff is at least $1$ in every equilibrium. This conclusion stands in contrast to the repeated complete information game \textit{without} any commitment type, in which there are equilibria where the seller receives his minmax payoff $0$.

To the best of my knowledge, Theorem \ref{Theorem1} is the first reputation result that allows for \textit{arbitrary memory length}, and in particular, it allows the short-run players to have \textit{arbitrarily short memories}. This aspect of my result stands in contrast to the existing reputation results which \textit{assume} that the short-run players have infinite memories
(e.g., Fudenberg and Levine 1989) or  long enough memories (e.g., Theorem 2 in Liu and  Skrzypacz 2014).\footnote{Theorem 2 in Liu and Skrzypacz (2014) shows that the patient player can  secure his commitment payoff in all stationary equilibria when the short-run players' memory length $K$ is greater than some cutoff $\widehat{K}$, where $\widehat{K}$ depends on the prior probability of the commitment type. That is to say, their reputation result requires the short-run players to have \textit{long enough memories}.}
My result contributes to the reputation literature by showing that in a natural class of games, the patient player can secure high returns from building reputations even when his opponents do not have long memories and can only observe some summary statistics about his recent actions.

The challenge to prove this result comes from the observation that the short-run players may have arbitrarily short memories and cannot observe everything their predecessors observe. As a result, the standard arguments in Fudenberg and Levine (1989,1992), Sorin (1999), and Gossner (2011) do not apply.

My proof circumvents this challenge by establishing a \textit{no-back-loop property}, that it is never optimal for the patient player to \textit{milk his reputation when it is strictly positive and later restore his reputation}. I explain the intuition using the product choice game. Since the seller's payoff increases in consumer's trust but decreases in his effort, he has an incentive to restore his reputation only if the consumers trust him with higher probability after he restores his reputation. Since the seller's payoff is supermodular, he has a stronger incentive to exert high effort when the consumers trust him with higher probability. Hence, if it is optimal for the seller to restore his reputation when the consumers trust him with lower probability,
then it is \textit{not} optimal for him to milk his reputation when the consumers trust him with higher probability.\footnote{This argument is incomplete since it does not take into account the fact that the seller's incentive in each period depends not only on his stage-game payoff, but also on his continuation value. I present the complete argument in Section \ref{sub3.2}.}

%For some rough intuition, since the patient player's payoff increases in his opponent's action and decreases in his own action, he has an incentive to restore his reputation (which requires him to take the highest action) only if his opponents take higher actions after he restores his reputation. Since the patient player's payoff is supermodular, he has a stronger incentive to take the highest action when his opponents' actions are higher. Hence, if it is optimal for the patient player to restore his reputation when his opponents' actions are lower, then it is \textit{not} optimal for him to milk his reputation when his opponents' actions are higher.

Since the seller's equilibrium strategy satisfies the \textit{no-back-loop property}, there is \textit{at most one period} over the infinite horizon where he has exerted high effort in all of the last $K$ periods but will shirk in the current period. Since the commitment type exerts high effort in every period, the consumers believe that the seller will exert high effort with probability close to $1$
after they observe him exerting high effort in all of the last $K$ periods, causing them to have a strict incentive to trust the seller.
This implies Theorem \ref{Theorem1}, since in any equilibrium, the patient seller obtains his commitment payoff $1$ if he deviates and exerts high effort in every period, and his equilibrium payoff must be weakly greater than his payoff under any deviation.

Next, I examine consumer welfare. This is not covered by Theorem \ref{Theorem1}, which only shows that the seller receives at least his commitment payoff if he exerts high effort in every period. However, there might be other strategies that can give the seller  weakly higher payoffs. As a result,
it is unclear whether the seller will exert high effort in equilibrium and  whether the consumers will attain a high welfare.\footnote{For example, in the canonical reputation model of Fudenberg and Levine (1989), the patient player can secure his commitment payoff by taking his commitment action in every period. However, there are also equilibria where he takes the commitment action with low frequency and the short-run players receive their minmax payoff. See Li and Pei (2021) for details.}

My next set of results establish an equivalence between (i) the short-run player's memory length $K$ is below some cutoff,
(ii) the patient player taking his highest action in almost all periods in \textit{all} equilibria, and (iii) the short-run players approximately obtaining their first-best welfare in \textit{all} equilibria. I also show that when $K$ is below the cutoff, the patient player will take his highest action with probability close to one except for the initial few periods and periods in the distant future that have negligible payoff consequences.

I explain the intuition using the product choice game.
An increase in $K$ has two effects on the seller's incentives. First, once the seller chooses $L$, more consumers can observe it when $K$ is larger, in which case more consumers \textit{have the ability to punish} the seller.
However, a larger $K$ also makes it more difficult to \textit{motivate} the consumers to punish the seller. To see this, suppose the consumers believe that the strategic-type seller will play $L$ in periods $K-1,2K-1,...$ and will play $H$ in other periods. If a consumer observes that the seller played $L$ once in the last $K$ periods, then she knows that the seller is not the commitment type. According to Bayes rule, she believes that the seller will play $L$ in the current period with probability close to $\frac{1}{K}$. When $K$ is large, $\frac{1}{K}$ is small, so a consumer who does not observe calendar time believes that it is unlikely that the seller will play $L$ in the current period, and thus has no incentive to play $N$ even though she knows that the seller is not the commitment type. If the consumers play $T$ when $L$ occurred only once, then the seller prefers \textit{playing $L$ once every $K$ periods} to \textit{playing $H$ in every period}, making the consumers' beliefs self-fulfilling. This leads to an equilibrium where the seller exerts low effort periodically.\footnote{Using similar ideas, one can construct equilibria where the seller exerts low effort $n$ times every $K$ periods, provided that the consumers have no incentive to play $N$ when they believe that the seller will exert low effort with probability no more than $\frac{n}{K}$.}

When $K$ is small, I show that in \textit{every} equilibrium, the patient player's payoff is bounded below his commitment payoff
after he loses his reputation. Since Theorem \ref{Theorem1} implies that the patient player can secure his commitment payoff by taking the highest action in every period, he will do so in \textit{all} equilibria.

My proof introduces new techniques that can characterize the common properties of the patient player's behavior in \textit{all} equilibria. Some of my arguments require \textit{no assumption} on players' payoffs, which are portable to other repeated games with limited records, repeated games where players observe random samples of their opponents' past actions, and repeated games where players use finite automaton strategies.

I review the related literature in the rest of this section. I present my baseline model in Section \ref{sec2}. I state my main results in Section \ref{sec3}. I study several extensions in Section \ref{sec4}, such as the case where the consumers only observe noisy signals about the seller's last $K$ actions, and the case where there is a partition of the seller's action space such that the consumers only observe the number of times that the seller's last $K$ actions belong to each partition element. Section \ref{sec5} examines two alternative models that are only one-step-away from my baseline model and the one in Liu and Skrzypacz (2014): one in which the seller's payoff is supermodular but the consumers observe the exact sequence of the seller's last $K$ actions, and another one in which the consumers do not know the exact sequence of actions but the seller's payoff is submodular.

\paragraph{Related Literature:}
My paper contributes to three strands of literature: reputation with limited memories, cooperation under limited information, and the sustainability of reputations.

In contrast to the existing reputation models with limited memories such as Liu (2011), Liu and Skrzypacz (2014), and Pei (2022),\footnote{Pei (2022) assumes that the short-run players can observe the exact sequence of the patient player's last $K$ actions and at least one previous short-run player's action. He constructs an equilibrium where the patient player receives his minmax payoff.}
I introduce a novel reputation model in which the short-run players do not know the exact sequence of actions.\footnote{Levine (2021) assumes that the short-run players have 1-period memory, i.e., $K=1$, in which case whether the short-run players can observe the exact sequence of the last $K$ actions is irrelevant. In Jehiel and Samuelson (2012), the short-run players mistakenly believe that all types of the patient player use stationary strategies. Although the short-run players can observe the exact sequence of actions, their belief about the patient player's current-period action depends only on the empirical action frequencies.} Compared to
the reputation results in those papers which require long enough memories,
I show that the patient player can secure high returns from building reputations regardless of his opponents' memory length. My results also shed light on the effects of memory length on consumer welfare which, to the best of my knowledge, has not been examined in the existing reputation literature.\footnote{Kaya and Roy (2022) study a repeated signaling game where the consumers' best reply depends only on the seller's type. They show that longer memories \textit{encourage} the low-quality seller to imitate the high-quality one. In my model, the consumers' payoff depends only on players' actions and longer memories \textit{undermine} the seller's incentive to imitate the commitment type.}

My paper is also related to the literature on sustaining cooperation when players have limited information about others' past behaviors. This has been studied in repeated games with random matching by Kandori (1992), Ellison (1994), Takahashi (2010), Heller and Mohlin (2018), and Clark, Fudenberg and Wolitzky (2021). Most of these papers focus on the prisoner's dilemma where all players are patient.\footnote{For games with general payoffs, see Deb (2020), Deb, Sugaya and Wolitzky (2020), and Sugaya and Wolitzky (2020).}
Their results provide conditions on the monitoring technology under which \textit{either} a folk theorem holds \textit{or} players obtain their minmax payoffs in all equilibria. In contrast, I study repeated games between a patient player and a sequence of short-run players with one-sided lack of commitment (e.g., product choice games) instead of the prisoner's dilemma. I provide conditions under which players can secure high payoffs in \textit{all} equilibria.

%They show that cooperation breaks down in all equilibria when the myopic players can perfectly observe the last $K$ outcomes but can be sustained in some equilibria when past outcomes are observed with noise.

Bhaskar and Thomas (2019) study a repeated \textit{complete information} game between a patient player and a sequence of short-run players.
They assume that the short-run players do not have any information about actions that were taken more than $K$ periods ago. They find information structures  under which players can cooperate in some equilibria. By contrast, I study a repeated \textit{incomplete information} game. I provide conditions under which the consumers approximately attain their first-best payoff in \textit{all} equilibria.

Ekmekci (2011) and Vong (2022) study repeated product choice games where the seller's cost is independent of the consumers' actions. They construct rating systems under which there \textit{exists} an equilibrium where the patient seller exerts high effort in almost all periods.
Although I do not explicitly study an information design problem, my results imply that when the seller has supermodular payoffs, he will exert high effort in almost all periods \textit{in all equilibria} under a simple disclosure rule, which is to reveal the number of times that the seller took each of his actions in the last $K$ periods for some small but positive $K$.

My results also contribute to the literature on reputation sustainability. Theorem \ref{Theorem2} focuses on a novel notion of reputation sustainability, namely, whether the patient player will take the commitment action with discounted frequency close to $1$. This notion of reputation sustainability is novel relative to the one in Cripps, Mailath and Samuelson (2004) which focuses on the patient player's behavior and reputation as $t \rightarrow +\infty$.\footnote{Ekmekci, Gossner and Wilson (2012) and Liu and Skrzypacz (2014) propose another notion of reputation sustainability, that the patient player can secure his commitment payoff \textit{at every history} in every equilibrium, rather than just securing his commitment payoff in period $0$. Theorem \ref{Theorem1} implies that in my model, reputation is sustainable under their criteria for any $K \geq 1$.}
 My notion is better suited for evaluating consumer welfare. For example, if reputation is sustainable under my notion, then the consumers can approximately attain their first-best welfare in all equilibria.

Pei (2020) and Ekmekci and Maestri (2022) study reputation models with interdependent values. They provide conditions under which the patient player takes his commitment action in almost all periods in all equilibria. Nevertheless, the mechanism behind their results is different from that behind mine.
The patient player is guaranteed to be punished in their interdependent value settings since deviating from the commitment action is a negative signal about the patient player's type. By contrast, the current paper studies a private-value model but the short-run players do not know the exact sequence of the patient player's actions. Since the short-run players cannot fine-tune their strategies based on the details of the game's history, the punishments needed to sustain cooperation inevitably punish the patient player at other histories, and harsher punishments at a larger set of histories can deter the patient player from milking his reputation.

\section{Baseline Model}\label{sec2}
Time is indexed by $t=0,1...$. A long-lived player $1$ (e.g., seller) interacts with a different player $2$ (e.g., consumer) in each period.
After each period, the game ends with probability $1-\delta$ with $\delta \in (0,1)$, after which players' stage-game payoffs are zero.
In the baseline model, player 1 is indifferent between receiving one unit of utility in the current period and in the next period, so he discounts his future payoffs by $\delta$.

In period $t$, player 1 chooses $a_t \in A$ and player $2_t$ chooses $b_t \in B$ simultaneously from finite sets $A$ and $B$. Players' stage-game payoffs are $u_1(a_t,b_t)$ and $u_2(a_t,b_t)$. All my results in Sections \ref{sec3} and \ref{sec4} are shown under the
following \textit{monotone-supermodularity} assumption on players' stage-game payoffs:
\begin{Assumption}\label{Ass1}
There exist a complete order $\succ_A$ on $A$ and a complete order $\succ_B$ on $B$ such that first, $u_1(a,b)$ is strictly increasing in $b$ and is strictly decreasing in $a$, and second, both $u_1(a,b)$ and $u_2(a,b)$ have strictly increasing differences in $a$ and $b$.
\end{Assumption}
The product choice game satisfies Assumption \ref{Ass1} under the rankings
 $H \succ_A L$ and $T \succ_B N$, where the requirements translate into (i) high effort is costly for the seller, (ii) the seller benefits from the consumers' trust, (iii) the consumers have stronger incentives to trust when effort is higher, and (iv) the cost of effort is lower when the consumers choose $T$.
The first three conditions are standard: They are satisfied in most applications and most games analyzed in the reputation literature, including those in Mailath and Samuelson (2001), Ekmekci (2011), Liu (2011), and Liu and Skrzypacz (2014).
The assumption that the seller's payoff being supermodular stands in contrast to some of the existing papers. For example, Ekmekci (2011) assumes that the seller's cost is independent of the consumers' actions, while Liu (2011) and Liu and Skrzypacz (2014) assume that the seller's cost is higher when the consumers trust him. I motivate my supermodularity assumption in Section \ref{sub2.1} and discuss the case where $u_1$ is weakly submodular in Section \ref{sub5.2}.

To highlight the mechanisms at work,
my baseline model focuses on games where player $2$'s action choice is binary, i.e., $|B|=2$. This class of games is a primary focus of the reputation literature, including Mailath and Samuelson (2001), Ekmekci (2011), Liu (2011), and Levine (2021). Unlike those papers that focus on $2 \times 2$ games, my baseline model allows the patient player to have any finite number of actions.

Section \ref{sub4.2} extends my theorems to games where $|B| \geq 3$, which include but are not limited to the ones in Liu and Skrzypacz (2014) where player $2$ has a unique best reply to every $\alpha \in \Delta (A)$.
My results also hold when $A$ is a lattice instead of a completely ordered set. This fits applications where player 1's action set is \textit{multi-dimensional} such as a seller choosing the quality of both his product and his customer service.

Before choosing $a_t$, player 1 observes all the past actions $h^t \equiv \{a_s,b_s\}_{s=0}^{t-1}$ and his perfectly persistent type $\omega \in \{\omega_s, \omega_c\}$. Let $\omega_c$ stand for a \textit{commitment type} who plays his highest action $a^* \equiv \max A$, or his \textit{commitment action}, in every period. Let $\omega_s$ stand for a \textit{strategic type} who maximizes his discounted average payoff $\sum_{t=0}^{\infty} (1-\delta) \delta^t u_1(a_t,b_t)$.
Let $\pi_0 \in (0,1)$ be the prior probability of the commitment type. Let $\pi_t$ be the probability that player $2_t$'s belief assigns to the commitment type, which I call player $1$'s \textit{reputation}.

Before choosing $b_t$, player $2_t$ only observes the number of times that player 1 took each of his actions in the last $\min \{t,K\}$periods, where $K \in \{1,2,...\}$ is a parameter that measures the society's memory length. An implication is that player 2 does not know the order with which player 1 took his last $K$ actions. For example, if $K=2$, then player 2 cannot distinguish between $(a_{1},a_{2})=(a^*,a')$ and $(a_{1},a_{2})=(a',a^*)$.

This is the key modeling innovation relative to existing reputation models with limited memories such as Liu (2011), Liu and Skrzypacz (2014) and Pei (2022), all of which assume that the short-run players can \textit{perfectly} observe the patient player's last $K$ actions as well as the \textit{exact sequence} of these $K$ actions.

I also make a standard assumption in repeated games with limited memories that player 2 \textit{cannot} directly observe how long the game has lasted. As in Liu and Skrzypacz (2014), I assume that player 2 has a full support prior belief about calendar time and update their beliefs via Bayes rule after observing their histories. The standard interpretation for this assumption is that due to consumers' limited memories, they cannot directly observe how long the seller has been in the market.
Under this formulation, the first $K$ short-run players observe fewer than $K$ actions, so their \textit{posterior beliefs} assign probability $1$ to the true calendar time.

What is a reasonable prior belief about calendar time? Recall that the game ends with probability $1-\delta$ after each period. Therefore, for every $t \in \{0,1,...\}$, the probability player 2's prior assigns to calendar time being $t+1$ should equal $\delta$ times the probability her prior assigns to calendar time being $t$. The unique prior that satisfies this for every $t$ is the one that assigns probability $(1-\delta)\delta^t$ to calendar time being $t$.

%\footnote{Hu (2020) provides a foundation for this exponential prior by constructing a distribution over entry processes under which any pair of agents who observe the same history share the same posterior belief about calendar time. Players having an exponential prior belief about calendar time is also assumed in Cripps and Thomas (2019). Liu (2011) and Section 2 of Acemoglu and Wolitzky (2014) assume that players' prior about calendar time is improper uniform, in which case all my results can also go through.}

The set of player $1$'s histories is
$\mathcal{H}_1 \equiv \{
    (a_s,b_s)_{s=0}^{t-1} \textrm{ s.t. } t \in \mathbb{N} \textrm{ and } (a_s,b_s) \in A \times B
    \}$ with a typical element $h^t$. The set of player $2$'s histories is
    $\mathcal{H}_2 \equiv \{
    (n_1,...,n_{|A|}) \in \mathbb{N}^{|A|} \textrm{ s.t. } n_1 \geq 0,...,n_{|A|} \geq 0 \textrm{ and } n_1+...+n_{|A|} \leq K
    \}$, where $n_1,...,n_{|A|}$ are the number of times that player $1$ played each of his actions in the last $K$ periods.
A typical element of $\mathcal{H}_2$ is $h_2^t$.
Strategic-type player 1's strategy is $\sigma_1 : \mathcal{H}_1 \rightarrow \Delta (A)$.
Player $2$'s strategy is $\sigma_{2} : \mathcal{H}_{2} \rightarrow \Delta (B)$. Let $\Sigma_i$ be the set of player $i$'s strategies.
Under my formulation, player $2$'s action depends only on the history she observes. For example, player $2_t$ and player $2_{t+1}$ will take the same (possibly mixed) action if they observe the same history. This is a standard requirement in reputation models with limited memories, such as Liu (2011), Liu and Skrzypacz (2014) and Levine (2021).

\subsection{Discussion of Modeling Choices and Extensions}\label{sub2.1}
The consumers in my model do not know the exact sequence of the seller's actions.
This is motivated by situations such as (i) online platforms such as eBay and eLance that only disclose the number of positive and negative ratings each seller received in the last 6 months, (ii) rating institutions such as the Better Business Bureau whose ranking depends only on some aggregate
 statistics of a firm's performance in the last 36 months, and (iii) markets without good record-keeping institutions, such as informal markets in developing countries, where consumers cannot easily obtain detailed information about the exact sequence of the seller's behaviors.
My results hold when a small fraction $\varepsilon$ of consumers know the exact sequence of actions. I discuss the case where the consumers can perfectly observe the exact sequence in Section \ref{sub5.1}.

%Even in markets where the seller's actions are time-stamped, which is the case for some online platforms, most of the consumers do not have the time to figure out the \textit{exact sequence} of the seller's actions. As a result, their decisions are often based on coarse information, such as the seller's average rating and the number of times that he received each rating.

My baseline model rules out imperfect monitoring by assuming that each consumer knows the number of times that the seller took each of his actions in the last $K$ periods. Due to the technical challenges in analyzing repeated \textit{incomplete information} games with \textit{bounded memories},
most of the existing papers in this literature including Liu (2011) and Liu and Skrzypacz (2014)
 also rule out imperfect monitoring by assuming that player $2$ can \textit{perfectly} observe player $1$'s last $K$ actions.\footnote{Although Bhaskar and Thomas (2019) allow for imperfect monitoring, there is \textit{no incomplete information} in their model.}

I study two extensions that allow for \textit{imperfect monitoring}, which are motivated by situations such as the consumers' experiences (or the ratings) are noisy signals of the seller's effort, or the consumers only communicate coarse information about the seller's action to future consumers. In Section \ref{sub4.4}, I assume that there is some noisy signal $\widetilde{a}_t$ about $a_t$ and player $2_t$ observes the number of times that each signal realization occurred in the last $\min \{t,K\}$ periods. In Section \ref{sub4.3}, I assume that the short-run players can only learn from \textit{coarse summary statistics}, which is characterized by a partition of $A$ such that player $2_t$ only observes the number of times that player $1$'s last $\min \{t,K\}$ actions belong to each partition  element.
My baseline model corresponds to the finest partition of $A$. Player 2 learns nothing under the coarsest partition of $A$.

My baseline model assumes that the short-run players cannot directly observe calendar time, which is also assumed in Liu and Skrzypacz (2014), Section 2 in Acemoglu and Wolitzky (2014), Cripps and Thomas (2019), and Levine (2021). My baseline model focuses on an exponential prior belief about calendar time, which is natural when $\delta$ is interpreted as the probability with which the game continues after each period. Hu (2020) provides a microfoundation for this prior belief by constructing a model with random entry order.\footnote{Liu (2011) and Heller and Mohlin (2018) focus on \textit{stationary equilibria} where strategies are \textit{required} to be time-independent, which according to Liu (2011), is equivalent to having an improper uniform prior about calendar time. My results extend to Liu (2011)'s setting where player 1's discount factor is less than $1$ and player 2 has an improper uniform prior about calendar time.}

Section \ref{sub4.1} extends my theorems to a model where player $1$'s discount factor is different from the game's continuation probability. This is the case when player $1$ exits the game with positive probability after each period and values his utility in the current period more than his future utilities.  As will become clear in my proofs, my theorems extend to all prior beliefs as long as (i) the probability of any calendar time is close to $0$, and (ii) the ratio between the probabilities of any two adjacent calendar times is close to $1$.

My assumptions on players' payoffs are standard except for $u_1(a,b)$ having strictly increasing differences. In the product choice game, my assumption requires that the seller's effort and the consumers'
trust to be strategic complements. This assumption fits, for example, when each consumer chooses between buying a \textit{high-end version} (action $T$) and a \textit{low-end version} (action $N$) of a product and the seller decides whether to provide good customer service, which requires him to exert high effort. It is reasonable to assume that the seller's cost of providing good service is lower when the consumers buy the high-end version, since the high-end version breaks down less frequently compared to the low-end version.

My assumption also fits when the seller's \textit{monetary cost} of supplying high quality is independent of the consumers' action but players are altruistic and internalize a fraction of other players' monetary payoffs. Evidence for altruism has been widely documented by psychologists, see for example Batson and Shaw (1991). Levine (1998) formalizes altruism using a model where each player maximizes a convex combination of \textit{his own monetary payoff} and \textit{others' monetary payoffs}. In my setting, when the seller internalizes a positive fraction of the consumers' monetary payoffs, his real cost of effort is strictly lower when the consumers choose $T$ since the consumers benefit more from the seller's effort when they choose $T$.

My baseline model assumes that there is only one commitment type who plays a stationary pure strategy. This is also assumed in most of the existing reputation models with limited memories such as Liu (2011) and Liu and Skrzypacz (2014). My theorems are robust when there are multiple stationary pure-strategy commitment types, as long as the type who plays $a^*$ in every period occurs with positive probability.

\section{Results}\label{sec3}
Section \ref{sub3.2} establishes a reputation result that allows for \textit{arbitrary memory length}, that a sufficiently patient player 1 receives at least his commitment payoff in all equilibria as long as $K \geq 1$. My proof uses a \textit{no-back-loop property} that applies to all of  player 1's best replies, regardless of his discount factor.
Section \ref{sub3.3} shows that player 2 can approximately attain their highest feasible payoff and the patient player will play $a^*$ in almost all periods in all equilibria \textit{if and only if} $K$ is below some cutoff. Some of the arguments in my proof, such as Lemmas \ref{L4.1}, \ref{Linflow}, and \ref{L4.3}, apply \textit{independently} of players' payoffs and incentives. They are portable to other repeated games with limited memories, repeated games where players observe a random sample of their opponents' past actions, and repeated games where players use finite automaton strategies.

\subsection{Reputation Result for Arbitrary Memory Length}\label{sub3.2}
My first result shows that for every $K \geq 1$ and $\pi_0$, player 1 can approximately secure his commitment payoff in \textit{all} Nash equilibria as $\delta \rightarrow 1$,
and that he can secure this payoff by playing $a^*$ in every period.

Formally,
let $b^*$ be player $2$'s lowest best reply to $a^*$. Following Fudenberg and Levine (1989), I call $u_1(a^*,b^*)$ player $1$'s \textit{commitment payoff}. Let $\underline{a} \equiv \min A$ be player 1's lowest action.
Let $\underline{b}$ be player $2$'s lowest best reply to $\underline{a}$.
For every $\pi_0 \in (0,1)$, there exists  $\underline{\delta}(\pi_0) \in (0,1)$ such that for every $\delta > \underline{\delta}(\pi_0)$, each of player 2's best reply to  mixed action
     $\big\{1-\frac{(1-\delta)(1-\pi_0)}{\pi_0}
     \big\} a^* +  \frac{(1-\delta)(1-\pi_0)}{\pi_0} \underline{a}$
is no less than $b^*$. Such $\underline{\delta}(\pi_0) \in (0,1)$ exists for every $\pi_0 \in (0,1)$ since $b^*$ is defined as the lowest best reply to $a^*$, the value of $\frac{(1-\delta)(1-\pi_0)}{\pi_0}$ converges to $0$ as $\delta \rightarrow 1$, and best reply correspondences are upper-hemi-continuous.
\begin{Theorem}\label{Theorem1}
Suppose $\delta > \underline{\delta} (\pi_0)$ and $K \geq 1$. Player $1$'s payoff in any Nash equilibrium is at least
\begin{equation}\label{commitmentpayoffbound}
(1-\delta^K) u_1(a^*,\underline{b})+\delta^K u_1(a^*,b^*).
\end{equation}
\end{Theorem}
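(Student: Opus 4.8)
The plan is to show that the strategic-type seller can guarantee the payoff bound in \eqref{commitmentpayoffbound} by \emph{deviating} to the commitment strategy---playing $a^*$ in every period---and then invoking the fact that his Nash equilibrium payoff must weakly exceed the payoff from any deviation. The heart of the argument is therefore to lower-bound the payoff that this particular deviation yields against \emph{any} fixed player-2 strategy $\sigma_2$ that could arise in equilibrium. First I would observe that if the strategic type plays $a^*$ forever, then his history of realized actions is indistinguishable (to every player $2_t$) from that of the commitment type, since both take $a^*$ in every period. Consequently, at every history along this deviation path, player $2_t$ observes a record in which $a^*$ was played in all of the last $\min\{t,K\}$ periods.

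The main work is to bound player $2_t$'s best reply at such a history. I would use Bayes' rule together with the \emph{no-back-loop property} alluded to in the introduction: because the strategic type's equilibrium strategy satisfies no-back-loop, there is at most one calendar date at which the seller has played $a^*$ in all of the last $K$ periods yet shirks in the current period. Combining this with the commitment type's prior $\pi_0$ and the exponential prior over calendar time, I would argue that, conditional on observing an all-$a^*$ record of length $K$, player $2_t$ assigns to the current action a distribution dominated (in the $\succ_A$ order) by the mixed action $\big\{1-\tfrac{(1-\delta)(1-\pi_0)}{\pi_0}\big\}a^* + \tfrac{(1-\delta)(1-\pi_0)}{\pi_0}\underline{a}$. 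By the definition of $\underline{\delta}(\pi_0)$ stated just before the theorem, any best reply to this mixed action is at least $b^*$; and by supermodularity (increasing differences of $u_2$) a best reply to a $\succ_A$-higher distribution is $\succ_B$-higher, so player $2_t$'s best reply is at least $b^*$ \emph{whenever} she has observed $K$ consecutive $a^*$'s. For the first $K$ periods, when fewer than $K$ actions have been recorded, I can only guarantee the crude lower bound $u_1(a^*,\underline{b})$, since player 2 might play her lowest best reply $\underline{b}$ to $\underline{a}$.

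Putting these stage bounds together via the geometric discounting yields the claimed bound: in periods $0,\dots,K-1$ the seller earns at least $u_1(a^*,\underline{b})$, weighted by total discounted mass $1-\delta^K$, and from period $K$ onward the record always shows $K$ consecutive $a^*$'s so player 2 best-responds with at least $b^*$, giving stage payoff at least $u_1(a^*,b^*)$ with discounted mass $\delta^K$. Since $u_1$ is decreasing in player 2's action and $b^* \succeq_B \underline b$ need not hold in general, I would need to confirm $u_1(a^*,\underline b)$ is the right (possibly smaller) term for the early periods; the structure of \eqref{commitmentpayoffbound} as a convex combination suggests exactly this accounting, with the worse payoff $u_1(a^*,\underline b)$ assigned to the early, transient periods.

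The step I expect to be the main obstacle is the Bayesian belief computation establishing that the posterior on the current action is dominated by the stated mixed action. This is where the no-back-loop property must be deployed carefully: I must show that summing over all calendar dates $t$ consistent with an all-$a^*$ record, the total prior-weighted probability that the \emph{strategic} type shirks after such a record is at most $\tfrac{(1-\delta)(1-\pi_0)}{\pi_0}$ relative to the commitment type's weight $\pi_0$. The no-back-loop property caps the number of such "shirk-after-clean-record" dates at one, and the exponential prior makes each individual date's weight of order $(1-\delta)$; the ratio $\tfrac{(1-\delta)(1-\pi_0)}{\pi_0}$ then emerges from dividing this single order-$(1-\delta)$ term by the commitment mass $\pi_0$. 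Making this comparison rigorous---and verifying that the \emph{worst} such posterior is still dominated by the stated mixed action in the $\succ_A$ order rather than merely having small total shirking probability---is the delicate part, since player 2's incentive depends on the full distribution over $A$, not just on the probability of a single deviation. I would rely on monotonicity ($u_1$ decreasing in $a$, increasing differences) to reduce the relevant comparison to the two extreme actions $a^*$ and $\underline a$, which is precisely the form in which $\underline\delta(\pi_0)$ was defined.
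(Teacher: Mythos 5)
Your proposal is correct and follows essentially the same route as the paper's own proof: deviate to playing $a^*$ forever, use the no-back-loop property to bound the total prior-weighted mass of ``shirk after a clean record'' events by $1-\pi_0$ per pure strategy in the support, divide by the commitment type's weight under the exponential prior to get the $\tfrac{(1-\delta)(1-\pi_0)}{\pi_0}$ bound, invoke the definition of $\underline{\delta}(\pi_0)$ plus increasing differences of $u_2$ to get a best reply of at least $b^*$ at clean histories and at least $\underline{b}$ everywhere, and assemble the convex combination. (Two trivial wording slips --- the posterior \emph{dominates} the benchmark mixed action rather than being dominated by it, and $u_1$ is \emph{increasing} in $b$ --- do not affect the argument, which you complete in the correct direction.)
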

The payoff lower bound (\ref{commitmentpayoffbound}) converges to $u_1(a^*,b^*)$ as $\delta \rightarrow 1$.  Therefore,
Theorem \ref{Theorem1} identifies a natural class of games such that regardless of the short-run players' memory length $K$,\footnote{When $K=+\infty$, one can use Fudenberg and Levine (1989)'s argument to show that player $1$ can secure payoff approximately $u_1(a^*,b^*)$ in every Nash equilibrium as $\delta \rightarrow 1$.
My proof focuses on the case where $K$ is finite, which requires new arguments.}
a sufficiently patient player can secure at least his commitment payoff $u_1(a^*,b^*)$ by building a reputation for playing $a^*$.
As will become clear later in the proof, at every history of every Nash equilibrium, if player $1$ deviates by playing $a^*$ in every subsequent period, then his continuation value  after $K$ periods is at least $u_1(a^*,b^*)$.
By contrast, in the repeated complete information game \textit{without} any commitment type, there are equilibria where player $1$ plays $\underline{a}$ and player $2$ plays $\underline{b}$ in every period and player $1$ receives his minmax payoff $u_1(\underline{a},\underline{b})$.

To the best of my knowledge, Theorem \ref{Theorem1} is the first reputation result that allows the short-run players to have arbitrary memory length, and in particular, they may have \textit{arbitrarily short memories}.
This aspect of my result stands in contrast to existing reputation results which require the short-run players to have
\textit{infinite memories} (e.g., Fudenberg and Levine 1989), or \textit{long enough memories}
(e.g., Theorem 2 in Liu and Skrzypacz 2014),\footnote{Theorem 2 in Liu and Skrzypacz (2014) shows that for every $\pi_0>0$, there exists  $\widehat{K} \in \mathbb{N}$ such that a patient player 1 can approximately secure his commitment payoff in every equilibrium when $K>\widehat{K}$, i.e., $K$ needs to be large enough.} or \textit{infinite memories} about some noisy signal that can statistically identify the patient player's action
 (e.g., Fudenberg and Levine 1992, Gossner 2011, Theorem 2 in Pei 2022).

%\footnote{The identification conditions in Fudenberg and Levine (1992) and Gossner (2011) are violated in my model as long as $K \geq 2$. For example, when $K=2$, player $2_t$ cannot tell the difference between $(a_{t-2},a_{t-1})=(a',a^*)$ and $(a_{t-2},a_{t-1})=(a^*,a')$. The lack-of-identification problem in my model is conceptually different from the one in Ely and V\"{a}lim\"{a}ki (2003) where the monitoring technology has a product structure, i.e., each signal is affected by \textit{exactly one} action of the patient player.}

Since the short-run players have limited memories and cannot observe everything their predecessors observe, their belief is not a martingale process and the standard techniques in
Fudenberg and Levine (1989, 1992), Sorin (1999), and Gossner (2011) do not apply.
To overcome these challenges, my proof uses an observation called the \textit{no-back-loop property}, that
it is \textit{never} optimal for the patient player to milk his reputation when it is strictly positive and later restore his reputation.
I state this observation as a lemma, provide a heuristic explanation, and use this result to show Theorem \ref{Theorem1} by the end of this section.

Let
 $\mathcal{H}_1^* \equiv \big\{(a_s,b_s)_{s=0}^{t-1} \textrm{ } s.t. \textrm{ } t \geq K \textrm{ and } (a_{t-K},...,a_{t-1})=(a^*,...,a^*)\big\}$
be the set of player $1$'s histories where all of his last $K$ actions were $a^*$.
Let $\mathcal{H}_1(\sigma_1,\sigma_2)$ be the set of histories that occur with positive probability under $(\sigma_1,\sigma_2)$. Let $U_1(\sigma_1,\sigma_2)$ be player 1's discounted average payoff under $(\sigma_1,\sigma_2)$. Strategy $\widehat{\sigma}_1$ best replies to $\sigma_2$ if $\widehat{\sigma}_1 \in \arg\max_{\sigma_1 \in \Sigma_1} U_1(\sigma_1,\sigma_2)$, i.e., $\widehat{\sigma}_1$ maximizes player $1$'s payoff against $\sigma_2$.
\begin{Lemma1}
For every $\sigma_2: \mathcal{H}_2 \rightarrow \Delta (B)$ and pure strategy $\widehat{\sigma}_1: \mathcal{H}_1 \rightarrow A$ that best replies to $\sigma_2$, there exists \textit{no} $h^t \in \mathcal{H}_1(\widehat{\sigma}_1,\sigma_2) \bigcap \mathcal{H}_1^*$
such that when player $1$ uses strategy $\widehat{\sigma}_1$, he plays an action that is not $a^*$ at $h^t$ and reaches another
history that belongs to $\mathcal{H}_1(\widehat{\sigma}_1,\sigma_2) \bigcap \mathcal{H}_1^*$
in the future.
\end{Lemma1}

My \textit{no-back-loop lemma} rules out situations depicted in the left panel of Figure 1: Player $1$'s best reply $\widehat{\sigma}_1$ asks him to play $a' (\neq a^*)$ when his last $K$ actions were $a^*$ (the green circle) and then returns to a history where his last $K$ actions were $a^*$. That is to say, as soon as player $1$ milks his reputation at a history where his reputation is strictly positive, he will never return to any history where he has a positive reputation. This property applies to \textit{all} of player 1's best replies in the repeated game, not just to his equilibrium strategies. It also does not require player $1$ to be patient by allowing him to have any arbitrary discount factor $\delta \in (0,1)$.

The \textit{reputation cycles} ruled out by my lemma occur in all stationary equilibria in Liu and Skrzypacz (2014). This is driven by two modeling differences. First, they assume that $u_1$ has \textit{strictly decreasing differences} while I assume that $u_1$ has strictly increasing differences. Second, they assume that player $2$ can \textit{perfectly} observe the exact sequence of player 1's last $K$ actions while I assume that player 2 do not know the exact sequence of player 1's last $K$ actions. In Section \ref{sec5}, I study two alternative models that are only one-step-away from both my baseline model and the one in  Liu and Skrzypacz (2014).
\begin{figure}\label{Figure1}
\begin{center}
\begin{tikzpicture}[scale=0.33]
\node at (-8.3,-4) {$(a'',a^*,...,a^*)$};
\node at (8.3,-4) {$(a^*,...,a^*,a')$};
\node at (0,1.5) {$(a^*,...,a^*)$};
\node at (20,1.5) {$(a^*,...,a^*)$};
\node at (35,1.5) {$(a^*,...,a^*)$};
\node at (31,-2.5) {$(a'',a^*...,a^*)$};
\node at (39,-2.5) {$(a^*,...,a^*,a'')$};
\draw [fill=green, thick] (0,0) circle [radius=1];
\draw [fill=cyan, thick] (4,-4) circle [radius=1];
\draw [fill=yellow, thick] (4,-10) circle [radius=1];
\draw [fill=yellow, thick] (-4,-10) circle [radius=1];
\draw [fill=white, thick] (-4,-4) circle [radius=1];
\draw [ultra thick, ->] (0,0)--(1.5,-1.5)node[right]{$a'$}--(4,-4);
\draw [blue, ultra thick, ->] (4,-4)--(4,-10);
\draw [blue, ultra thick, ->, dotted] (4,-10)--(-4,-10);
\draw [blue, ultra thick, ->] (-4,-10)--(-4,-4);
\draw [ultra thick, ->] (-4,-4)--(-2,-2)node[left]{$a^*$}--(0,0);
\node [below] at (0,-12) {Player 1 uses strategy $\widehat{\sigma}_1$};
\node [below] at (0,-13.5) {that violates no-back-loop};
\node [below] at (20,-12) {Player 1 uses Deviation A};
\draw [fill=green, thick] (20,0) circle [radius=1];
\draw [fill=cyan, thick] (24,-4) circle [radius=1];
\draw [fill=yellow, thick] (24,-10) circle [radius=1];
\draw [fill=yellow, thick] (16,-10) circle [radius=1];
\draw [fill=white, thick] (16,-4) circle [radius=1];
\draw [blue, ultra thick, ->] (24,-4)--(24,-10);
\draw [blue, ultra thick, ->, dotted] (24,-10)--(16,-10);
\draw [blue, ultra thick, ->] (16,-10)--(16,-4);
\draw [ultra thick, ->] (16,-4)--(20,-4)node[above]{$a'$}--(24,-4);
\draw [fill=green, thick] (35,0) circle [radius=1];
\draw [fill=pink, thick] (39,-4) circle [radius=1];
\draw [fill=pink, thick] (39,-10) circle [radius=1];
\draw [fill=pink, thick] (31,-10) circle [radius=1];
\draw [fill=white, thick] (31,-4) circle [radius=1];
\draw [red, ultra thick, ->] (39,-4)--(39,-7)node[right]{$a^*$}--(39,-10);
\draw [red, ultra thick, ->, dotted] (39,-10)--(35,-10)node[below]{$a^*$}--(31,-10);
\draw [red, ultra thick, ->] (31,-10)--(31,-7)node[right]{$a^*$}--(31,-4);
\draw [ultra thick, ->] (31,-4)--(35,-4)node[below]{$a''$}--(39,-4);
\node [below] at (35,-12) {Player 1 uses Deviation B};
\end{tikzpicture}
\caption{The green circle represents a history that belongs to $\mathcal{H}_1^*$.
The white circle represents a history where player 1 is one-period-away from $\mathcal{H}_1^*$. The blue circle represents a history
that is reached after player 1 plays $a'$ at the green circle. The yellow circles represent histories that are reached when player 1 plays $\widehat{\sigma}_1$. The pink circles represent histories where
$a''$ occurred once and $a^*$ occurred $K-1$ times.}
\end{center}
\end{figure}
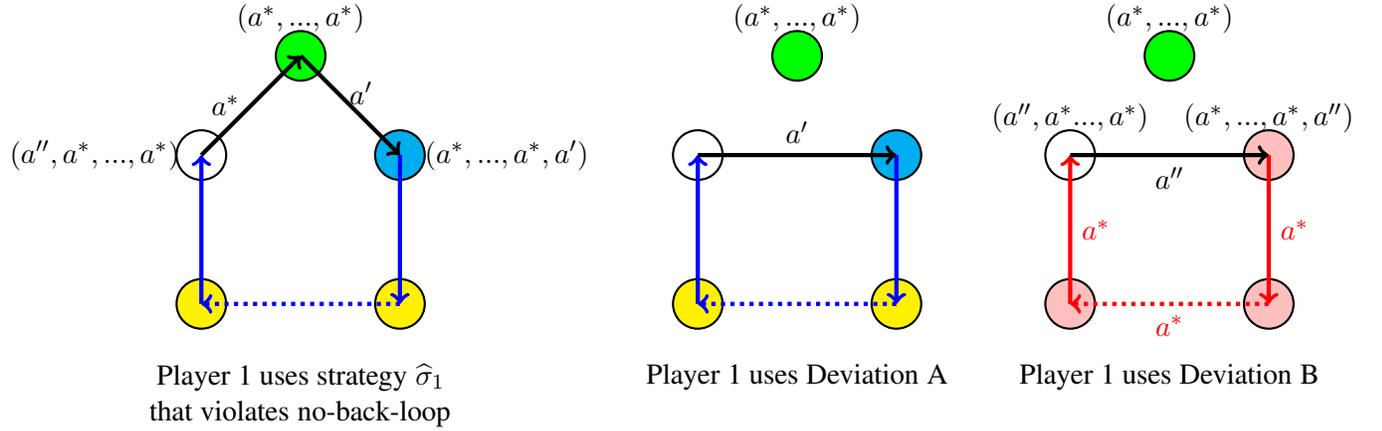

The proof of my lemma does \textit{not} follow from existing results on supermodular games, most of which focus on static games. This is because even when player 1's stage-game payoff $u_1(a,b)$ has strictly increasing differences, it is not necessarily the case that when the game is played \textit{repeatedly}, player 1 has a stronger incentive to play higher actions at histories where player 2's actions are higher. This is because player 1's current-period action affects future player 2's observations, which in turn affects future player 2's actions as well as player 1's continuation value.
I sketch a proof below. The detailed calculations are in Appendix \ref{secA}.

\begin{proof}[Proof Sketch:] Since player $2$ has no information about player $1$'s action more than $K$ periods ago, it is without loss to focus on player $1$'s pure-strategy best replies that depend only on his last $K$ actions, including the order of these $K$ actions.
Suppose by way of contradiction that there exists $\widehat{\sigma}_1$ that best replies to $\sigma_2$ such that  $\widehat{\sigma}_1$
plays $a'$ $(\neq a^*)$ at a history $h^t$ where $(a_{t-K},...,a_{t-1})=(a^*,...,a^*)$, and after a finite number of periods, reaches a history
$h^s$ that satisfies $(a_{s-K},...,a_{s-1})=(a'',a^*,...,a^*)$ where $a'' \neq a^*$, and then
plays $a^*$ at $h^s$ after which all of the last $K$ actions are $a^*$ again. Note that $a'$ and $a''$ can be the same.

I depict strategy $\widehat{\sigma}_1$ in the left panel of Figure 1, where
$h^t$ is represented by the green circle and
$h^s$ is represented by the white circle.
In what follows,
I propose \textit{two deviations} for player 1 starting from the white circle. I will show that \textit{at least one of them is strictly profitable}.
\begin{itemize}
  \item \textbf{Deviation A:} Plays $a'$ at the white circle, and then follows strategy $\widehat{\sigma}_1$.
  \item \textbf{Deviation B:} Plays $a''$ at the white circle, then plays $a^*$ for $K-1$ consecutive periods after which play will reach the white circle again, and then follows strategy $\widehat{\sigma}_1$.
\end{itemize}
These deviations are depicted in the middle and right panels of Figure 1.
I compare player 1's continuation value at the white circle
when he uses $\widehat{\sigma}_1$ to those under the two deviations:
\begin{enumerate}
  \item Compared to  $\widehat{\sigma}_1$, Deviation A takes a lower-cost action $a'$ at the white circle, skips the green circle, and frontloads the payoffs along the blue lines (i.e., the blue, yellow, and white circles). If player 1 prefers  $\widehat{\sigma}_1$ to Deviation A, then his average payoff from the circles along the blue lines (i.e., the payoff that Deviation A frontloads) must be strictly lower than his stage-game payoff at the green circle.
  \item Compared to $\widehat{\sigma}_1$, Deviation B
takes a lower-cost action $a''$ at the white circle, skips the green circle, and induces payoffs along the red lines in the next $K-1$ periods (i.e., the pink circles).
If player 1 prefers $\widehat{\sigma}_1$ to Deviation B, then his average payoff along the red lines must be strictly smaller than a convex combination of his payoff at the green circle and his average payoff along the blue lines.
\end{enumerate}
If $\widehat{\sigma}_1$ is player 1's best reply, then both Deviation A and Deviation B are unprofitable. Therefore, player 1's stage-game payoff at the green circle must be strictly greater than his average payoff along the red lines.

If $\widehat{\sigma}_1$ best replies to $\sigma_2$, then player $1$ prefers $a'$ to $a^*$ at the green circle and prefers $a^*$ to $a'$ at the white circle. No matter whether player 1 is currently at the green or the white circle, he will reach the green circle after playing $a^*$ and will reach the blue circle after playing $a'$. Hence,
the difference in player 1's incentives at the green and the white circles \textit{cannot} be driven by his continuation value. This implies that such a difference in incentives can only be driven by player 1's stage-game payoff, which is affected by player 2's actions at the green and the white circles.
Since $u_1(a,b)$ has strictly increasing differences and $a^* \succ_A a'$, the Topkis Theorem implies that it \textit{cannot} be the case that player 2's mixed action at the green circle strictly FOSDs her mixed action at the white circle. Therefore, player 2's action at the white circle weakly FOSDs her action at the green circle. Since player 2 cannot observe the order of player 1's last $K$ actions,
player 2's action at every circle along the red line coincides with her action at the white circle.

This leads to a contradiction since on the one hand, player $1$'s stage-game payoff at the green circle is strictly greater than his average payoff along the red lines, and on the other hand, player 2's action at the white circle weakly FOSDs her action at the green circle and player 1's stage-game payoff is strictly increasing in player 2's action. This contradiction implies that at the white circle,
 either Deviation A or Deviation B yields a strictly higher payoff for player 1 compared to strategy $\widehat{\sigma}_1$, so $\widehat{\sigma}_1$ is not a best reply.
 \end{proof}

I use the no-back-loop lemma to show Theorem \ref{Theorem1}.
\begin{proof}[Proof of Theorem 1:]
For every $t \in \mathbb{N}$, let $E_t$ be the event that \textit{player 1 is strategic and no action other than $a^*$ was played from period $\max\{0,t-K\}$ to period $t-1$}.
Fix any $\sigma_1 \in \Sigma_1^*$ and $\sigma_2$, let $p_t(\sigma_1,\sigma_2)$ be the \textit{ex ante} probability of event $E_t$ when the strategic-type player 1 plays $\sigma_1$ and player 2 plays $\sigma_2$. Since player 1 is the commitment type with probability $\pi_0$, we have $p_t (\sigma_1,\sigma_2) \leq 1-\pi_0$ for every  $t \in \mathbb{N}$.
Let $\mathbb{N}^* (\sigma_1,\sigma_2) \subset \mathbb{N}$ be the set of calendar time $t$ such that $p_t (\sigma_1,\sigma_2)>0$ and $t \geq K$.
For every $t \in \mathbb{N}^* (\sigma_1,\sigma_2)$, let $q_t (\sigma_1,\sigma_2)$ be the probability that player 1 \textit{does not} play $a^*$ in period $t$ conditional on event $E_t$.

Let $\Sigma_1^*$ be the set of player 1's pure strategies that satisfy the no-back-loop property.
For every $\sigma_1 \in \Sigma_1^*$, the definition of the no-back-loop property implies that $\sum_{t \in \mathbb{N}^*(\sigma_1,\sigma_2)} p_t (\sigma_1,\sigma_2) q_t (\sigma_1,\sigma_2)
\leq 1-\pi_0$.

Fix any arbitrary Nash equilibrium $(\widetilde{\sigma}_1,\sigma_2)$, the no-back-loop lemma implies that $\widetilde{\sigma}_1 \in \Delta (\Sigma_1^*)$.
For every pure strategy $\sigma_1 \in \Sigma_1^*$,
let $\widetilde{\sigma}_1 (\sigma_1)$ be the probability with which mixed strategy $\widetilde{\sigma}_1$ assigns to $\sigma_1$, which is well-defined since $\Sigma_1^*$ is a countable set.
Recall that player 2's prior belief assigns probability $(1-\delta)\delta^t$ to the calendar time being $t$. According to Bayes rule, at any history after period $K$ where all of player 1's last $K$ actions were $a^*$, player 2 believes that player 1's action is \textit{not} $a^*$ with probability
\begin{equation}\label{prob}
    \frac{ \sum_{\sigma_1 \in \Sigma_1^*} \widetilde{\sigma}_1(\sigma_1) \sum_{t \in \mathbb{N}^*(\sigma_1,\sigma_2)} (1-\delta) \delta^t p_t (\sigma_1,\sigma_2) q_t (\sigma_1,\sigma_2)}{\pi_0\sum_{t=K}^{+\infty}(1-\delta)\delta^t+\sum_{\sigma_1 \in \Sigma_1^*} \widetilde{\sigma}_1(\sigma_1)
    \sum_{t \in \mathbb{N}^*(\sigma_1,\sigma_2)} (1-\delta) \delta^t p_t(\sigma_1,\sigma_2)}.
\end{equation}
The denominator of (\ref{prob}) is at least $\pi_0 \delta^K$. Since $\sum_{t \in \mathbb{N}^*(\sigma_1,\sigma_2)} p_t (\sigma_1,\sigma_2) q_t (\sigma_1,\sigma_2) \leq 1-\pi_0$ for every $\sigma_1 \in \Sigma_1^*$, $\Sigma_1^*$ is a countable set,
and $t \geq K$ for every $t \in \mathbb{N}^*(\sigma_1,\sigma_2)$,
the numerator of (\ref{prob}) is no more than $(1-\delta)(1-\pi_0)\delta^K$. This suggests that (\ref{prob}) is no more than $\frac{(1-\delta)(1-\pi_0)}{\pi_0}$.
The definition of $\underline{\delta}(\pi_0)$ together with $u_2(a,b)$ having strictly increasing differences implies that when $\delta > \underline{\delta} (\pi_0)$,
actions strictly lower than $b^*$ are \textit{not} optimal for player $2$ when
 player $1$'s last $K$ actions were $a^*$. Moreover, since $\underline{b}$ is player $2$'s lowest best reply to player $1$'s lowest action $\underline{a}$, actions strictly lower than $\underline{b}$ is never optimal for player $2$.
 Hence, in any Nash equilibrium, if player 1 plays $a^*$ in every period, his discounted average payoff is at least $(1-\delta^K) u_1(a^*,\underline{b})+\delta^K u_1(a^*,b^*)$. This is a lower bound for player 1's equilibrium payoff.
\end{proof}

\subsection{Equilibrium Behavior \& Consumer Welfare}\label{sub3.3}
Although Theorem \ref{Theorem1} shows that player $1$ can secure his commitment payoff $u_1(a^*,b^*)$ \textit{if} he plays $a^*$ in every period, it does not imply that he will  play $a^*$ in equilibrium. This is because other strategies may give player $1$ weakly higher payoffs.
For example, Li and Pei (2021) show that in Fudenberg and Levine (1989)'s reputation model, although player $1$ can secure his commitment payoff by playing $a^*$ in every period, there are many equilibria where he plays $a^*$ with low frequency. Understanding whether player $1$ will actually play $a^*$ is important since his action frequencies affect consumer welfare.

My next set of results examine the patient player's equilibrium behavior and the short-run players' welfare.
Since the game continues with probability $\delta$ after each period, the sum of the short-run players' payoffs is $\mathbb{E}^{\sigma}\Big[
\sum_{t=0}^{+\infty} \delta^t u_2(a_t,b_t) \Big]$, where
$\mathbb{E}^{\sigma}[\cdot]$ denotes the expectation induced by $\sigma \equiv (\sigma_1,\sigma_2)$. Let
\begin{equation}\label{4.2}
U_2^{\sigma} \equiv \mathbb{E}^{\sigma}\Big[
   \sum_{t=0}^{+\infty} (1-\delta)\delta^t u_2(a_t,b_t)
\Big]
\end{equation}
be the \textit{normalized sum} of player 2's payoffs, which I use to measure consumer welfare. By definition, $U_2^{\sigma}=\sum_{(a,b) \in A \times B} F^{\sigma}(a,b) u_2(a,b)$ where
\begin{equation}\label{4.1}
   F^{\sigma}(a,b) \equiv \mathbb{E}^{\sigma}\Big[
   \sum_{t=0}^{+\infty} (1-\delta)\delta^t \mathbf{1}\{a_t=a,b_t=b\}
  \Big]
\end{equation}
is the \textit{discounted frequency} (or the \textit{occupation measure}) of action profile $(a,b)$.
Hence, the sum of the short-run players' payoffs depends on $(\sigma_1,\sigma_2)$ only through the discounted frequencies $\{F^{\sigma}(a,b)\}_{(a,b) \in A \times B}$.
\begin{Theorem}\label{Theorem2}
There exists a cutoff $\overline{K} \in \mathbb{N}$ that depends only on $(u_1,u_2)$ such that:
\begin{enumerate}
\item There exists a constant $C \in \mathbb{R}_+$ that is independent of $\delta$ such that
for every $1 \leq K < \overline{K}$, we have
$\sum_{b \succeq b^*} F^{\sigma}(a^*,b) \geq 1-(1-\delta)C$ in every Nash equilibrium $\sigma$ under $K$ and $\delta$.
\item There exists $\eta>0$ such that for every $K \geq \overline{K}$,  there exists $\underline{\delta} \in (0,1)$ such that for every $\delta > \underline{\delta}$, there exists a Perfect Bayesian equilibrium (PBE) such that $\sum_{b \in B} F^{\sigma}(a^*,b) \leq 1-\eta$.\footnote{Theorem \ref{Theorem1} and Statement 1 of Theorem \ref{Theorem2} study the common properties of all equilibria, which are stronger when I use weaker solution concepts such as Nash equilibrium.
Statement 2 of Theorem \ref{Theorem2} is about the existence of a type of equilibria, which is stronger under stronger solution concepts such as PBE. In the proof of Statement 2 of Theorem \ref{Theorem2} and other constructive proofs, I construct PBEs that satisfy sequential rationality and \textit{no signaling what you don't know} in Fudenberg and Tirole (1991).}
\end{enumerate}
\end{Theorem}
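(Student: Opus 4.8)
The plan is to pin down the cutoff $\overline{K}$ through player $2$'s best-reply threshold and then handle the two statements separately, with Statement $1$ the substantive direction and Statement $2$ a construction. Let $p^\dagger$ be the largest probability $p$ such that player $2$'s lowest best reply to the mixed action $(1-p)a^*+p\,\underline{a}$ is at least $b^*$; since $u_2$ has strictly increasing differences this best reply is monotone in $p$, so $\{p:\mathrm{BR}\succeq b^*\}$ is an interval and $p^\dagger$ is well defined. I would take $\overline{K}\equiv\lceil 1/p^\dagger\rceil$, adjusted by an additive constant determined by $(u_1,u_2)$ if needed, so that $K\geq\overline{K}$ is exactly the regime in which a single low action inside a window of length $K$ leaves player $2$ willing to trust.

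For Statement $2$, I would construct for each $K\geq\overline{K}$ a periodic PBE in which the strategic type plays $\underline{a}$ on $n\equiv\lceil p^\dagger K\rceil-1$ evenly spaced periods per cycle of length $K$ and plays $a^*$ otherwise, while player $2$ plays $b^*$ when she counts at most $n$ low actions and $\underline{b}$ otherwise, with off-path beliefs after more than $n$ low actions chosen to make distrust sequentially rational (admissible since such counts reveal the strategic type). On path player $2$ always counts exactly $n$ low actions, and by the geometric calendar-time prior her posterior on a current low action tends to $n/K\leq p^\dagger$ as $\delta\to1$, so trust is optimal. The strategic type strictly prefers this path to always playing $a^*$ because he collects the cost saving $u_1(\underline{a},b^*)-u_1(a^*,b^*)>0$ on $n$ periods per cycle while remaining trusted; an extra deviation raises the count above $n$ for $K$ periods, and for $\delta$ close to $1$ this $K$-period punishment outweighs the one-period gain. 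Since the low-action frequency $n/K$ is bounded away from $0$ uniformly over $K\geq\overline{K}$, setting $\eta\equiv\inf_{K\geq\overline{K}}n/K>0$ delivers $\sum_b F^\sigma(a^*,b)\leq 1-\eta$.

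For Statement $1$, the engine is the No-Back-Loop Lemma together with the payoff-free occupation-measure accounting of Lemmas~\ref{L4.1}, \ref{Linflow}, and \ref{L4.3}. No-back-loop implies that along every pure best reply in the support the periods spent in $\mathcal{H}_1^*$ form an interval: the seller enters at most once and, once he has left, never again strings together $K$ consecutive $a^*$, so he shirks at least once in every window of a never-returning continuation. By the tower property the discounted average of player $2$'s posterior on a current low action equals the discounted low-action frequency, which is therefore at least $1/K$ on any such continuation; as $1/K>p^\dagger$ when $K<\overline{K}$, player $2$ must distrust on a frequency of histories bounded away from $0$, and I would use this to bound the strategic type's value from \emph{any continuation that never reaches $\mathcal{H}_1^*$} by $g^*-\varepsilon$, where $g^*\equiv u_1(a^*,b^*)$ and $\varepsilon>0$ is independent of $\delta$. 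Comparing with the value of playing $a^*$ forever, which secures $g^*$ by Theorem~\ref{Theorem1} applied at every history, the incentive constraint at any candidate leaving history forces the post-leaving value to be at least $g^*-(1-\delta)\,u_1(\underline{a},b^*)$, contradicting the bound for $\delta$ close to $1$; the same comparison rules out delaying entry, since delay requires periodic shirking and is itself a never-returning-style continuation. Hence in every equilibrium the strategic type plays $a^*$ at all but negligibly weighted histories, so the only non-$(a^*,\succeq b^*)$ periods are the initial window $\{0,\dots,K-1\}$ of discounted frequency $1-\delta^{K}\leq K(1-\delta)\leq\overline{K}(1-\delta)$, which yields the claimed constant $C$.

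The main obstacle is precisely this value bound for never-returning continuations. The delicate point is that the distrust histories are exactly the histories at which the seller wishes to shirk, so a naive estimate that charges him only the low payoff $u_1(\underline{a},\underline{b})$ on distrust histories while allowing $u_1(\underline{a},b^*)>g^*$ on trust histories need not drop below $g^*$. The resolution must exploit the equilibrium tie between beliefs and behavior: at a trust history the seller's own low-action probability is at most $p^\dagger$ by Bayes' rule, so he earns exactly $g^*$ there on a $\geq 1-p^\dagger$ fraction of play, and the calibration of $\overline{K}$ to $(u_1,u_2)$ must ensure that the forced distrust frequency $\tfrac{1/K-p^\dagger}{1-p^\dagger}$ is large enough to pull the average strictly below $g^*$. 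Turning this into a quantitative bound that is uniform in $\delta$, and meshing it with the flow lemmas to obtain the sharp $(1-\delta)$ rate rather than a mere limiting statement, is where the real work lies.
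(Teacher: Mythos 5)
There are two genuine gaps. First, your cutoff is miscalibrated. You define $p^\dagger$ via mixtures of $a^*$ with $\underline{a}$, but the paper's $\overline{K}$ is the smallest $\widehat{K}$ such that $b^*$ best replies to $\frac{\widehat{K}-1}{\widehat{K}}a^*+\frac{1}{\widehat{K}}a'$ for \emph{some} $a'\neq a^*$. Because $u_2$ has strictly increasing differences, the binding choice is the \emph{highest} non-$a^*$ action, not $\underline{a}$, so for $|A|\geq 3$ your $\lceil 1/p^\dagger\rceil$ is strictly larger than the true cutoff. For $K$ strictly between the two cutoffs the paper's Case 1 construction (which mixes $a^*$ with a non-bottom action $a'$) produces a PBE with $a^*$-frequency bounded below $1$, so your Statement 1 claim fails on that range; the vague ``adjusted by an additive constant if needed'' does not repair this, and you also omit the degenerate case where (\ref{optimalcommitment}) fails, in which $\overline{K}=1$ and the bad equilibrium has the strategic type playing $a'$ in \emph{every} period rather than cycling.

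Second, and more importantly, the step you yourself flag as ``where the real work lies'' is exactly the step the paper's proof is built to avoid, and your proposed route does not close it. Averaging stage payoffs over trust and distrust histories cannot push the continuation value below $u_1(a^*,b^*)$ without further recalibrating $\overline{K}$ (as you concede), which would again decouple the cutoffs in Statements 1 and 2; moreover your premise that ``the distrust histories are exactly the histories at which the seller wishes to shirk'' is false in this model --- the seller may play $a^*$ with high probability at states where player 2 distrusts, which is precisely why a belief-to-behavior averaging argument is delicate here. The paper's argument (Lemma \ref{L4.4}) is structurally different: it combines the flow-balance Lemmas \ref{L4.1}--\ref{L4.3} with a \emph{local} deviation. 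If some class $S_k$ of states carries occupation measure of order larger than $1-\delta$, then some sub-class $S_{j,k}$ where player 2 refuses to play $b^*$ must receive non-negligible inflow from $S_{k+1}$, meaning player 1 deliberately plays the costly action $a^*$ at a state in $S_{k+1}$ in order to reach a distrust state; but he could reach the same continuation state within $m+1\leq K$ periods by playing a cheaper action first, a strictly profitable deviation. No global bound on the value of ``never-returning continuations'' is ever needed, and the $(1-\delta)$ rate falls out of Lemma \ref{Linflow} rather than from a limiting payoff comparison. Your sketch of Statement 2 is in the right spirit (periodic shirking sustained by a posterior near $n/K$ under the geometric calendar-time prior), but the off-path part is underspecified: the paper must use a calibrated mixture between $b^*$ and a lower action at zero-probability histories to make the strategic type willing to return to the cycle, not a blanket switch to $\underline{b}$.
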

Theorem \ref{Theorem2} implies that (i) player 1 plays $a^*$ and player 2's action is at least $b^*$ in almost all periods in all Nash equilibria when player 2's memory $K$ is lower than some cutoff, and (ii) there are PBEs where player 1 plays $a^*$ with frequency bounded below $1$ when $K$ is above the cutoff. Note that the presence of the commitment type is necessary for this result, since in a repeated complete information game without commitment type, there exist equilibria where players play $(\underline{a},\underline{b})$ in every period regardless of $\delta$ and $K$.

Under an additional mild assumption on the short-run players' payoff, that $u_2(a,b)$ is strictly increasing in $a$, interpreted as the consumers' payoff increases in the seller's effort, $u_2(a^*,b^*)$ is the short-run players' \textit{highest feasible payoff}.
This is because $u_2(a^*,b) \geq u_2(a,b)$ for every $(a,b) \in A \times B$ and $u_2(a^*,b^*) \geq u_2(a^*,b)$ for every $b \in B$.
Theorem \ref{Theorem3} is a direct implication of Theorem \ref{Theorem2}, which provides a necessary and sufficient condition under which the short-run players attain their highest feasible payoff in all equilibria.
\begin{Theorem}\label{Theorem3}
Suppose $(u_1,u_2)$ satisfies Assumption \ref{Ass1} and $u_2(a,b)$ is strictly increasing in $a$.
\begin{enumerate}
  \item There exists a constant $C_0 \in \mathbb{R}_+$ that is independent of $\delta$ such that
for every $1 \leq K < \overline{K}$, we have $U_2^{\sigma} \geq u_2(a^*,b^*)-C_0 (1-\delta)$ in every Nash equilibrium $\sigma$ under $K$ and $\delta$.
%\item Suppose $1 \leq K< \overline{K}$. For every $\varepsilon>0$, there exists $\underline{\delta} \in (0,1)$ such that for every $\delta > \underline{\delta}$, and in every  Nash equilibrium $\sigma$ under $\delta$, player 2's welfare is no less than $u_2(a^*,b^*)-\varepsilon$.
  \item There exists $\xi>0$ such that
  for every $K \geq \overline{K}$, there exists $\underline{\delta} \in (0,1)$ such that for every $\delta > \underline{\delta}$, there exists a PBE with strategy profile $\sigma$ such that $U_2^{\sigma} < u_2(a^*,b^*)-\xi$.
\end{enumerate}
\end{Theorem}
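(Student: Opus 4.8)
The plan is to derive Theorem~\ref{Theorem3} directly from Theorem~\ref{Theorem2} by translating the discounted-frequency bounds on player~$1$'s play into welfare bounds, with the added hypothesis that $u_2(a,b)$ is strictly increasing in $a$ serving only to pin down $u_2(a^*,b^*)$ as the short-run players' highest feasible payoff. I would first record the two constants I need: $M\equiv u_2(a^*,b^*)=\max_{(a,b)\in A\times B}u_2(a,b)$ and $m\equiv\min_{(a,b)\in A\times B}u_2(a,b)$, both finite and depending only on $u_2$. Here $M$ is the maximum because $u_2(a^*,b)\ge u_2(a,b)$ for all $(a,b)$ (strict monotonicity in $a$) and $u_2(a^*,b^*)\ge u_2(a^*,b)$ for all $b$ ($b^*$ best-replies to $a^*$). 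I would also use $U_2^{\sigma}=\sum_{(a,b)}F^{\sigma}(a,b)u_2(a,b)$ together with $\sum_{(a,b)}F^{\sigma}(a,b)=1$.

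For Statement~1, I would fix $1\le K<\overline{K}$ and a Nash equilibrium $\sigma$. Since $|B|=2$ and $b^*=\max B$ in the baseline, the event $\{b\succeq b^*\}$ coincides with $\{b=b^*\}$, so Statement~1 of Theorem~\ref{Theorem2} reads $F^{\sigma}(a^*,b^*)\ge 1-(1-\delta)C$. Writing $r\equiv 1-F^{\sigma}(a^*,b^*)\le(1-\delta)C$ for the total frequency on all other profiles and bounding each of those payoffs below by $m$ gives
\begin{equation*}
U_2^{\sigma}\ge (1-r)M+r\,m=M-r(M-m)\ge M-(1-\delta)C(M-m).
\end{equation*}
Setting $C_0\equiv C(M-m)$, which is independent of $\delta$ and uniform across $K<\overline{K}$ because $C$ is, yields $U_2^{\sigma}\ge u_2(a^*,b^*)-C_0(1-\delta)$.

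For Statement~2, I would fix $K\ge\overline{K}$ and take $\delta$ large enough that the PBE $\sigma$ from Statement~2 of Theorem~\ref{Theorem2} exists with $\sum_{b\in B}F^{\sigma}(a^*,b)\le 1-\eta$; equivalently, profiles with $a\neq a^*$ carry total frequency at least $\eta$. I would then define the payoff gap $\gamma\equiv\min_{a\neq a^*,\,b\in B}\{M-u_2(a,b)\}$, which is strictly positive because $u_2$ is strictly increasing in $a$ and $a^*=\max A$, so $u_2(a,b)<u_2(a^*,b)\le M$ whenever $a\prec_A a^*$. Bounding the payoff on $\{a=a^*\}$ above by $M$ and on $\{a\neq a^*\}$ above by $M-\gamma$ gives
\begin{equation*}
U_2^{\sigma}\le (1-\eta)M+\eta(M-\gamma)=M-\eta\gamma,
\end{equation*}
so any $\xi\in(0,\eta\gamma)$ delivers $U_2^{\sigma}<u_2(a^*,b^*)-\xi$; such a $\xi$ is uniform over $K\ge\overline{K}$ since $\eta$ is uniform by Theorem~\ref{Theorem2} and $\gamma$ depends only on $u_2$.

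Because this is essentially a bookkeeping reduction, I do not expect a genuine obstacle once Theorem~\ref{Theorem2} is available. The only points requiring care are (i) confirming that $C_0$ and $\xi$ can be chosen independently of $\delta$ and uniformly over the relevant range of $K$, which inherits directly from the uniformity of $C$ and $\eta$ in Theorem~\ref{Theorem2}, and (ii) verifying $\gamma>0$, which is precisely where the extra hypothesis that $u_2$ is strictly increasing in $a$ is used.
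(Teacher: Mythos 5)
Your proposal is correct and follows essentially the same route as the paper, which presents Theorem \ref{Theorem3} as a direct implication of Theorem \ref{Theorem2} after observing that the hypothesis that $u_2(a,b)$ is strictly increasing in $a$ makes $u_2(a^*,b^*)$ the short-run players' highest feasible payoff; your conversion of the frequency bounds into welfare bounds via the constants $M-m$ and $\gamma$ is exactly the bookkeeping the paper leaves implicit.
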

Theorem \ref{Theorem3} implies that longer memories of the consumers (captured by a larger $K$) may \textit{lower} consumer welfare. In particular, the consumers obtain their first-best welfare in all equilibria when $K$ is below the cutoff $\overline{K}$ but their payoffs are bounded below first best in some equilibria when $K$ is above the cutoff.

In the case where $K$ is below  $\overline{K}$, Theorems \ref{Theorem2} and \ref{Theorem3} lead to sharp predictions not only on the patient player's equilibrium payoff, but also on players' equilibrium behaviors and on the short-run players' welfare. This aspect of my result stands in contrast to most of the existing reputation results, such as those in Fudenberg and Levine (1989), that focus exclusively on the patient player's payoff but does not lead to sharp predictions on the short-run players' welfare.

Two natural questions follow from Theorems \ref{Theorem2} and \ref{Theorem3}. First, how to compute the cutoff $\overline{K}$ from the primitives $u_1$ and $u_2$? Second, how large can $\eta$ be?
My proof of Theorems \ref{Theorem2} and \ref{Theorem3} sheds light on these questions as well. To preview the answers,
I say that $a^*$ is player $1$'s \textit{optimal pure commitment action} if
\begin{equation}\label{optimalcommitment}
u_1(a^*,b^*) > \max_{a \neq a^*} \max_{b \in \textrm{BR}_2(a)} u_1(a,b).
\end{equation}
If $(u_1,u_2)$ violates (\ref{optimalcommitment}), then $\overline{K}=1$ and $\eta$ can be as large as $1$ for every $K \geq 1$. That is, the frequency with which player $1$ plays $a^*$ is $0$ in some PBEs no matter how small $K$ is.

The interesting case is the one where $(u_1,u_2)$ satisfies (\ref{optimalcommitment}), i.e., $a^*$ is player 1's optimal pure commitment action.
The cutoff $\overline{K}$ is the smallest integer $\widehat{K} \in \mathbb{N}$ such that $b^*$ best replies to the mixed action $\frac{\widehat{K}-1}{\widehat{K}} a^* + \frac{1}{\widehat{K}} a'$ for some $a' \neq a^*$.
Moreover, $\eta$ can be as large as $\frac{m}{K}$, where $m \in \{1,2,...,K\}$ is such that $b^*$ best replies to the mixed action $\frac{K-m}{K} a^* + \frac{m}{K} a'$ for some $a' \neq a^*$. As $K \rightarrow +\infty$, $\eta$ converges to $\eta^*$ where $\eta^*$ is the largest $\widetilde{\eta}$ such that $b^*$ best replies to $(1-\widetilde{\eta}) a^* + \widetilde{\eta} a'$ for some $a' \neq a^*$.
In the product choice game, players' payoffs satisfy (\ref{optimalcommitment})  since $H$ is player $1$'s highest action and committing to play $H$ results in a higher payoff for player $1$ relative to committing to play $L$. The above algorithm implies that $\overline{K} = \Big\lceil \frac{1}{1-x} \Big\rceil$.

Back to the discussions on consumer welfare, due to integer constraints, it is not necessarily the case that consumers' worst equilibrium payoff decreases in $K$. Nevertheless, as I explained earlier that when $K \rightarrow +\infty$, the lowest frequency with which player $1$ plays $a^*$
converges to $\eta^*$ where $\eta^*$ is the largest $\widetilde{\eta}$ such that $b^*$ best replies to $(1-\widetilde{\eta}) a^* + \widetilde{\eta} a'$ for some $a' \neq a^*$. Hence, there exists a uniform bound $u' < u_2(a^*,b^*)$ such that for every $K \geq \overline{K}$, there exists an equilibrium where consumer welfare is no more than $u'$.
In contrast, when $K$ is below $\overline{K}$, consumer welfare is arbitrarily close to $u_2(a^*,b^*)$ in \textit{all} equilibria.

\paragraph{Mechanism Behind Theorems 2 and 3:} I use the product choice game  to explain the mechanism behind these theorems. I argue that having a longer memory (i.e., a larger $K$) has two effects on the seller's reputational incentives.
First, when $K$ is larger, each of the seller's actions is observed by more consumers, so that he can be punished by more consumers after he shirks. This encourages him to exert high effort.

However, there is another countervailing effect, which is that a larger $K$ makes it more difficult to motivate the consumers to punish the seller.
I explain this effect using the following thought experiment.
My explanation also sheds light on the algorithm for computing $\overline{K}$ as well as how large $\eta$ can be.

Suppose the consumers believe that the strategic-type seller will play $L$ in periods $K-1,2K-1,...$ and will play $H$ in other periods. After the consumers observe that $L$ was played once in the last $K$ periods, their posterior belief assigns probability close to $\frac{1}{K}$ to $L$ being played in the current period. Hence, the consumers have an incentive to play $T$ at such histories only when $x \leq \frac{K-1}{K}$, or equivalently when $K \geq \frac{1}{1-x}$. If this is the case, then the seller prefers \textit{exerting low effort once every $K$ periods} to \textit{exerting high effort in every period}, making the consumers' belief self-fulfilling. If this is not the case (i.e., the consumers prefer to play $N$ after observing one $L$ in the last $K$ periods), then exerting low effort once every $K$ periods gives the seller a strictly lower payoff compared to exerting high effort in every period, so that in equilibrium, the consumers cannot entertain the belief that the seller will exert low effort once every $K$ periods.

Using similar ideas, one can show that for every $m \in \{1,...,K-1\}$ and $a' \neq a^*$ such that $b^*$ best replies to $\frac{K-m}{K} a^* + \frac{m}{K} a'$,  there exists an equilibrium where in every $K$ consecutive periods, player 1 plays $a'$ in $m$ periods and plays $a^*$ in $K-m$ periods, and player 2 plays $b^*$ when she observes $a'$ being played at most $m$ times and $a^*$ being played at least $K-m$ times in the last $K$ periods. This provides an explanation for how large $\eta$ can be, which I have already discussed after the statement of Theorem \ref{Theorem2}.

\paragraph{Implication on Player 1's Behavior:} Although
Theorem \ref{Theorem2} focuses on player 1's discounted action frequencies, it has implications on the dynamics of player 1's behavior and reputation.
\begin{Corollary}\label{cor2}
Suppose $1 \leq K< \overline{K}$.
For every $\varepsilon>0$, there exist a constant $C_{\varepsilon} \in \mathbb{R}_+$ and $\underline{\delta} \in (0,1)$ such that for every $\delta > \underline{\delta}$, every equilibrium under $\delta$, and every $t \in \mathbb{N}$ that satisfies $\delta^t \in (\varepsilon,1-\varepsilon)$:
\begin{itemize}
  \item[1.] The probability that $h^t \in \mathcal{H}_1^*$
  is at least $1-(1-\delta)C_{\varepsilon} $.
  \item[2.] The strategic-type player 1 plays $a^*$ with probability at least $1- (1-\delta)C_{\varepsilon}$ in period $t$.
\end{itemize}
\end{Corollary}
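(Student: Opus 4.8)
The plan is to deduce both statements from the discounted-frequency bound of Statement 1 of Theorem \ref{Theorem2} together with the geometric structure that the no-back-loop lemma imposes on each realized path. Write $\mathcal{F} \equiv \sum_{a \neq a^*}\sum_{b \in B} F^\sigma(a,b)$ for the discounted frequency with which player 1 plays an action other than $a^*$; since $\sum_{(a,b)} F^\sigma(a,b)=1$, Statement 1 of Theorem \ref{Theorem2} gives $\mathcal{F} \leq (1-\delta)C$ whenever $1 \leq K < \overline{K}$. Because the equilibrium strategy is a mixture over pure strategies in $\Sigma_1^*$ (by the no-back-loop lemma), I would argue path-by-path: fix a pure $\sigma_1 \in \Sigma_1^*$ and a realized on-path history. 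The no-back-loop property implies that the set of periods at which $h^s \in \mathcal{H}_1^*$ is a single contiguous interval $[\tau, T]$ (with $T$ possibly $+\infty$), since player 1 exits $\mathcal{H}_1^*$ only by playing a non-$a^*$ action and, once he has done so, never returns. Hence $h^t \notin \mathcal{H}_1^*$ splits into the disjoint \emph{entry} event $\{\tau > t\}$ and \emph{exit} event $\{T < t\}$, and milking at $t$ (being in $\mathcal{H}_1^*$ and playing non-$a^*$) is exactly the event $\{T = t\}$.

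The key step is that each of these events forces a whole \emph{periodic chain} of non-$a^*$ actions whose discounted mass is bounded below by a constant that does not vanish as $\delta \to 1$. On $\{\tau > t\}$, player 1 is outside $\mathcal{H}_1^*$ at every period $jK \leq t$, so each disjoint window $[(j-1)K, jK-1]$ with $j \leq \lfloor t/K\rfloor$ contains a non-$a^*$ action; summing the smallest per-window weight gives a pathwise discounted non-$a^*$ mass of at least $(1-\delta)\delta^{K-1}\frac{1-\delta^{K\lfloor t/K\rfloor}}{1-\delta^K}$, which is of order $\frac{1}{K}(1-\delta^{t})$ and is therefore at least $\frac{\varepsilon}{2K}$ once $\delta$ is large, using $\delta^t < 1-\varepsilon$. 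Symmetrically, on $\{T = T_0\}$ with $T_0 \leq t$, the no-back-loop property forbids re-entry, so every window $[T_0 + 1 + jK,\, T_0 + (j+1)K]$ contains a non-$a^*$ action; together with the non-$a^*$ action at $T_0$ itself this yields a pathwise discounted mass of at least $\frac{(1-\delta)\delta^{T_0}}{1-\delta^K} \geq \frac{\delta^{T_0}}{K} \geq \frac{\delta^t}{K} > \frac{\varepsilon}{K}$, using $\delta^t > \varepsilon$. Taking expectations and comparing with $\mathcal{F} \leq (1-\delta)C$ gives $\mathbb{P}(\tau > t) \leq (1-\delta)\,2CK/\varepsilon$ and $\mathbb{P}(T < t) \leq (1-\delta)\,CK/\varepsilon$; adding them bounds $\mathbb{P}(h^t \notin \mathcal{H}_1^*)$ by $(1-\delta)C_\varepsilon$, which is Statement 1. (Here only the strategic type can leave $\mathcal{H}_1^*$, the commitment type being in $\mathcal{H}_1^*$ at every $t \geq K$, and $t \geq K$ holds automatically for $\delta$ large since $\delta^t < 1-\varepsilon$ forces $t \to \infty$ as $\delta \to 1$.)

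For Statement 2 I would bound $\mathbb{P}(a_t \neq a^* \mid \text{strategic}) = \mathbb{P}(a_t \neq a^*)/(1-\pi_0)$, again using that $a_t \neq a^*$ forces the strategic type. Decompose $\{a_t \neq a^*\}$ into $\{a_t \neq a^*,\, h^t \in \mathcal{H}_1^*\} = \{T = t\}$ and $\{a_t \neq a^*,\, h^t \notin \mathcal{H}_1^*\} \subseteq \{h^t \notin \mathcal{H}_1^*\}$. The second piece is controlled by Statement 1, and the first by the same exit-chain estimate specialized to $T_0 = t$, giving $\mathbb{P}(T = t) \leq (1-\delta)CK/\varepsilon$. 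Summing and dividing by $1-\pi_0$ yields Statement 2 with a suitable $C_\varepsilon$.

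The main obstacle is precisely the passage from an \emph{average} (discounted-frequency) bound to a \emph{pointwise} per-period bound. Extracting a single term of the discounted sum only controls the per-period failure probability by something of order $C/\delta^t < C/\varepsilon$, a constant that does not shrink with $1-\delta$, so the naive approach cannot deliver the claimed $(1-\delta)C_\varepsilon$ rate. The resolution — and the crux of the argument — is that the no-back-loop geometry converts any ``bad'' period-$t$ event into an entire $K$-periodic sequence of non-$a^*$ actions (all the past windows before entry, or all the future windows after exit), whose discounted mass is of order $1/K$ rather than of order $1-\delta$. It is exactly the two-sided bound $\delta^t \in (\varepsilon, 1-\varepsilon)$ that makes both chains substantive: the upper bound $\delta^t < 1-\varepsilon$ guarantees enough past windows for the entry chain to accumulate mass, while the lower bound $\delta^t > \varepsilon$ keeps the future exit chain sufficiently heavily discounted-weighted to be detected. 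The remaining work is to carry out the window-counting and the two geometric-sum estimates carefully to pin down the constants $C_\varepsilon$; this combinatorial ``flow'' bookkeeping is of the same payoff-free flavor as Lemmas \ref{L4.1}, \ref{Linflow}, and \ref{L4.3}.
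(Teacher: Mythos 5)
Your proposal is correct and uses exactly the ingredients the paper's proof (Online Appendix C) relies on — Statement 1 of Theorem \ref{Theorem2} for the $O(1-\delta)$ bound on the discounted non-$a^*$ frequency, and the no-back-loop lemma to make the on-path visits to $\mathcal{H}_1^*$ a single interval, so that any period-$t$ failure forces a $K$-periodic chain of non-$a^*$ actions of discounted mass of order $\varepsilon/K$. The window-counting and the role of the two-sided condition $\delta^t\in(\varepsilon,1-\varepsilon)$ are handled correctly, so this is essentially the paper's argument.
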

The proof of Corollary \ref{cor2} is in Online Appendix C, which uses Theorem \ref{Theorem2} and the no-back-loop lemma. This result implies that
in every Nash equilibrium, the strategic-type of player 1 will have a strictly positive reputation with probability close to $1$ after the initial few periods, after which he will play $a^*$ with probability close to 1 in every period until calendar time $t$ is so large that
$\delta^t$ is close to $0$. In the product choice game, it implies that the strategic-type seller will exert high effort with probability close to $1$ in all periods except for the initial few periods and periods in the distant future that have negligible payoff consequences.

\paragraph{Comparison with Fudenberg and Levine (1989):} When $K < \overline{K}$, Theorem \ref{Theorem2} leads to a sharp prediction on player $1$'s behavior, that player 1 will play $a^*$ with frequency arbitrarily close to $1$ in \textit{all} equilibria.
This conclusion stands in contrast to the reputation model of Fudenberg and Levine (1989) in which the short-run players can observe the entire history of player 1's actions and there is \textit{a lack-of sharp prediction} in terms of player 1's behaviors. For example, in the product choice game, Li and Pei (2021) show that the discounted frequency with which the seller exerts high effort can be anything between $\frac{x}{1+(1-x)c_T}$ and $1$ in Fudenberg and Levine (1989)'s reputation model, so a wide range of behaviors can occur in equilibrium.\footnote{In Online Appendix D, I compare the predictions on player $1$'s \textit{discounted action frequency} in Fudenberg and Levine (1989)'s model to those in my model under any arbitrary $K \in \mathbb{N}$. I show that my model leads to sharper predictions in terms of the action frequencies as long as $a^*$ is player $1$'s optimal pure commitment action.}

According to
Theorem \ref{Theorem2}, when the consumers have short memories and only receive \textit{coarse information} about the seller's past actions,
the only equilibria that survive are those where the seller exerts high effort in almost all periods.
It implies that the bad equilibria (i.e., those where the consumers receive a low payoff) in Fudenberg and Levine (1989)'s model rely on consumer-strategies that depend either on events that happened in the distant past or on the fine details of the game's history. These strategies sound less plausible (i) in markets without good record-keeping institutions where it is hard for the consumers to learn the exact sequence of the seller's actions, (iii) in online platforms that only disclose some aggregate statistics of the seller's recent performances but do not disclose the exact timing of each individual rating, and
(iii) when consumers have limited capacity to process detailed information about the exact sequence of the seller's actions in which case 
their decisions are based on coarse summary statistics.

\paragraph{Proof Sketch:} The idea behind the proof of the second part of these theorems is contained in the thought experiment, with details in Appendix \ref{subB.2}.
My technical contribution is in the proof of the first part.

A major challenge is that characterizing
\textit{all} equilibria in an infinitely repeated game is not tractable in general, and
ruling out the type of equilibria constructed in the proof for the second part is \textit{insufficient} to show that player 1 will play $a^*$ with frequency close to $1$ in \textit{all} equilibria. One of the reasons is that a Bayesian short-run player's expectation of the patient player's current-period action \textit{may not} be close to the empirical frequency of the patient player's last $K$ actions.
For example, in the product choice game,  the seller may play $H$ with high probability at histories where he played $L$ in all of the last $K$ periods.

 I sketch the proof and the omitted details are in Appendix \ref{subB.1}. Some of my arguments such as Lemma \ref{L4.1}, \ref{Linflow}, and \ref{L4.3} apply independently of players' incentives and payoff functions, and therefore, are portable to other repeated game settings with limited memories and more generally, repeated games where players use finite automaton strategies.
Readers who are not interested in the proof can jump to Section \ref{sec4}.

I define a \textit{state} as a sequence of player 1's actions with length $K$, i.e., $(a_{t-K},...,a_{t-1})$.
Let $S \equiv A^K$ be the \textit{set of states} with $s \in S$. Let $s^* \equiv (a^*,...,a^*)$.
Fix a strategy profile $\sigma \equiv (\sigma_1,\sigma_2)$. For every $s \in S$, let $\mu(s)$ be the probability that the current-period state is $s$ conditional on the event that player 1 is the strategic type and calendar time is at least $K$. For every pair of states $s,s' \in S$, let $Q(s \rightarrow s')$ be the probability that the state in the next period is $s'$ conditional on the state in the current period is $s$, player 1 is the strategic type, and the calendar time is at least $K$. Let $p(s)$ be the probability that the state is $s$ conditional on calendar time being $K$ and player 1 is the strategic type.
The goal is to show that $\mu(s^*)$ is close to $1$ in all equilibria.
I state three lemmas which hold for \textit{all} strategy profiles and \textit{all} stage-game payoffs, i.e., they are portable to other repeated games with limited memories or when players are required to use finite automaton strategies.
\begin{Lemma}\label{L4.1}
For any $\delta \in (0,1)$ and any equilibrium under $\delta$, we have
\begin{equation}\label{4.8}
  \mu(s')= (1-\delta) p(s') + \delta \sum_{s \in S} \mu(s) Q(s \rightarrow s') \textrm{ for every } s' \in S.
\end{equation}
\end{Lemma}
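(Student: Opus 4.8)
The plan is to establish \eqref{4.8} as a flow-balance (occupation-measure) identity for the state process induced by $\sigma$, using only the law of play and the geometric prior on calendar time. In particular the argument uses \emph{no} optimality of player $1$, which is exactly why the lemma holds for all strategy profiles and all payoffs. Fix $\sigma=(\sigma_1,\sigma_2)$ and condition throughout on player $1$ being the strategic type. Let $s_\tau\equiv(a_{\tau-K},\dots,a_{\tau-1})$ denote the realized state at calendar time $\tau\geq K$, and let $\mu_\tau(s)\equiv\mathbb{P}^\sigma(s_\tau=s)$ be its distribution under the (fixed) action process. The key first step is to record that all three objects in the lemma are calendar-time averages of this single process under the posterior on calendar time conditional on $t\geq K$, which is the shifted geometric law $\mathbb{P}(t=\tau\mid t\geq K)=(1-\delta)\delta^{\tau-K}$ for $\tau\geq K$. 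Concretely, $p(s)=\mu_K(s)$, and
\[
\mu(s)=\sum_{\tau\geq K}(1-\delta)\delta^{\tau-K}\mu_\tau(s),\qquad
Q(s\to s')=\frac{\sum_{\tau\geq K}(1-\delta)\delta^{\tau-K}\,\mathbb{P}^\sigma(s_\tau=s,\,s_{\tau+1}=s')}{\sum_{\tau\geq K}(1-\delta)\delta^{\tau-K}\mu_\tau(s)}.
\]

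I would then exploit the memorylessness of the shifted geometric law. Writing $T$ for the calendar-time random variable restricted to $\{K,K+1,\dots\}$, two facts hold: first, $\mathbb{P}(T=K)=1-\delta$; second, conditional on $T>K$ the variable $T-1$ has exactly the same law as $T$ (a one-line check since $\mathbb{P}(T=\tau+1)/\mathbb{P}(T>K)=(1-\delta)\delta^{\tau-K}$). I would then decompose $\mu(s')=\mathbb{P}^\sigma(s_T=s')$ by conditioning on whether $T=K$. On the event $T=K$, which has probability $1-\delta$, we have $s_T=s_K$, contributing $(1-\delta)\,p(s')$. On the event $T>K$, which has probability $\delta$, set $T'=T-1$; since $T'\stackrel{d}{=}T$ and the state process is fixed, the pair $(s_{T'},s_{T'+1})$ has the same joint law as $(s_T,s_{T+1})$, so
\[
\mathbb{P}^\sigma(s_T=s'\mid T>K)=\mathbb{P}^\sigma(s_{T'+1}=s')=\sum_{s\in S}\mathbb{P}^\sigma(s_{T'}=s)\,\mathbb{P}^\sigma(s_{T'+1}=s'\mid s_{T'}=s)=\sum_{s\in S}\mu(s)\,Q(s\to s').
\]
Combining the two events yields $\mu(s')=(1-\delta)p(s')+\delta\sum_{s\in S}\mu(s)Q(s\to s')$, which is \eqref{4.8}.

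The step I expect to be the main obstacle, and the one I would write out most carefully, is the identification $\mathbb{P}^\sigma(s_{T'+1}=s'\mid s_{T'}=s)=Q(s\to s')$. This is delicate because player $1$'s strategy can depend on the entire history and on calendar time, so the one-period transition out of a given state $s$ is \emph{not} a fixed kernel at each $\tau$; rather $Q(s\to s')$ is itself a calendar-time average of these time-dependent transitions. The identification works precisely because $T'\stackrel{d}{=}T$, so the calendar-time averaging that defines $Q$ is reproduced verbatim under the law of $T'$, making the consecutive-state joint laws coincide and collapsing the conditional to the single number $Q(s\to s')$. As a robustness cross-check I would also verify \eqref{4.8} by the direct route: multiply $Q(s\to s')$ through by its denominator $\delta^K\mu(s)$, sum over $s$ using $\sum_s\mathbb{P}^\sigma(s_\tau=s,s_{\tau+1}=s')=\mu_{\tau+1}(s')$, reindex the geometric series $\tau\mapsto\tau+1$, and split off the $\tau=K$ term $(1-\delta)\delta^K p(s')$; this reproduces the same identity and confirms that restricting to $t\geq K$ (where both the state and player $2$'s posterior over calendar time are well defined) is exactly what is needed.
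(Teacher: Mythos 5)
Your proof is correct and is essentially the paper's argument: the memorylessness decomposition (split off the event $T=K$, then use that $T-1$ conditional on $T>K$ has the law of $T$) is a probabilistic repackaging of the paper's direct computation, which defines $p_t(s)$ and the time-dependent transitions $q_t(s\to s')$, uses $\sum_{s}p_t(s)q_t(s\to s')=p_{t+1}(s')$, and reindexes the geometric series --- precisely the ``direct route'' you describe as your cross-check. You also correctly identify and resolve the one delicate point, namely that $Q(s\to s')$ is a calendar-time average rather than a fixed kernel, which is handled in the paper by carrying the $q_t$ notation through the summation.
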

The proof is in Appendix \ref{subB.1}.
Intuitively, Lemma \ref{L4.1} implies that the occupation measure of every state $s'$ is a convex combination of its probability in period $K$ and the expected probability that the state moves to $s'$ after  period $K$. For any non-empty subset of states $S' \subset S$, let
\begin{equation}\label{in}
\mathcal{I}(S') \equiv  \sum_{s' \in S'} \sum_{s \notin S'} \mu(s) Q(s \rightarrow s')
\end{equation}
be the \textit{inflow} to $S'$ from states that do not belong to $S'$, and let
\begin{equation}\label{out}
 \mathcal{O} (S')  \equiv \sum_{s' \in S'} \sum_{s \notin S'}  \mu(s') Q(s' \rightarrow s)
\end{equation}
be the \textit{outflow} from $S'$ to states that do not belong to $S'$.
By definition,
\begin{equation*}
\sum_{s' \in S'}
    \mu(s')=
    \sum_{s' \in S'}
    \mu(s')\underbrace{\Big( \sum_{s \in S'} Q(s' \rightarrow s) +  \sum_{s \notin S'} Q(s' \rightarrow s)\Big)}_{=1}
    =
    \sum_{s' \in S'} \sum_{s \in S'} \mu(s') Q(s' \rightarrow s)
    +\underbrace{\sum_{s' \in S'} \sum_{s \notin S'}  \mu(s') Q(s' \rightarrow s)}_{\equiv \mathcal{O}(S')}.
\end{equation*}
For any $S' \subset S$, by summing up the two sides of equation (\ref{4.8}) for all $s' \in S'$, one can obtain that
\begin{equation*}
\sum_{s' \in S'} \sum_{s \in S'} \mu(s') Q(s' \rightarrow s) +\underbrace{ \sum_{s' \in S'} \sum_{s \notin S'} \mu(s) Q(s \rightarrow s')}_{\equiv \mathcal{I}(S')}=
\sum_{s' \in S'}
    \mu(s')+
\sum_{s' \in S'} \frac{1-\delta}{\delta} \Big\{
    \mu(s')-p(s')
    \Big\}.
\end{equation*}
These equations imply that $\mathcal{I}(S')=\mathcal{O}(S') + \sum_{s' \in S'} \frac{1-\delta}{\delta} \Big\{
    \mu(s')-p(s')
    \Big\}$. Since $\mu$ and $p$ are probability measures on $S$, we have $|\sum_{s' \in S'} (\mu(s')-p(s')) | \leq 1$. This leads to
    the following lemma:
\begin{Lemma}\label{Linflow}
For every non-empty subset $S' \subset S$, we have:
\begin{equation}\label{4.9}
|\mathcal{I}(S')-\mathcal{O}(S')| = \Big| \sum_{s' \in S'} \frac{1-\delta}{\delta} \big(
    \mu(s')-p(s')
    \big)  \Big| \leq \frac{1-\delta}{\delta}.
\end{equation}
\end{Lemma}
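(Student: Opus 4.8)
The plan is to read Lemma \ref{Linflow} as a discrete flow-conservation (Kirchhoff-type) identity for the occupation measure $\mu$ and to derive it directly from the state-evolution equation (\ref{4.8}) in Lemma \ref{L4.1}. First I would fix a nonempty $S' \subset S$ and sum both sides of (\ref{4.8}) over $s' \in S'$. On the right-hand side the transition term $\delta \sum_{s \in S} \mu(s) Q(s \rightarrow s')$ splits, according to whether the source state $s$ lies inside or outside $S'$, into an \emph{internal} contribution $\sum_{s' \in S'}\sum_{s \in S'} \mu(s) Q(s \rightarrow s')$ and the inflow $\mathcal{I}(S')$ defined in (\ref{in}). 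Relabeling the dummy indices $s \leftrightarrow s'$ shows the internal contribution equals $\sum_{s' \in S'}\sum_{s \in S'} \mu(s') Q(s' \rightarrow s)$, i.e.\ it is exactly the symmetric ``within-$S'$'' flow.

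Second, I would produce the matching expression for $\sum_{s' \in S'}\mu(s')$ itself. Since $Q(s',\cdot)$ is a probability distribution, $\sum_{s \in S} Q(s' \rightarrow s)=1$, so $\mu(s') = \sum_{s \in S'}\mu(s')Q(s' \rightarrow s) + \sum_{s \notin S'}\mu(s')Q(s' \rightarrow s)$; summing over $s' \in S'$ writes $\sum_{s' \in S'}\mu(s')$ as the same within-$S'$ flow plus the outflow $\mathcal{O}(S')$ from (\ref{out}). The point of organizing both quantities this way is that the within-$S'$ flow term is common to both and cancels when I equate the two summed identities. After cancellation I obtain the exact balance
\begin{equation*}
\mathcal{I}(S') - \mathcal{O}(S') = \frac{1-\delta}{\delta}\sum_{s' \in S'}\big(\mu(s') - p(s')\big),
\end{equation*}
which is the equality asserted in (\ref{4.9}): inflow minus outflow is a ``leakage'' term proportional to the discrepancy between the occupation measure $\mu$ and the period-$K$ distribution $p$ on $S'$.

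Finally, for the inequality I would use that $\mu$ and $p$ are both probability measures on $S$, so their restrictions to $S'$ each lie in $[0,1]$ and $\left|\sum_{s' \in S'}(\mu(s')-p(s'))\right| \le 1$ (indeed this quantity is bounded by the total-variation distance between $\mu$ and $p$, hence by $1$). Multiplying by $\frac{1-\delta}{\delta}$ gives $|\mathcal{I}(S')-\mathcal{O}(S')| \le \frac{1-\delta}{\delta}$, completing the argument.

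I do not expect a genuine obstacle here: the statement is a pure accounting identity that holds for any strategy profile and any stage-game payoffs, with no incentive or supermodularity input. The only place demanding care is the index bookkeeping in the first two steps---verifying that the within-$S'$ double sum appearing in the summed version of (\ref{4.8}) is literally identical (after relabeling $s \leftrightarrow s'$) to the one appearing in the row-stochasticity decomposition of $\sum_{s' \in S'}\mu(s')$, so that the cancellation is exact rather than merely approximate. Once that identification is made, the leakage term and the final $\frac{1-\delta}{\delta}$ bound are immediate.
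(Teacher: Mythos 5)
Your proposal is correct and follows essentially the same route as the paper: sum (\ref{4.8}) over $s' \in S'$, split the transition term into the within-$S'$ flow plus $\mathcal{I}(S')$, use row-stochasticity of $Q$ to write $\sum_{s' \in S'}\mu(s')$ as the same within-$S'$ flow plus $\mathcal{O}(S')$, cancel, and bound the leakage term by $\frac{1-\delta}{\delta}$ using that $\mu$ and $p$ are probability measures. No gaps.
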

I partition the set of states according to $S \equiv S_0 \cup ... \cup S_K$ so that $S_k$ is the set of states where $k$ of the player $1$'s last $K$ actions were not
$a^*$. By definition, $S_0=\{s^*\}$. I further partition every $S_k \equiv \cup_{j=1}^{J(k)} S_{j,k}$ according to player $2$'s information structure, i.e., two states belong to the same partition element $S_{j,k}$ if and only if player 2 distinguish between these two states. For every state that belongs to $S_k$, exactly one of the following two statements is true, depending on whether player 1's action $K$ periods ago was $a^*$:
\begin{enumerate}
  \item The state in the next period belongs to $S_{k-1}$ or $S_k$, depending on player 1's current-period action.
  \item The state in the next period belongs to $S_{k}$ or $S_{k+1}$, depending on player 1's current-period action.
\end{enumerate}
Therefore, I partition each $S_{j,k}$ into  $S_{j,k}^*$ and $S_{j,k}'$ such that for every $s \in S_{j,k}$, $s \in S_{j,k}^*$ if and only if player 1's action $K$ periods ago was $a^*$, and $s \in S_{j,k}'$ otherwise. For any $S',S'' \subset S$ with $S' \cap S'' =\emptyset$, let
\begin{equation}\label{flow}
    \mathcal{Q}(S' \rightarrow S'') \equiv \sum_{s' \in S'} \sum_{s'' \in S''} \mu(s') Q(s' \rightarrow s'')
\end{equation}
be the expected flow from $S'$ to $S''$. According to Bayes rule, upon observing a state that belongs to $S_{j,k}$, player 2 believes that
player 1's action is $a^*$ with probability
\begin{equation}\label{4.14}
  \mathcal{Q} (S_{j,k} \rightarrow S_{k-1})+\sum_{s \in S_{j,k}^*} \sum_{s' \in S_{j,k}} \mu(s) Q(s \rightarrow s'),
\end{equation}
and is not $a^*$ with probability
  $\mathcal{Q}(S_{j,k} \rightarrow S_{k+1})+ \sum_{s \in S_{j,k}'} \sum_{s' \in S_k} \mu(s) Q(s \rightarrow s')$.
The next lemma shows that as long as both $\mathcal{Q}(S_{k-1} \rightarrow S_{j,k})$ and $\mathcal{Q}(S_{j,k} \rightarrow S_{k-1})$ are bounded above by a linear function of $1-\delta$, either $\sum_{s \in S_{j,k}} \mu(s)$ is also bounded above by some linear function of $1-\delta$, or player 2 believes that player $1$ will play $a^*$ with probability strictly less than
$\frac{K-1}{K}$ upon observing that the current state belongs to
$S_{k,j}$.
\begin{Lemma}\label{L4.3}
For every $z \in \mathbb{R}_+$, there exist $y \in \mathbb{R}_+$ and $\underline{\delta} \in (0,1)$ such that for every equilibrium under $\delta > \underline{\delta}$ and every $S_{j,k}$ with $k \geq 1$. If $\max \{ \mathcal{Q} (S_{j,k} \rightarrow S_{k-1}),\mathcal{Q} (S_{k-1} \rightarrow S_{j,k}) \} \leq z(1-\delta)$, then either
\begin{equation}\label{bound}
    \sum_{s \in S_{j,k}} \mu(s) \leq y (1-\delta)
\end{equation}
or
\begin{equation}\label{incentive}
    \frac{\displaystyle \mathcal{Q} (S_{j,k} \rightarrow S_{k-1})+\sum_{s \in S_{j,k}^*} \sum_{s' \in S_{j,k}} \mu(s) Q(s \rightarrow s')}
    {\displaystyle \mathcal{Q}(S_{j,k} \rightarrow S_{k+1})+ \sum_{s \in S_{j,k}'} \sum_{s' \in S_k} \mu(s) Q(s \rightarrow s') }
    <K-1.
\end{equation}
\end{Lemma}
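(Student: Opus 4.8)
The plan is to prove the contrapositive of the dichotomy: assuming that $S_{j,k}$ carries non-negligible mass, i.e. $\sum_{s\in S_{j,k}}\mu(s)>y(1-\delta)$ for a large constant $y$ to be chosen, and that both $\mathcal{Q}(S_{j,k}\to S_{k-1})$ and $\mathcal{Q}(S_{k-1}\to S_{j,k})$ are at most $z(1-\delta)$, I would show that (\ref{incentive}) holds. First I would record the transition bookkeeping that makes (\ref{incentive}) transparent. Writing a state as a length-$K$ word over $A$, one period drops the oldest letter and appends $a_t$; hence from $S_{j,k}$ the level falls to $k-1$ exactly when a state in $S_{j,k}'$ (oldest letter $\neq a^*$) plays $a^*$, rises to $k+1$ exactly when a state in $S_{j,k}^*$ (oldest letter $=a^*$) plays a non-$a^*$ action, and an $a^*$ played in $S_{j,k}^*$ preserves the multiset and stays in $S_{j,k}$. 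Thus the numerator of (\ref{incentive}) is the \emph{total} $a^*$-outflow from $S_{j,k}$ and the denominator is its \emph{total} non-$a^*$-outflow, so writing $M\equiv\sum_{s\in S_{j,k}}\mu(s)$ and letting $q$ denote the conditional probability that player $1$ plays a non-$a^*$ action at $S_{j,k}$, inequality (\ref{incentive}) is equivalent to the clean statement $q>\tfrac1K$.

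The heart of the argument is a \emph{replenishment} estimate read off from the flow balance of Lemma \ref{Linflow}. The point is that a level-$k$ word loses a non-$a^*$ letter every time an $a^*$ is played from $S_{j,k}'$, so to hold mass at level $k$ while leaking only $z(1-\delta)$ of it downward, player $1$ must re-supply non-$a^*$ letters at a matching rate. To turn this into a bound on the \emph{particular} information set $S_{j,k}$ (not merely the level $S_k$), I would apply Lemma \ref{Linflow} to $S_{j,k}$ itself: its outflow equals $\mathcal{Q}(S_{j,k}\to S_{k-1})$ plus the non-$a^*$ ascent $F_\uparrow$ (from $S_{j,k}^*$ to $S_{k+1}$) and the non-$a^*$ mass $F_\rightarrow^{\mathrm{out}}$ leaving sideways to other level-$k$ sets, so it is at most $z(1-\delta)+qM$; by the second hypothesis its inflow from $S_{k-1}$ is at most $z(1-\delta)$, and Lemma \ref{Linflow} forces the \emph{total} inflow (including the uncontrolled contributions from $S_{k+1}$ and sideways) to match the outflow up to $\tfrac{1-\delta}{\delta}$, hence to be at most $qM+O(1-\delta)$. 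To control the internal $a^*$-circulation $F_\circ$ (the $S_{j,k}^*$ mass that plays $a^*$ and stays) I would refine $S_{j,k}$ into the ladder $S_{j,k}^{(1)},\dots,S_{j,k}^{(K)}$, where $S_{j,k}^{(p)}$ collects the orderings whose oldest non-$a^*$ letter sits in position $p$ ($p=1$ being $S_{j,k}'$, $p\ge2$ being $S_{j,k}^*$): an $a^*$ carries $S_{j,k}^{(p)}\to S_{j,k}^{(p-1)}$ for $p\ge2$ and $S_{j,k}^{(1)}\to S_{k-1}$. Applying Lemma \ref{Linflow} to the nested sets $\bigcup_{p\le r}S_{j,k}^{(p)}$ and telescoping bounds $F_\circ$ by $(K-1)$ times the inflow plus an accumulated $O(1-\delta)$ balance error.

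Assembling these estimates, and using $F_\circ=(1-q)M-\mathcal{Q}(S_{j,k}\to S_{k-1})$, gives an inequality of the form $M\,(1-Kq)\le C(z,K)(1-\delta)$. This already yields the dichotomy in the ``interior'': if $q>\tfrac1K$ then (\ref{incentive}) holds, while if $q$ is bounded below $\tfrac1K$ then $M\le y(1-\delta)$ with $y$ absorbing $C(z,K)$. For $k\ge2$ there is room to spare, since holding level-$k$ mass forces the conditional non-$a^*$ rate up to roughly $\tfrac{k}{K}\ge\tfrac2K$ (each of the $k$ non-$a^*$ letters must be replenished once per $K$ periods), comfortably giving $q>\tfrac1K$.

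The main obstacle is the boundary $q=\tfrac1K$, which is genuine only at $k=1$: there a mass that sits at level $1$ and replenishes at the exact rate $\tfrac1K$ has posterior exactly $\tfrac{K-1}{K}$, so neither alternative is delivered by the first-order balance above. Resolving this is where I expect the real work to lie, and where the second hypothesis $\mathcal{Q}(S_{k-1}\to S_{j,k})\le z(1-\delta)$ must be used in full force: it says the level-$1$ mass cannot be fed from $s^*$, so any mass circulating within $S_{j,1}$ must either be negligible or sustain a strictly positive ascent/leak which, fed back through Lemma \ref{Linflow}, pushes $q$ strictly above $\tfrac1K$ by an amount that dominates the $O(1-\delta)$ errors once $M>y(1-\delta)$. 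Making this strict comparison quantitative — while avoiding double-counting the non-$a^*$ reshuffles among the orderings of a fixed multiset — is the delicate step; the constants $y$ and $\underline{\delta}$ would be fixed at the very end so that the two $O(1-\delta)$ budgets (the balance error and the bottom leak) fall below the strict gap this step produces.
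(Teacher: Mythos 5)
Your reading of (\ref{incentive}) as the statement that the conditional probability $q$ of a non-$a^*$ action at $S_{j,k}$ exceeds $1/K$ is correct, and your overall route --- prove the contrapositive, decompose $S_{j,k}$ into a ladder indexed by the position of the oldest non-$a^*$ letter, and telescope the balance identities of Lemmas \ref{L4.1} and \ref{Linflow} along it to extract the factor $K-1$ --- is essentially the route the paper takes (the paper phrases the ladder as ``connected sequences'' of at most $K$ states starting from each $s_1 \in S_{j,k}'$, with the at most $K-1$ subsequent states lying in $S_{j,k}^*$). The genuine gap is exactly the step you flag and leave open: the boundary $q=1/K$ at $k=1$. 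Moreover, the resolution you hope for --- that the hypothesis $\mathcal{Q}(S_{k-1}\to S_{j,k})\le z(1-\delta)$ can be used ``in full force'' to push $q$ \emph{strictly} above $1/K$ --- is not available. In the periodic equilibrium constructed in Appendix \ref{subB.2} (Case 1, which exists whenever $K\ge \overline{K}$), the strategic type cycles forever through the $K$ states of a single class $S_{j,1}$: both flows in the hypothesis are exactly zero, $\sum_{s\in S_{j,1}}\mu(s)=1$, the occupation measure of the state with the non-$a^*$ letter in position $p$ is $\delta^{K-p}/(1+\delta+\cdots+\delta^{K-1})$ (cf.\ (\ref{4.4})), and the ratio in (\ref{incentive}) equals $(1+\delta+\cdots+\delta^{K-2})/\delta^{K-1}$, which is strictly \emph{greater} than $K-1$ for every $\delta<1$. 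So at the boundary no refinement of the flow accounting can deliver the strict inequality.

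What the telescoping actually yields --- and what the paper's own derivation of (\ref{incentives1}) yields, once you track the accumulated error terms such as $m(m-1)(1-\delta)/2\delta$ --- is that the ratio is at most $K-1$ up to an additive correction of order $(1-\delta)\big/\sum_{s\in S_{j,k}}\mu(s)$. That tolerant form is all that the downstream argument needs: Lemma \ref{L4.4} is invoked only under the hypothesis that $b^*$ fails to best reply to $\frac{K-1}{K}a^*+\frac{1}{K}a'$ for every $a'\neq a^*$, which is a strict (open) condition, so a posterior within $O(1-\delta)$ of $\frac{K-1}{K}$ still fails to support $b^*$ once $\delta$ is close to $1$. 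If you restate the conclusion of your boundary case as $q\ge 1/K - C(z,K)(1-\delta)/M$ rather than $q>1/K$, your argument closes and coincides with the paper's; the separate replenishment heuristic you invoke for $k\ge 2$ is then unnecessary, since the uniform $K-1$ factor from chains of length at most $K$ covers all $k\ge 1$.
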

The proof requires some heavy algebra, which is in Appendix \ref{subB.1}. Intuitively, in the case where (\ref{bound}) is satisfied, the states in $S_{j,k}$ have negligible occupation measure as $\delta \rightarrow 1$. In the case where (\ref{incentive}) is satisfied and $K < \overline{K}$, player 2 has no incentive to play actions $b^*$ or above after observing any state in $S_{j,k}$.

The next two steps make use of players' incentive constraints. First, I use Lemma \ref{Linflow} and
the no-back-loop lemma to show that
both the inflow to $S_0 \equiv \{s^*\}$ and the outflow from $S_0$ are small.
\begin{Lemma}\label{L4.2}
For every $\delta$ and in every Nash equilibrium under $\delta$, $\mathcal{O}(S_0) \leq \frac{2(1-\delta)}{\delta}$ and  $\mathcal{I}(S_0) \leq \frac{1-\delta}{\delta}$.
\end{Lemma}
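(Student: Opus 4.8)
The plan is to control the inflow $\mathcal{I}(S_0)$ directly from the no-back-loop lemma and then to recover the outflow bound from the flow-balance inequality of Lemma \ref{Linflow}. Fix a Nash equilibrium $(\widetilde\sigma_1,\sigma_2)$. By the no-back-loop lemma, $\widetilde\sigma_1$ is a mixture over pure best replies, each of which satisfies the no-back-loop property, so I can reason path-by-path along the positive-probability histories generated by each pure strategy in the support.

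For the inflow, I would read $\mathcal{I}(S_0)=\sum_{s\neq s^*}\mu(s)Q(s\to s^*)$ as the discounted occupation measure (conditional on the strategic type and $t\geq K$) of the event that the current state differs from $s^*$ but the next state equals $s^*$. Since $s^*=(a^*,\dots,a^*)$, the only states from which $s^*$ is reachable in one step are those of the form $(a'',a^*,\dots,a^*)$ with $a''\neq a^*$ followed by the action $a^*$; thus an inflow transition is exactly an entry into $\mathcal{H}_1^*$ from outside $\mathcal{H}_1^*$. The decisive observation is that, along any positive-probability path of the strategic type, the periods at which the state equals $s^*$ form a single contiguous block: the only way to leave $s^*$ is to play some action other than $a^*$ at a history in $\mathcal{H}_1^*$, and the no-back-loop lemma then forbids ever returning to $\mathcal{H}_1^*$. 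Hence there is at most one entry into $s^*$ per path. Writing $\mu$ as the normalized discounted occupation measure $\mu(s)=(1-\delta)\sum_{t\geq K}\delta^{t-K}\nu_t(s)$, where $\nu_t$ is the period-$t$ distribution over states of the strategic type, the quantity $\mu(s)Q(s\to s^*)$ is the discounted occupation measure of the corresponding transition, and $\mathcal{I}(S_0)$ equals $(1-\delta)$ times the expected discounted number of entries into $s^*$. At-most-one entry per path therefore yields $\mathcal{I}(S_0)\leq 1-\delta\leq \tfrac{1-\delta}{\delta}$.

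With the inflow in hand, the outflow bound follows from Lemma \ref{Linflow} applied to the singleton $S'=S_0=\{s^*\}$, which gives $|\mathcal{I}(S_0)-\mathcal{O}(S_0)|\leq\tfrac{1-\delta}{\delta}$; combining this with the inflow estimate gives $\mathcal{O}(S_0)\leq \mathcal{I}(S_0)+\tfrac{1-\delta}{\delta}\leq\tfrac{2(1-\delta)}{\delta}$, which is precisely the source of the factor of two. (Since the no-back-loop lemma equally caps the number of exits from $s^*$ at one per path, $\mathcal{O}(S_0)$ could instead be bounded directly; routing through Lemma \ref{Linflow} is what produces the stated constants.) I expect the main obstacle to be the careful passage from the path-wise statement ``at most one entry into $s^*$'' to the aggregate bound on $\mathcal{I}(S_0)$, since $\mu$ and $Q$ are time-averaged conditional objects rather than path quantities; the occupation-measure representation above is what makes this rigorous, and one must also check that at-most-one-entry survives player $1$'s mixing over pure best replies and player $2$'s mixing, which it does because the no-back-loop lemma applies to each pure best reply over its own positive-probability histories.
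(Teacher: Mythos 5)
Your proposal is correct, and it reaches the stated bounds by a route that differs in structure from the paper's. The paper argues by cases on whether player $1$ is willing to play an action other than $a^*$ at $s^*$: if not, $\mathcal{O}(S_0)=0$ and Lemma \ref{Linflow} gives the inflow bound; if so, it introduces the auxiliary set $S'$ of states from which $s^*$ is reachable under some best reply, argues that $\mathcal{I}(S')=0$, applies Lemma \ref{Linflow} to $S'$ to get $\mathcal{O}(S')\leq\frac{1-\delta}{\delta}$, and then uses $\mathcal{I}(S_0)\leq\mathcal{O}(S')$ before applying Lemma \ref{Linflow} a second time to $S_0$ itself. You instead translate the no-back-loop property into the path-wise statement that, along every positive-probability path of each pure best reply in the support of $\widetilde{\sigma}_1$, the visits to $s^*$ form a single contiguous block, so there is at most one entry (and, as you note, at most one exit) per path; reading $\mathcal{I}(S_0)$ off the occupation-measure representation $\mu(s)Q(s\to s')=(1-\delta)\sum_{t}\delta^{t}\Pr[\mathrm{state}_{K+t}=s,\ \mathrm{state}_{K+t+1}=s']$ then gives the sharper bound $\mathcal{I}(S_0)\leq 1-\delta$ with a single application of Lemma \ref{Linflow} for the outflow. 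What your approach buys is directness: it avoids the auxiliary set $S'$ and the slightly delicate claim $\mathcal{I}(S')=0$, and it would in fact yield $\mathcal{O}(S_0)\leq 1-\delta$ without Lemma \ref{Linflow} at all. What the paper's approach buys is that it stays entirely at the level of the aggregated flow quantities $\mathcal{I}$ and $\mathcal{O}$, which are the objects manipulated in Lemmas \ref{L4.3} and \ref{L4.4}, rather than descending to path-level reasoning. Your decomposition of the equilibrium mixed strategy into pure best replies, each satisfying the no-back-loop property on its own positive-probability histories, is exactly the step the paper itself uses in the proof of Theorem \ref{Theorem1}, so that part of your argument is on solid ground.
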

\begin{proof}
Let $S'$ denote a subset of states such that $s \in S'$ if and only if (i) $s \neq s^*$, and (ii) there exists a best reply $\widehat{\sigma}_1$ such that $s^*$ is reached within a finite number of periods when the initial state is $s$.
The no-back-loop lemma implies that at least one of the two statements is true:
\begin{enumerate}
  \item Player 1 has no incentive to play actions other than $a^*$ at $s^*$.
  \item Player $1$ has an incentive to play actions other than $a^*$ at $s^*$, and as long as player $1$ plays any such best reply, the state never reaches $S'$ when the initial state is $s^*$.
\end{enumerate}
In the first case, $\mathcal{O}(\{s^*\})=0$, and Lemma \ref{Linflow} implies that $\mathcal{I}(\{s^*\}) \leq \frac{1-\delta}{\delta}$. In the second case, the definition of $S'$ implies that $\mathcal{I}(S')=0$. According to Lemma \ref{Linflow},  $\mathcal{O}(S') \leq \frac{1-\delta}{\delta}$. The definition of $S'$  implies that $\mathcal{I}(\{s^*\}) \leq \mathcal{O}(S')$, so $\mathcal{I}(\{s^*\}) \leq \frac{1-\delta}{\delta}$. According to Lemma \ref{Linflow},
$\mathcal{O}(\{s^*\}) \leq \frac{2(1-\delta)}{\delta}$.
\end{proof}
Lemma \ref{L4.2} implies that the inflow to $S_0$ and the outflow from $S_0$ are both negligible. Lemma \ref{L4.3} implies that for every $S_{j,1}$, either the occupation measure of states in $S_{j,1}$ is negligible, or player $2$ has no incentive to play $b^*$ at $S_{j,1}$. The next lemma shows that
it \textit{cannot} be the case that states in $S_{1}$ have significant occupation measure yet player $2$ has no incentive to play $b^*$ at $S_{j,1}$.
This conclusion generalizes to every $S_{j,k}$, provided that the flow from $S_{k-1}$ to $S_{j,k}$ and that from $S_{j,k}$ to $S_{k-1}$ are both negligible. Iteratively apply Lemma \ref{L4.3} and Lemma \ref{L4.4}, all states except for $s^*$ (the unique state in $S_0$) have negligible occupation measure.
\begin{Lemma}\label{L4.4}
Suppose $u_2$ is such that $b^*$ does not best reply to $\frac{K-1}{K}a^*+ \frac{1}{K} a'$ for every $a' \neq a^*$.
For every $y >0$, there exist $z>0$ and $\underline{\delta} \in (0,1)$ such that for every $\delta > \underline{\delta}$, every Nash equilibrium under $\delta$, and every $k \in \{1,2,...,K\}$. If $\max \{ \mathcal{Q}(S_{k-1} \rightarrow S_{k}), \mathcal{Q}(S_{k} \rightarrow S_{k-1}) \} < y(1-\delta)$, then
$\sum_{s \in S_{k}} \mu(s) \leq  z(1-\delta)$.
\end{Lemma}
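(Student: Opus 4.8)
The plan is to argue by contradiction: assume $\sum_{s\in S_k}\mu(s)$ is not bounded by $z(1-\delta)$ and extract an incompatibility with players' incentives. First I would reduce to the pieces of the partition $S_k=\cup_{j=1}^{J(k)}S_{j,k}$ that player $2$ can distinguish. Since $S_{j,k}\subset S_k$, the cross-flows $\mathcal{Q}(S_{j,k}\to S_{k-1})$ and $\mathcal{Q}(S_{k-1}\to S_{j,k})$ are each dominated by the corresponding flows between $S_k$ and $S_{k-1}$, which are below $y(1-\delta)$ by hypothesis. Hence Lemma \ref{L4.3} applies to every $S_{j,k}$, giving the dichotomy that each piece either has occupation measure of order $1-\delta$ or has its belief ratio (\ref{incentive}) strictly below $K-1$. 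If every piece were of the first kind, summing over the finitely many $j$ would contradict $\sum_{s\in S_k}\mu(s)$ being non-negligible, so some non-negligible piece must satisfy the belief bound.

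For that piece, the ratio below $K-1$ means player $2$ assigns probability strictly below $\tfrac{K-1}{K}$ to player $1$ playing $a^*$. Here I would invoke the standing hypothesis $K<\overline{K}$ — that $b^*$ is not a best reply to $\tfrac{K-1}{K}a^*+\tfrac{1}{K}a'$ for any $a'\neq a^*$ — together with Assumption \ref{Ass1}: by the Topkis monotonicity argument already used in the no-back-loop lemma, any belief placing weight at most $\tfrac{K-1}{K}$ on $a^*$ and the rest on lower actions makes $b^*$ strictly suboptimal, so player $2$'s equilibrium action at $S_{j,k}$ is strictly below $b^*$. Consequently a non-negligible share of the occupation measure sits at states where player $2$ plays below $b^*$, a region where player $1$'s stage payoff is bounded uniformly below his commitment payoff $u_1(a^*,b^*)$.

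To close, I would play this against player $1$'s incentives. Theorem \ref{Theorem1}, in the at-every-history form noted after its statement, guarantees that at every on-path history player $1$'s continuation value is at least his commitment payoff up to a vanishing term, because deviating to $a^*$ forever reaches $s^*$ within $K$ periods and earns at least $u_1(a^*,b^*)$ thereafter. The only conduit by which mass in $S_{\geq k}$ can reach the genuinely high-payoff, low-$k$ region is the crossing $S_k\to S_{k-1}$, whose flow is at most $y(1-\delta)$ by hypothesis. I would make this quantitative through the balance equation (\ref{4.8}) and Lemma \ref{Linflow}, which tie the occupation measure of a block to its boundary flows and the discount factor: a non-negligible mass confined to below-$b^*$ states, able to escape downward only at rate $O(1-\delta)$ per unit of discounted time, cannot simultaneously deliver continuation values near $u_1(a^*,b^*)$ at the histories comprising it. This forces $\sum_{s\in S_k}\mu(s)\leq z(1-\delta)$ for a suitable $z$.

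The hard part is this last step, and specifically ruling out that player $1$ sustains his guaranteed continuation value not by descending through the $S_k\to S_{k-1}$ boundary but by cycling upward into $S_{k+1}$ and beyond — regions not controlled by this lemma's hypotheses. I would address it by exploiting the order in which Lemmas \ref{L4.3} and \ref{L4.4} are applied: in that induction the lower blocks $S_0,\dots,S_{k-1}$ already carry negligible measure, so the only place player $1$ can amass the high payoffs needed to reach the commitment level is by actually crossing back down, which the flow bound forbids. Two subtleties deserve care: that player $2$'s belief need not track the empirical frequency of $a^*$ (precisely the difficulty that Lemma \ref{L4.3}, rather than a naive counting argument, is built to handle); and that the stage payoff wherever player $2$ plays below $b^*$ is bounded away from $u_1(a^*,b^*)$ uniformly in $\delta$, which follows from $u_1$ being strictly increasing in player $2$'s action.
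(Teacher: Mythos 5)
Your first half tracks the paper's proof: you restrict to the partition elements $S_{j,k}$, observe that their cross-flows with $S_{k-1}$ inherit the $y(1-\delta)$ bound, invoke Lemma \ref{L4.3} to extract a union $S_k'$ of non-negligible occupation measure at which player $2$'s belief puts weight below $\tfrac{K-1}{K}$ on $a^*$, and conclude that player $2$ plays strictly below $b^*$ there so that player $1$'s stage payoff on $S_k'$ is bounded below $u_1(a^*,b^*)$. That is exactly how the paper begins.

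The gap is in how you close. Your contradiction rests on the mass in $S_k'$ being \emph{confined} except for a downward leak of order $1-\delta$, but it is not confined: it can exit upward into $S_{k+1},\dots,S_K$, and at this stage of the induction nothing controls player $2$'s behavior there --- she may play $b^*$ or above at states with more than $k$ non-$a^*$ actions in the window, and player $1$'s stage payoff at such states can even exceed $u_1(a^*,b^*)$ (e.g.\ $u_1(\underline{a},b^*)>u_1(a^*,b^*)$). Your proposed fix --- that since $S_0,\dots,S_{k-1}$ already carry negligible measure, high payoffs can only be amassed by crossing down through $S_k\to S_{k-1}$ --- is therefore false; the guaranteed continuation value can be delivered entirely from the upper blocks, and the mass in $S_k'$ can churn in and out of $S_{k+1}$ at a non-negligible rate while its occupation measure stays bounded away from $O(1-\delta)$. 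The paper's closing argument is the piece you are missing: it uses Theorem \ref{Theorem1} to bound the number $M$ of consecutive periods a best-replying player $1$ can spend earning at most $v'=\max_{a\in A,\,b\prec b^*}u_1(a,b)<u_1(a^*,b^*)$, which converts $\sum_{s\in S_k'}\mu(s)$ into a non-negligible outflow $\mathcal{O}(S_k')$; flow balance (Lemma \ref{Linflow}) plus the negligibility of the inflows from $S_{k-1}$ and $S_k\setminus S_k'$ then forces a non-negligible inflow from $S_{k+1}$, which requires player $1$ to play the most costly action $a^*$ at some state $s\in S_{k+1}$. The contradiction is a \emph{local deviation} at that entry state: the excursion into $S_k'$ lasts at most $K-1$ periods (again by Theorem \ref{Theorem1}), collects only below-$b^*$ payoffs, and returns to $S_{k+1}$, so playing a cheaper action at $s$ and skipping the excursion while rejoining the same continuation is strictly profitable. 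Without this step, or some substitute that genuinely handles upward escape, your argument does not go through.
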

The proof is relegated to Appendix \ref{subB.1}. The intuition is that when both $\mathcal{Q}(S_{k-1} \rightarrow S_{k})$ and $\mathcal{Q}(S_{k} \rightarrow S_{k-1})$ are negligible, $\mathcal{Q}(S_{k-1} \rightarrow S_{j,k})$ and $\mathcal{Q}(S_{j,k} \rightarrow S_{k-1})$ are also negligible for every $j$. Suppose by way of contradiction that
$\sum_{s \in S_{k}} \mu(s)$ is bounded away from $0$, then Lemma \ref{L4.3} implies that at every $S_{j,k}$ where
states in $S_{j,k}$ occur with  occupation measure bounded above $0$, player $2$ has no incentive to play $b^*$ at $S_{j,k}$. Since the flow from $S_{k}$ to $S_{k-1}$ is negligible, the flow from $S_k$ to $S_{k+1}$ must be bounded above $0$.
Lemma \ref{Linflow} then implies that the flow from $S_{k+1}$ to $S_k$ is also bounded above $0$. This implies that in equilibrium, player $1$ will take the most costly action $a^*$ at some states in $S_{k+1}$ in order to reach states in $S_{j,k}$ where player $2$ has no incentive to play $b^*$, i.e., player $1$'s payoff is bounded below $u_1(a^*,b^*)$ in states that belong to $S_{j,k}$.
This is suboptimal for player $1$ since he can secure payoff $u_1(a^*,b^*)$ by playing $a^*$ in every period, which implies that
for every $k \geq 1$, $\sum_{s \in S_{k}} \mu(s) \approx 0$ in all equilibria.

\section{Extensions}\label{sec4}
Section \ref{sub4.1} examines other prior beliefs about calendar time, for example, when the game's continuation probability is different from player 1's discount factor. Section \ref{sub4.2} extends my results to games where $|B| \geq 3$.
Section \ref{sub4.4} extends my results to situations where each player $2$ can only observe the summary statistics of some noisy signals about player 1's actions.
Section \ref{sub4.3} extends my results to situations where player 2 observes coarse summary statistics about player 1's last $K$ actions and provides conditions under which coarsening the summary statistics observed by the consumers can improve their welfare.

\subsection{Player 2's Prior Belief about Calendar Time}\label{sub4.1}
The parameter $\delta$ plays two roles in my baseline model: It is both the patient player's discount factor as well as
the probability with which the game continues after each period. The latter affects the short-run players' incentives
through their prior beliefs about calendar time.

I extend my results to environments where the patient player's discount factor do not coincide with the game's continuation probability, and more generally, to environments where the short-run players have alternative prior beliefs about calendar time.
For example, suppose the short-run players' prior belief assigns probability $(1-\overline{\delta})\overline{\delta}^t$ to calendar time being $t \in \mathbb{N}$, and the patient player's discount factor is $\delta$, which I assume is no more than $\overline{\delta}$, i.e., $1>\overline{\delta}>\delta>0$.
It models situations where
the patient player discounts future payoffs for two reasons. First, he is indifferent between receiving one unit of utility in period $t$ and receiving $\delta/\overline{\delta}$ unit of utility in period $t-1$. Second,
the game ends with probability $1-\overline{\delta}$ after each period.

My \textit{no-back-loop lemma} extends to this setting. This is because that conclusion applies independently of player 1's discount rate and the game's continuation probability. The statement of Theorem \ref{Theorem1} is modified as follows: Suppose $(u_1,u_2)$ satisfies Assumption \ref{Ass1} and the continuation probability $\overline{\delta}$ is large enough such that all of player 2's best replies to the mixed action
     $\big\{1-\frac{(1-\overline{\delta})(1-\pi_0)}{\pi_0}
     \big\} a^* +  \frac{(1-\overline{\delta})(1-\pi_0)}{\pi_0} \underline{a}$
are no less than $b^*$, then player 1's payoff in every equilibrium is at least
   $ (1-\delta^K) u_1(a^*,\underline{b})+\delta^K u_1(a^*,b^*)$.

Hence, Theorem \ref{Theorem1} applies to \textit{any} discount factor of the patient player, as long as the probability with which the game continues after each period is above some cutoff. Moreover, that cutoff depends \textit{only} on the prior probability of commitment type $\pi_0$ and player 2's stage-game payoff function $u_2$. Intuitively, since the patient player's equilibrium strategy satisfies the no-back-loop property, there is at most one period over the infinite horizon in which he has a positive reputation yet he plays an action other than $a^*$.
Therefore, every short-run player has a strict incentive to play $b^*$ after observing $a^*$ in the last $K$ periods when her prior belief assigns a low enough probability to each calendar time. The latter is the case when $\overline{\delta}$ is large.

In order to state Theorem \ref{Theorem2} in this general setting, I redefine $F^{\sigma} (a,b)$ based on
the game's continuation probability $\overline{\delta}$:
\begin{equation*}
   F^{\sigma}(a,b) \equiv \mathbb{E}^{\sigma}\Big[
   \sum_{t=0}^{+\infty} (1-\overline{\delta})\overline{\delta}^t \mathbf{1}\{a_t=a,b_t=b\}
  \Big].
\end{equation*}
The motivation for defining $F^{\sigma}(a,b)$ in this way is that under the interpretation that the game ends after the patient player exits, the (normalized) expected sum of the short-run players' payoff is:
\begin{equation}\label{welfare}
U_2^{\sigma} \equiv \mathbb{E}^{\sigma}\Big[
   \sum_{t=0}^{+\infty} (1-\overline{\delta})\overline{\delta}^t u_2(a_t,b_t)
\Big] = \sum_{(a,b) \in A \times B} F^{\sigma}(a,b) u_2(a,b).
\end{equation}
Hence, $U_2^{\sigma}$ depends on the strategy profile $\sigma$ only through $\{F^{\sigma}(a,b)\}_{(a.b) \in A \times B}$.
I restate Theorem \ref{Theorem2}. Suppose $(u_1,u_2)$ satisfies Assumption \ref{Ass1} and the discount factor $\delta$ is larger than some cutoff $\underline{\delta} \in (0,1)$,
\begin{enumerate}
\item Suppose $a^*$ is player 1's optimal pure commitment action and $b^*$ does not best reply to the mixed action $\frac{K-1}{K}a^*+ \frac{1}{K} a'$ for every $a' \neq a^*$. Then there exists a constant $C \in \mathbb{R}_+$ that is independent of $\delta$ and $\overline{\delta}$ such that
$F^{\sigma}(a^*,b^*) \geq 1-(1-\overline{\delta})C$
for every Nash equilibrium $\sigma$ under discount factor $\delta$.
\item Suppose either $a^*$ is not player 1's optimal pure commitment action, or $b^*$ best replies to $\frac{K-1}{K}a^*+ \frac{1}{K} a'$ for some $a' \neq a^*$.
There exist $\underline{\delta} \in (0,1)$ and $\eta>0$ such that for every $\delta > \underline{\delta}$, there exists an equilibrium $\sigma$ such that $\sum_{b \in B} F^{\sigma}(a^*,b) \leq 1-\eta$.
\end{enumerate}

Hence, the occupation measure of $(a^*,b^*)$ is arbitrarily close to $1$ in all equilibria even when player 1's discount factor $\delta$ is bounded away from $1$. The important parameter is $\overline{\delta}$, the game's continuation probability, which affects player 2's belief about calendar time.
When $K$ is small, the total occupation measure of action profiles other than  $(a^*,b^*)$ is bounded above by some linear function of $1-\overline{\delta}$, i.e., it vanishes to zero as long as the game continues after each period with probability arbitrarily close to $1$.

\paragraph{Remark:} My no-back-loop lemma is about the patient player's best reply, which holds for \textit{all} prior beliefs about calendar time. The proofs of my theorems only use two properties of player 2's prior belief about calendar time. The proof of Theorem \ref{Theorem1} uses the property that the prior probability of each individual calendar time being close to $0$, in which case the no-back-loop lemma implies that player 2 believes that $a^*$ will occur with probability close to $1$ in the current period after observing $a^*$ being played in all of the last $K$ periods. The proof of
Theorems \ref{Theorem2} and \ref{Theorem3} uses the property that the ratio between the prior probability of $t$ and that of $t+1$ being close to $1$ for every $t \in \mathbb{N}$. If this is the case, then in an equilibrium where the rational type player $1$ plays $a' (\neq a^*)$ once every $K$ periods, player 2's posterior belief assigns probability close to $1/K$ to player $1$'s current period action being $a'$ after observing $a'$ occurred only once in the last $K$ periods.

\subsection{Games where $|B| \geq 3$}\label{sub4.2}
I generalize my theorems to games where player $2$ has three or more actions, i.e., $|B| \geq 3$. In contrast to games where $|B|=2$, one cannot rank any pair of player $2$'s mixed actions via FOSD. For example, when $B =\{b_1,b_2,b_3\}$ with $b_1 \succ_B b_2 \succ_B b_3$, mixed actions $\frac{1}{2}b_1+ \frac{1}{2}b_3$ and $b_2$ cannot be ranked via FOSD. Nevertheless, this issue does not arise in games where player $2$'s \textit{mixed-strategy best replies} can be ranked according to FOSD, which is satisfied by most of the games studied in the reputation literature.
Let
\begin{equation}\label{bestreplyset}
\mathcal{B}^* \equiv \Big\{\beta \in \Delta (B) \Big| \textrm{ there exists } \alpha \in \Delta (A) \textrm{ such that } \beta \textrm{ best replies to } \alpha \Big\}.
\end{equation}
\begin{Assumption}\label{Ass2}
For every $\beta,\beta' \in \mathcal{B}^*$, either $\beta \succeq_{FOSD} \beta'$ or $\beta' \succeq_{FOSD} \beta$ or both.
\end{Assumption}
Assumption \ref{Ass2} has no bite when $|B|=2$. In games where $|B| \geq 3$,
Assumption \ref{Ass2} is satisfied in the class of games studied by Liu and Skrzypacz (2014): They assume that player $2$ has a unique best reply to every $\alpha \in \Delta (A)$, which is the case when $u_2(a,b)$ is strictly concave in $b$. In the games studied by their paper, all  actions in $\mathcal{B}^*$ are pure, in which case any pair of them can be ranked according to FOSD.

A more general sufficient condition for Assumption \ref{Ass2} is that player $2$ has a \textit{single-peaked preference} over her actions regardless of her belief about player $1$'s action, i.e., $u_2(a,b)$ is strictly quasi-concave in $b$. This is because when player $2$'s preference is single-peaked, either she has a unique best reply to $\alpha$, or she has two pure-strategy best replies to $\alpha$ which are adjacent elements in set $B$. Quah and Strulovici (2012) provide a full characterization of this sufficient condition using
the well-known single-crossing property.

I restate
the no-back-loop lemma for games that satisfy Assumptions \ref{Ass1} and \ref{Ass2}: For every $\sigma_2: \mathcal{H}_2 \rightarrow \mathcal{B}^*$ and pure strategy $\widehat{\sigma}_1: \mathcal{H}_1 \rightarrow A$ that best replies to $\sigma_2$, there is \textit{no} $h^t \in \mathcal{H}_1(\widehat{\sigma}_1,\sigma_2) \bigcap \mathcal{H}_1^*$
such that strategy $\widehat{\sigma}_1$ plays an action that is not $a^*$ at $h^t$ and reaches a history in $\mathcal{H}_1(\widehat{\sigma}_1,\sigma_2) \bigcap \mathcal{H}_1^*$ in the future.
One can further strengthen the no-back-loop result to every $\sigma_2 : \mathcal{H}_2 \rightarrow \Delta (B)$ and pure strategy $\widehat{\sigma}_1: \mathcal{H}_1 \rightarrow A$ such that $\widehat{\sigma}_1$ best replies to $\sigma_2$ and $\sigma_2(h_2^t) \in \mathcal{B}^*$ for every $h_2^t$ that occurs with positive probability under $(\widehat{\sigma}_1,\sigma_2)$.

My proof of the no-back-loop result in Appendix \ref{secA} covers these cases.
Intuitively, since we only consider player 2's strategies where she plays a potentially mixed action that belongs to $\mathcal{B}^*$ at every history, player 2's actions at any pair of histories can be ranked according to FOSD.
Assumption \ref{Ass2} is used when arguing that player $2$'s action at the white circle weakly FOSDs her action at the green circle, since the Topkis Theorem only implies that her action at the green circle \textit{cannot} strictly FOSD her action at the white circle.
After establishing this generalized no-back-loop lemma, Theorem \ref{Theorem1} can be shown using the same argument as that in Section \ref{sub3.2}. This is because at every history that occurs with positive probability,
player $2$'s mixed action at that history must be a best reply to some $\alpha \in \Delta (A)$, which means that her action belongs to
 $\mathcal{B}^*$.
The proof of Theorems \ref{Theorem2} and \ref{Theorem3} in Appendix \ref{secB} covers games where $|B| \geq 3$ and $u_2$
satisfies Assumption \ref{Ass2}.

\subsection{Noisy Information about Player 1's Actions}\label{sub4.4}
My baseline model rules out imperfect monitoring by assuming that each consumer observes the number of times that the seller took each of his actions in the last $K$ periods. In practice, consumers learn from previous consumers' experiences or from online ratings, which might be \textit{noisy signals} of the seller's actions.

I extend my theorems to environments where the consumers receive \textit{noisy signals} about the seller's past actions.  
Formally, let $\widetilde{a}_t \in A$ be a signal of player $1$'s action $a_t$ such that $\widetilde{a}_t=a_t$ with probability $1-\varepsilon$, and $\widetilde{a}_t$ is drawn from some distribution $\alpha \in \Delta (A)$ with probability $\varepsilon$. 
Player $1$ observes the entire history $h^t=\{a_s,b_s,\widetilde{a}_s\}_{s=0}^{t-1}$. Player $2_t$  only observes the number of times that each \textit{signal realization} occurred in the last $\min \{t,K\}$ periods. In the product choice game, when the seller exerts high effort, the consumer will have a good experience with probability $1-\varepsilon \alpha(H)$, and when the seller exerts low effort, the consumer will have a bad experience with probability $1-\varepsilon \alpha(L)$.
My baseline model focuses on the special case where
$\varepsilon=0$, i.e., $\widetilde{a}_t=a_t$ with probability $1$.

I show that a version of my no-back-loop lemma holds for all  small enough $\varepsilon$.
For every $t \geq K$, player $1$'s incentive in period $t$ depends on the history only through $(\widetilde{a}_{t-K},...,\widetilde{a}_{t-1})$. Let $S \equiv A^K$ be the set of signal vectors of length $K$ with a typical element denoted by $s \in S$, which I call a \textit{state}. Without loss of generality, I focus on player $1$'s strategies that are measurable with respect to the state, i.e., $\sigma_1: S \rightarrow \Delta (A)$. I say that a pure strategy $\widehat{\sigma}_1: S \rightarrow A$ induces an \textit{$\varepsilon$-back-loop} if there exist a subset of states $\{s_0,...,s_{M}\}  \subset S$ such that $s_0=s_M=(a^*,...,a^*)$ and for every $i \in \{0,1,...,M-1\}$, if player $1$ plays $\widehat{\sigma}_1 (s_i)$ in state $s_i$, then it reaches state $s_{i+1}$ in the next period with probability more than $1-\varepsilon$.
\begin{Corollary}\label{cor5}
There exists $\varepsilon>0$ such that for every $\alpha \in \Delta (A)$ and $\sigma_2: \mathcal{H}_2 \rightarrow \Delta (B)$, if a pure strategy $\widehat{\sigma}_1$ best replies to $\sigma_2$, then $\widehat{\sigma}_1$ does not induce any $\varepsilon$-back-loop. If player $2$ has three or more actions and $u_2$ satisfies Assumption \ref{Ass2}, then this conclusion holds for all $\sigma_2: \mathcal{H}_2 \rightarrow \mathcal{B}^*$.
\end{Corollary}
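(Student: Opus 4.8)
The plan is to prove Corollary \ref{cor5} as a perturbation of the noiseless no-back-loop lemma, whose proof in Appendix \ref{secA} delivers a \emph{strict} incentive gap that depends only on the stage game and not on $\varepsilon$. At $\varepsilon=0$ the signal equals the action and the state $s=(\widetilde a_{t-K},\dots,\widetilde a_{t-1})$ coincides with player $1$'s last $K$ actions, so an $\varepsilon$-back-loop with $s_0=s_M=(a^*,\dots,a^*)$ is exactly the configuration ruled out there: player $1$ must play some $a'\neq a^*$ at $s_0$ (otherwise, playing $a^*$ keeps the state at $s^*$ with probability at least $1-\varepsilon$, so the loop could not leave $s^*$), and he must later restore the state to $s^*$. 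I would first reduce, as in the noiseless case, to pure strategies $\widehat\sigma_1:S\to A$ that are measurable with respect to the state, which is without loss because player $2$ observes only the signal statistics and player $1$'s period-$t$ incentives (for $t\ge K$) depend on the history only through $s$.

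Next I would run the two-deviation argument along the hypothesized cycle. Let the white-circle state be the state on the cycle from which $\widehat\sigma_1$ plays $a^*$ and returns to $s^*$, and let the green-circle state be $s_0=s^*$, at which $\widehat\sigma_1$ plays $a'\neq a^*$. Starting from the white-circle state I would define the noisy analogues of Deviation A (play $a'$ once, then follow $\widehat\sigma_1$) and Deviation B (play $a''$, then play $a^*$ for $K-1$ periods to return, then follow $\widehat\sigma_1$). The central device is a coupling: the noisy play path coincides with the noiseless one until the first period in which the signal is drawn from $\alpha$ rather than equalling the action, an event whose probability in any fixed finite window is of order $\varepsilon$. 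Since the stage payoffs are bounded, each continuation value under $\widehat\sigma_1$, Deviation A, and Deviation B equals its $\varepsilon=0$ counterpart up to an additive error of order $\varepsilon$, uniformly in $\alpha$ and $\sigma_2$. The FOSD/Topkis step is unaffected: along the red-line portion of the cycle player $2$ cannot distinguish the order of the signals, so her action is constant there, and her actions at the green- and white-circle states are FOSD-comparable---automatically when $|B|=2$, and by Assumption \ref{Ass2} together with $\sigma_2(\mathcal H_2)\subseteq\mathcal B^*$ when $|B|\ge 3$. Strict increasing differences and the strict monotonicity of $u_1$ in $b$ in Assumption \ref{Ass1} then yield the same strict inequality as in the noiseless proof, with a gap $\gamma>0$ that depends only on $(u_1,u_2)$.

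Combining the two steps, for $\varepsilon$ small enough that the order-$\varepsilon$ perturbation is smaller than $\gamma$, at least one of Deviation A and Deviation B remains strictly profitable, contradicting the optimality of $\widehat\sigma_1$; hence no best reply can induce an $\varepsilon$-back-loop. Because there are finitely many pure strategies $S\to A$ and hence finitely many candidate cycles, the threshold $\varepsilon$ can be chosen once and for all, independently of $\alpha$ and $\sigma_2$, giving the uniform $\varepsilon$ in the statement. I expect the main obstacle to be the uniform control of continuation values \emph{off} the intended cycle: when a noise shock pushes the realized state away from $\{s_0,\dots,s_M\}$, player $1$'s subsequent play and payoffs are unrestricted, so I must bound the aggregate contribution of all such excursions by order $\varepsilon$ simultaneously for the three continuation values being compared, with a bound independent of the (finite but possibly long) cycle length $M$ and of $\sigma_2$. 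The coupling---agreeing until the first shock and bounding the post-shock discrepancy by the stage-payoff range times the probability of an early shock---is the estimate I would use, and checking that it leaves the strict gap $\gamma$ intact is where the real work lies. The extension to $|B|\ge 3$ requires no new idea, since Assumption \ref{Ass2} supplies exactly the FOSD comparability of best replies in $\mathcal B^*$ that the binary-action case enjoys for free.
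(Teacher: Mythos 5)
Your overall strategy---treating Corollary \ref{cor5} as a perturbation of the no-back-loop argument, running Deviations A and B along the hypothesized cycle, and invoking finiteness of the set of state-measurable pure strategies (hence of candidate cycles, each of length at most $|A|^K$) to get a uniform threshold---is the route the paper takes: its proof in Online Appendix E is explicitly described as a variant of the Appendix \ref{secA} argument. One step is actually easier than you suggest: the green-circle/white-circle comparison that delivers $\beta(a'')\succeq_{FOSD}\beta^*$ needs no $\varepsilon$-approximation at all. In the noisy model the distribution over next period's state after playing a given action depends only on that action and on the last $K-1$ signals, which are $(a^*,\dots,a^*)$ at both circles; hence the continuation-value terms in the two one-shot incentive inequalities cancel \emph{exactly}, and the Topkis/FOSD step is verbatim the noiseless one. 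Since this is the only place supermodularity and Assumption \ref{Ass2} enter, it is worth making the exactness explicit rather than folding it into the coupling.

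The substantive issue is the error estimate you propose for the Deviation A/B comparisons: bounding the post-shock discrepancy by ``the stage-payoff range times the probability of an early shock.'' In absolute (un-annuitized) terms the strict gap produced by the Deviation B comparison is of order $(1-\delta)\cdot\min_{a\neq a^*,\,b}\{u_1(a,b)-u_1(a^*,b)\}$, i.e., it vanishes as $\delta\rightarrow 1$, whereas your coupling error is of order $K\varepsilon\cdot\max|u_1|$, which does not. So the crude bound only yields a threshold $\varepsilon$ of order $1-\delta$; it does not deliver an $\varepsilon$ that works for all discount factors, which is what the baseline no-back-loop lemma achieves and what the applications in Section \ref{sub4.4} (where $\delta\rightarrow 1$ and $\varepsilon$ small are taken as separate limits) require. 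The fix is to sharpen the post-shock bound: because any state in $S$ can be reached from any other within $K$ periods with probability at least $(1-\varepsilon)^K$, all continuation values under a best reply lie within $O\bigl(K(1-\delta)\max|u_1|\bigr)$ of one another, so the discrepancy caused by an excursion off the intended cycle is $O(K(1-\delta))$ rather than $O(1)$, and the total coupling error becomes $O\bigl(K^2\varepsilon(1-\delta)\bigr)$---the same $(1-\delta)$ order as the gap, with the ratio controlled by $\varepsilon$ alone. With that refinement (and the analogous one for the cycle of length at most $|A|^K$ in the Deviation A comparison), your argument closes and the uniformity over $\alpha$, $\sigma_2$, and $\delta$ follows as you describe.
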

The proof is in Online Appendix E, which is similar to that of the no-back-loop lemma in the baseline model. This corollary implies that although back loops may occur with positive probability due to noisy signals, player $1$ \textit{has no intention} to induce any back loop as long as he plays a best reply. This implies that when $\varepsilon$ is close to $0$, the probability of back loops is also close to $0$.

The same argument can be used to show Theorem \ref{Theorem1}, that player $1$ can guarantee himself a payoff of (approximately) at least $u_1(a^*,b^*)$ when $\delta$ is close enough to $1$ and $\varepsilon$ is small enough. This is because conditional on observing
$(\widetilde{a}_{t-K},...,\widetilde{a}_{t-1})=(a^*,...,a^*)$,
 the probability that player $2$ assigns to player $1$ playing $a^*$ in the current period is close to $1$. This implies that player $2$'s action is at least $b^*$, and therefore, player $1$'s payoff is approximately $u_1(a^*,b^*)$ when he plays $a^*$ in every period. Theorem \ref{Theorem2} can also be extended to environments where $\varepsilon$ is small, i.e., the same cutoff $\overline{K}$ applies as long as $\varepsilon$ is small enough. Intuitively, this is because when $\varepsilon$ is small, the occupation measure of states other than $(\widetilde{a}_{t-K},...,\widetilde{a}_{t-1})=(a^*,...,a^*)$ is close to $0$ since one can show that (i) when $K$ is below the cutoff, the probability that other states being generated by player $1$'s deliberate behavior is close to $0$, and (ii) when $\varepsilon$ is close to $0$, the probability that other states being generated by noise in player 2's signal is also close to $0$.

\subsection{Learning from Coarse Summary Statistics}\label{sub4.3}
This section studies an extension where the consumers can only learn from \textit{coarse summary statistics} about the seller's last $K$ actions.
Following Acemoglu, Makhdoum, Malekian and Ozdaglar (2022), a \textit{coarse summary statistics} is characterized by a partition of $A \equiv A_1\cup...\cup A_n$, so that for every $t \in \mathbb{N}$, the short-run player who arrives in period $t$ only observes the number of times player 1's actions belong to each partition element $A_i$ in the last $\min \{t,K\}$ periods.
My baseline model considers the finest partition of $A$, i.e., player $2$ observes the number of times that player $1$ took each of his actions in the last $K$ periods. Under the coarsest partition of $A$, player 2 receives no information about player 1's past actions.

This extension fits when the consumers do not communicate precise information about the seller's action to future consumers. Instead, they can only describe which of the several broad categories the seller's action belongs to. For example, each consumer only tells future consumers whether she had a good experience or a bad experience, but she finds it too time-consuming to precisely describe the seller's action (e.g., exactly how good or how bad), particularly when the cardinality of $A$ is large.

Since there exists a complete order $\succ_A$ on $A$ and $u_1(a,b)$ is strictly decreasing in $a$, for every partition element $A_i$, the strategic-type player $1$ will never choose action $a \in A_i$ if there exists $a' \in A_i$ that satisfies $a \succ_A a'$. Hence, analyzing the game under an $n$-partition $\{A_1,...,A_n\}$ of $A$ is equivalent to analyzing a game where player 1's action set only contains the following $n$ actions: $\{\min A_1,...,\min A_n\}$.

When players' stage-game payoffs satisfy Assumption \ref{Ass1}, player 2 has no incentive to play $b^*$ unless player 1 plays $a^*$ with positive probability. When the prior probability of commitment type $\pi_0$ is small enough such that player 2 has no incentive to play $b^*$ when player 1 plays $a^*$ with probability no more than $\pi_0$, the strategic-type player 1 has no incentive to play $a^*$ and player $2$ has no incentive to play $b^*$ unless the partition element that contains $a^*$ is a singleton. If we partition $A$ according to $A =\{a^*\} \bigcup \Big( A\backslash\{a^*\}\Big)$, i.e., consumers only observe the number of times the seller chose $a^*$ in the last $K$ periods but cannot distinguish other actions, then consumers may receive a higher welfare under some intermediate $K$. Intuitively,  such a partition helps the seller to credibly commit not to take any action other than his commitment action $a^*$ and his lowest-cost action $\underline{a} \equiv \min A$. This provides consumers a stronger incentive to punish the seller after the seller loses his reputation, since consumers know that the seller will take the lowest action as long as he does not take the highest action. Such an effect
motivates the seller to play $a^*$ in every period.
\begin{Corollary}\label{Cor4}
Suppose players' stage-game payoffs $(u_1,u_2)$ satisfy Assumptions \ref{Ass1} and \ref{Ass2},
\begin{enumerate}
  \item If the partition element that contains $a^*$ is not a singleton, then the discounted frequency with which the strategic-type of player $1$ plays $a^*$ is $0$ in all Nash equilibria.
  \item If the partition element that contains $a^*$ is a singleton (without loss of generality, let $A_1 \equiv \{a^*\}$), then when $\delta > \underline{\delta}(\pi_0)$, the strategic-type player 1's payoff is at least $(1-\delta^K) u_1(a^*,\underline{b})+\delta^K u_1(a^*,b^*)$ in every Nash equilibrium. Furthermore, there exists an integer $\overline{K} \in \mathbb{N}$ such that
  \begin{itemize}
    \item[(i)] There exists $C \in \mathbb{R}_+$ that is independent of $\delta$ such that for every  $1 \leq K < \overline{K}$, we have
    $F^{\sigma}(a^*,b^*) \geq 1-(1-\delta)C$ for every Nash equilibrium $\sigma$ under $K$ and $\delta$.
    \item[(ii)] There exists $\eta>0$ such that for every $K \geq \overline{K}$, there exists $\underline{\delta} \in (0,1)$ such that for every $\delta > \underline{\delta}$, there exists a PBE such that $\sum_{b \in B} F^{\sigma}(a^*,b) \leq 1-\eta$.
  \end{itemize}
\end{enumerate}
\end{Corollary}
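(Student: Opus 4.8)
The plan is to reduce the game with a coarse $n$-partition $\{A_1,\dots,A_n\}$ to the baseline model played on the smaller action set $\widetilde{A} \equiv \{\min A_1,\dots,\min A_n\}$, and then invoke Theorems \ref{Theorem1} and \ref{Theorem2}. The key observation, already noted before the corollary, is that every short-run player's observation depends on player $1$'s past actions only through which partition cell each action belongs to. Since player $1$'s incentives therefore depend on the history only through the partition counts of his last $K$ actions, I would first record that it is without loss to take $\sigma_1$ measurable with respect to these counts (the same reduction used in Section \ref{sub4.4}). Fixing any $\sigma_2$, if the strategic type ever places probability on some $a \in A_i$ with $a \neq \min A_i$ at a reachable state, then shifting that mass to $\min A_i$ leaves every future observation, and hence $\sigma_2$'s response at every history, unchanged, while strictly raising player $1$'s stage payoff because $u_1$ is strictly decreasing in his own action and $a \succ_A \min A_i$. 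Thus in every best reply, and in particular in every Nash equilibrium, the strategic type plays only actions in $\widetilde{A}$.

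Part 1 is then immediate. If the cell $A_i$ containing $a^*$ is not a singleton, then because $a^* = \max A$ there is some $a' \in A_i$ with $a^* \succ_A a'$, so $\min A_i \neq a^*$. The reduction shows the strategic type never plays $a^*$ (it plays the cheaper $\min A_i$ to generate the same signal), so its discounted frequency of $a^*$ is $0$ in all Nash equilibria.

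For Part 2, suppose $A_1 = \{a^*\}$. I would verify that the induced game on $\widetilde{A}$ is exactly an instance of the baseline model: the commitment type plays $a^* = \min A_1 = \max \widetilde{A}$ in every period, the strategic type chooses from $\widetilde{A}$, and player $2$'s partition observation coincides with the count of each action in $\widetilde{A}$. It then remains to check that $(u_1,u_2)$ restricted to $\widetilde{A} \times B$ inherits Assumptions \ref{Ass1} and \ref{Ass2}. Strict monotonicity and strictly increasing differences are preserved under passing to the subset $\widetilde{A} \subseteq A$, since the defining inequalities hold for all pairs in $A$ and in particular for all pairs in $\widetilde{A}$. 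For Assumption \ref{Ass2}, the relevant best-reply set only shrinks, since every $\beta$ that best replies to some $\alpha \in \Delta(\widetilde{A})$ also best replies to that same $\alpha \in \Delta(A)$, so FOSD-comparability of $\mathcal{B}^*$ in the original game implies it on the reduced set. Finally $\min A \in \widetilde{A}$, so $\underline{a}$ and hence $\underline{b}$ are unchanged, and the commitment payoff and the cutoff $\overline{K}$ are those of the reduced game. Applying Theorem \ref{Theorem1} yields the payoff lower bound $(1-\delta^K) u_1(a^*,\underline{b}) + \delta^K u_1(a^*,b^*)$, and applying Theorem \ref{Theorem2} yields statements (i) and (ii).

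The main obstacle is conceptual rather than computational: one must argue carefully that the substitution step is without loss \emph{for the strategic type only}, while the commitment type may still play $a^*$. In Part 1 the commitment type generates the signal ``$A_i$'' by playing $a^*$, whereas the strategic type generates that same signal by playing the cheaper $\min A_i$, and keeping this asymmetry straight is precisely what makes the frequency-zero conclusion valid. A secondary point requiring care is confirming that $a^*$ remains simultaneously the maximal action and the commitment action in $\widetilde{A}$ exactly when its partition cell is a singleton, which is what cleanly separates the two parts of the corollary.
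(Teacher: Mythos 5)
Your proposal is correct and follows essentially the same route as the paper: the paper's own treatment consists of exactly the reduction you describe (the strategic type never plays any $a \in A_i$ with $a \succ_A \min A_i$ because the cheaper action generates the identical signal, so the game is equivalent to the baseline model on $\{\min A_1,\dots,\min A_n\}$), after which Theorems \ref{Theorem1} and \ref{Theorem2} are invoked directly. Your additional checks that Assumptions \ref{Ass1} and \ref{Ass2} are inherited by the restricted action set, and your explicit handling of the commitment-type/strategic-type asymmetry in Part 1, fill in details the paper leaves implicit but do not change the argument.
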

Corollary \ref{Cor4} directly follows from Theorems \ref{Theorem1} and \ref{Theorem2}, the proof of which is omitted in order to avoid repetition. The way to compute the cutoff $\overline{K}$ is similar to that in the baseline model. If (\ref{optimalcommitment}) is violated, then $\overline{K}=1$ and there exists an equilibrium in which the strategic type plays $a^*$ with zero frequency. If (\ref{optimalcommitment}) is satisfied, then $\overline{K}$ is the smallest $K$ such that $b^*$ best replies to the mixed action
$\frac{K-1}{K}a^*+  \frac{1}{K} \min_{j \in \{2,...,n\}} \{ \min A_j \}$.

Hence, for any $K \in \mathbb{N}$,
if there exists a partition of $A$ under which player 1 plays $a^*$ with frequency arbitrarily close to one in all equilibria, then
player 1 plays $a^*$ with frequency arbitrarily close to one in all equilibria under partition $A =\{a^*\} \bigcup \Big( A\backslash\{a^*\}\Big)$. That is to say, if the objective is to maximize the consumers' payoffs in the \textit{worst} equilibrium, then it is optimal to disclose to consumers only the number of times that the seller chose the commitment action $a^*$ in the last $K$ periods.

Corollary \ref{Cor4} also implies that coarsening the summary statistics  \textit{cannot} improve consumers' welfare when $|A|=2$. However, doing so may improve consumers' welfare when $|A| \geq 3$ and $K$ is intermediate such that (i)  $b^*$ does not best reply to $\frac{K-1}{K}a^*+ \frac{1}{K} \underline{a}$, and (ii) $b^*$ is a strict best reply to $\frac{K-1}{K}a^*+ \frac{1}{K} a'$ for some $a' \notin \{a^*,\underline{a}\}$. The intuition is that by pooling actions other than $a^*$, consumers believe that the seller's action is his lowest action $\underline{a}$ whenever his action is not $a^*$. This provides consumers stronger incentives to punish the seller after the seller loses his reputation. This in turn encourages the seller to sustain his reputation.

\section{Discussions}\label{sec5}
My baseline model makes the standard assumptions that (i) $u_1(a,b)$ is strictly increasing in $b$ and is strictly decreasing in $a$, and (ii) $u_2(a,b)$ has strictly increasing differences. These assumptions are satisfied in most of the applications in business transactions and are assumed
 in the reputation models of Mailath and Samuelson (2001,2015), Ekmekci (2011), Liu (2011), Liu and Skrzypacz (2014), among many others.
I also assume that the short-run players \textit{cannot} directly observe calendar time, which is a standard assumption in reputation models with limited memories such as Liu (2011), Liu and Skrzypacz (2014), and Levine (2021).

This section discusses the two assumptions that distinguish my model from the model of
Liu and Skrzypacz (2014). I examine two alternative models which are only one-step-away from both my baseline model and the model of Liu and Skrzypacz (2014).
Section \ref{sub5.1} studies a model where $u_1$ is \textit{supermodular} but
player 2 \textit{can} observe the exact sequence of player 1's last $K$ actions. Section \ref{sub5.2} studies a model where player 2 \textit{cannot} observe the exact sequence of player $1$'s last $K$ actions but $u_1$ is \textit{submodular}.

\subsection{Observing the Exact Sequence of Player 1's Last $K$ Actions}\label{sub5.1}
I modify the assumption that   player $2$ cannot observe the exact sequence of player $1$'s last $K$ actions while maintaining all other assumptions in my baseline model. My theorems in Section \ref{sec3} extend to the case where each player $2$ knows the exact sequence of player $1$'s last $K$ actions with some small probability $\varepsilon$. The proof of this robustness result is similar to the robustness result under noisy information, which I have discussed in Section \ref{sub4.4} as well as Online Appendix E. I omit the details in order to avoid repetition.

Next, I consider the other extreme case where player $2$ can \textit{perfectly} observe the exact sequence of player 1's last $K$ actions, which is assumed in Liu and Skrzypacz (2014).
First, I show that for every $K \geq 1$, there always \textit{exists} a PBE where actions are $(a^*,b^*)$ in every period and players' payoffs are $u_1(a^*,b^*)$ and $u_2(a^*,b^*)$.
This stands in contrast to the conclusion in Liu and Skrzypacz (2014), that reputation cycles (ones where player $1$ milks his reputation when it is strictly positive and later restore his reputation) occur in all equilibria. This difference explains how the supermodularity or submodularity of player $1$'s stage-game payoff affects the dynamics of his behavior.
Second, I show that in the supermodular product choice game, for all large enough $K$,\footnote{The requirement that $K$ being large enough is needed. To see this, note that when $K=1$, whether player 2 can observe the order of player 1's last $K$ actions is irrelevant, in which case all my results in Section \ref{sec3} apply.}
there also exist equilibria with reputation cycles in which players' payoffs are strictly bounded below
$u_1(a^*,b^*)$ and $u_2(a^*,b^*)$. This implies that when players' actions are strategic complements, allowing the consumers to observe the exact sequence of actions can generate bad equilibria if consumers' memories are long enough. This stands in contrast to games with submodular payoffs studied by Liu and Skrzypacz (2014) in which the seller can secure his commitment payoff when $K$ is large enough.
\begin{Proposition}\label{Prop1}
Suppose $(u_1,u_2)$ satisfies Assumption \ref{Ass1} and inequality (\ref{optimalcommitment}), and that
for every $t \in \mathbb{N}$, player $2_t$ can observe player $1$'s last $\min \{t,K\}$ actions including the exact sequence of these actions.
\begin{enumerate}
    \item For every $K \geq 1$, there exists
    $\underline{\delta} \in (0,1)$ such that when $\delta > \underline{\delta}$, there exists
    a PBE where player $1$ obtains payoff $u_1(a^*,b^*)$ and player $2$ obtains payoff $u_2(a^*,b^*)$.
    \item In the product choice game, there exist $\underline{K} \in \mathbb{N}$, $\overline{\pi} \in (0,1)$, and $\eta>0$ such that when $\pi_0 \in (0,\overline{\pi})$ and $K \geq \underline{K}$,\footnote{The requirement that $\pi_0$ being small enough is necessary for the existence of a low-payoff equilibrium. This is because when $\pi_0$ is large enough, player $2$ has a strict incentive to play $T$ upon observing $(H,...,H)$ given that the commitment type plays $H$ in every period, in which case player $1$ can secure his commitment payoff by playing $H$ in every period.} for every $\delta$ large enough,\footnote{
 As in Liu and Skrzypacz (2014), reputation cycles can occur only if $\delta$ is large enough. This is because
restoring reputation requires player 1 to play the strictly dominated action $H$. Hence, he has no incentive to restore his reputation when $\delta$ is low.}  there exists a PBE where player $1$'s  payoff is no more than $u_1(a^*,b^*)-\eta$ and player $2$'s payoff is no more than $u_2(a^*,b^*)-\eta$.
\end{enumerate}
\end{Proposition}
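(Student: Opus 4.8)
The plan is to exhibit the natural trigger profile and verify it with the one-shot-deviation principle. Let the strategic type play $a^*$ after every history, so that on the path the window is always $s^*\equiv(a^*,\dots,a^*)$; let player $2$ play $b^*$ when she observes $s^*$ and $\underline{b}$ at every other window; and at each off-path window $w\neq s^*$ assign player $2$ the belief that player $1$ plays $\underline{a}$, which makes $\underline{b}$ a best reply. On the path both types play $a^*$ at $s^*$, so player $2$'s belief there is Bayes-consistent and $b^*$ is optimal. Every window other than $s^*$ has probability zero, so a PBE is free to specify the punishment belief even though, off path, the strategic type would in fact rebuild to $s^*$ by playing $a^*$.

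The substance is player $1$'s incentive at $s^*$. His most profitable one-shot deviation is to the cheapest action $\underline{a}$, yielding the instantaneous gain $u_1(\underline{a},b^*)-u_1(a^*,b^*)$; the deviating action then occupies player $2$'s window for exactly $K$ periods, during which she plays $\underline{b}$, so rebuilding costs $\sum_{j=1}^{K}\delta^{j}\bigl(u_1(a^*,b^*)-u_1(a^*,\underline{b})\bigr)$. The deviation therefore fails once $u_1(\underline{a},b^*)-u_1(a^*,b^*)<u_1(a^*,b^*)-u_1(a^*,\underline{b})$, and this inequality follows from the hypotheses: strictly increasing differences give $u_1(a^*,b^*)-u_1(a^*,\underline{b})\ge u_1(\underline{a},b^*)-u_1(\underline{a},\underline{b})$, while (\ref{optimalcommitment}) applied with $a=\underline{a}$ and $\underline{b}\in\textrm{BR}_2(\underline{a})$ gives $u_1(a^*,b^*)>u_1(\underline{a},\underline{b})$; adding the two delivers the claim, so even one period of punishment ($K=1$) suffices for $\delta$ close to $1$. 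Because $u_1$ is decreasing in its own action, every flow payoff available against $\underline{b}$ lies below $u_1(a^*,b^*)$, so reaching $s^*$ as quickly as possible is also optimal at the off-path windows; hence the profile is a PBE with on-path payoffs $\bigl(u_1(a^*,b^*),u_2(a^*,b^*)\bigr)$.

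\paragraph{Plan for Statement 2.} The plan is to build a periodic reputation-cycle PBE and exploit the exact-sequence observation to make the cycle self-enforcing. Let the strategic type follow a deterministic pattern that alternates a block of $H$'s with a block of $L$'s, say $H^{K}L^{K}$. Because player $2$ observes the ordered window, she can identify the current phase and plays $T$ during the $H$-phase and $N$ during the $L$-phase—including the all-$H$ window $(H,\dots,H)$ that marks the start of milking, where the strategic type plays $L$—while any off-cycle window (an extra or mistimed $L$, which the exact sequence reveals) triggers reversion to the repeated static Nash outcome $(L,N)$. Supermodularity here works against reputation: since $u_1(L,N)-u_1(H,N)=c_N$ exceeds $u_1(L,T)-u_1(H,T)=c_T$, shirking is most tempting against $N$, and the stretch of distrust is precisely what pins player $1$'s average strictly below $u_1(H,T)$, while the $(L,\cdot)$ periods pin player $2$'s average below $u_2(H,T)$; the block lengths keep both gaps bounded away from zero. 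I would then fix beliefs by Bayes' rule from the cycle and the exponential calendar-time prior, check player $2$'s myopic optimality phase by phase, and verify player $1$'s sequential rationality by the one-shot-deviation principle, using $\delta\to1$ so that the loss from triggering the $(L,N)$ continuation swamps the one-period milking gain $c_T$.

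The hard step, and the reason for both $K\ge\underline{K}$ and $\pi_0<\overline{\pi}$, is the incentive at the all-$H$ window. Player $1$ can always imitate the commitment type by playing $H$ for $K$ consecutive periods, producing a window that player $2$ cannot tell apart from the commitment type's; were player $2$ to answer with $T$ there, this imitation would secure the full commitment payoff $u_1(H,T)$ and unravel the low-payoff equilibrium. The deviation is deterred only if player $2$ instead plays $N$ after $(H,\dots,H)$, i.e. only if her posterior probability of $L$ at that window exceeds $1-x$; arranging this requires the strategic type to milk exactly at the all-$H$ window and requires $\pi_0$ to be small enough that the commitment type's contribution to that window does not swamp the strategic type's milking. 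This is the content of $\overline{\pi}$, and making it operate together with large $K$—where all-$H$ windows generated by the strategic type are comparatively infrequent, so the induced posterior on $L$ must be controlled carefully—is the delicate calculation. The remaining items are routine: $(L,N)$ repeated is a stage-game Nash equilibrium, so the punishment continuation is itself a PBE; off-cycle windows carry zero probability, so their beliefs may be chosen to sustain the punishment; and the phase frequencies then deliver a single $\eta$.
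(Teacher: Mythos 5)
The central gap, present in both statements, is your treatment of off-path play. You write that ``a PBE is free to specify the punishment belief even though, off path, the strategic type would in fact rebuild to $s^*$ by playing $a^*$.'' That is not how the freedom in a PBE works: it concerns beliefs about types and about unobserved past events, not the conjecture about the opponent's continuation strategy, which must coincide with the strategy actually specified for him. If the strategic type plays $a^*$ after every history, then at any window $w\neq s^*$ player $2$ is certain---whatever she believes about the history---that the current action is $a^*$, so $\underline{b}$ is not sequentially rational and the $K$-period punishment you use to deter the deviation at $s^*$ evaporates. The obvious repair, having the strategic type play $\underline{a}$ at off-path windows, fails on the other side: he would strictly prefer to abandon $\underline{a}$ and rebuild, since $u_1(a^*,b^*)>u_1(\underline{a},\underline{b})$ by (\ref{optimalcommitment}). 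This circularity is exactly what bounded recall creates, and the paper resolves it the same way it does in Appendix \ref{subB.2}: at off-path windows player $1$ mixes so as to make player $2$ indifferent between $b^*$ and a lower action, and player $2$ mixes with a probability (the analogue of $\beta$ in (\ref{4.3})) calibrated to player $1$'s indifference; increasing differences is then what makes the resulting chain of inequalities close. Your one-shot-deviation arithmetic at $s^*$ is correct, but it is computed against a punishment that no sequentially rational player $2$ will deliver.

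Statement 2 inherits this problem and adds two more. First, the construction is internally inconsistent: in your first paragraph player $2$ plays $T$ at the all-$H$ window, in which case player $1$'s best deviation is to play $H$ forever, keeping the window at $(H,\dots,H)$ and earning a payoff converging to $u_1(H,T)$, which contradicts the claimed bound; in your second paragraph you correctly conclude she must play $N$ there, but you never revisit whether the $H^{K}L^{K}$ cycle remains optimal for player $1$ once the all-$H$ window is punished (why not stop one period short of it?). Second, ``reversion to the repeated static Nash outcome $(L,N)$'' is not an available punishment: player $2$'s strategy is measurable with respect to the $K$-window, so any deviation is forgotten once player $1$ steers the window back onto the cycle, and for player $2$ to play $N$ at an off-cycle window her belief about his current action there must again agree with his actual, optimal, off-path strategy. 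Finally, the tension you flag between small $\pi_0$ and large $K$ is not a routine leftover but the heart of the argument: each $L$ blocks the all-$H$ window for the next $K$ periods, so the frequency with which the strategic type can both visit that window and play $L$ there is at most $1/(K+1)$, and this is what disciplines how $\overline{\pi}$, the block lengths, and player $2$'s (possibly mixed) response at $(H,\dots,H)$ must be chosen jointly. Without carrying out that calculation the low-payoff equilibrium is not established.
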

The proof is in Online Appendix F.
According to this result, when players' actions are strategic complements and the short-run players can observe the order of the patient player's last $K$ actions, there still \textit{exist} equilibria in which the patient player attains his commitment payoff
$u_1(a^*,b^*)$
and the short-run players attain payoff $u_2(a^*,b^*)$. However, observing the order of actions together with a long enough memory can also generate bad equilibria in which both players' payoffs are bounded below $u_1(a^*,b^*)$ and $u_2(a^*,b^*)$.

Comparing this model to Liu and Skrzypacz (2014), the only difference
is that they assume that $u_1$ is submodular while I assume that $u_1$ is supermodular. This leads to two differences in terms of the results. First, they show that \textit{reputation cycles}, ones where player $1$ milks his reputation when it is strictly positive and later restores his reputation, will occur in all equilibria, while I show that there always exists an equilibrium where reputation cycles do not occur. Second, they show that the patient player can secure his commitment payoff in all equilibria when $K$ is large enough. In contrast, I show that there exist equilibria where the patient player's payoff is bounded below his commitment payoff when $K$ is large enough.

The comparison between this result, the no-back-loop lemma in my baseline model, and the result in
Liu and Skrzypacz (2014) implies that whether reputation cycles occur in equilibrium hinges on whether players' actions are complements or substitutes in the stage game. In particular, (i) reputation cycles will inevitably occur when players' actions are substitutes, (ii) reputation cycles may not occur when their actions are complements, and (iii) reputation cycles will never occur when their actions are complements and the short-run players' decisions depend only on the summary statistics of the patient player's recent actions.

Proposition \ref{Prop1} also implies that the patient player's worst equilibrium payoff is \textit{not monotone} with respect to the quality of the short-run players' information, measured in the sense of Blackwell. Theorem \ref{Theorem1} and Fudenberg and Levine (1989)'s  result imply that the patient player can secure his Stackelberg payoff in all equilibria when the short-run players can observe the entire history of his actions \textit{or} when they can only observe the summary statistics of his last $K$ actions. However, the patient player's lowest equilibrium payoff is bounded below his Stackelberg payoff when the short-run players can observe his actions in the last $K (\geq \underline{K})$ periods as well as the order of these actions.
One can also show that when the short-run players can observe the patient player's last $K$ actions including the order of these actions, the patient player's worst equilibrium payoff is weakly decreasing in $K$, in which case a more informative monitoring technology lowers the patient player's worst equilibrium payoff. The proof is available upon request.

The comparison between Proposition \ref{Prop1} and Theorem \ref{Theorem1} also suggests that allowing the myopic uninformed players to observe the exact sequence of the informed player's actions changes the set of equilibrium payoffs. This stands in contrast to repeated Bayesian games where the uninformed player is patient in which Renault, Solan and Vieille (2013) show that it is sufficient for the uninformed player to check the frequency with which the informed player played each of his actions. This is because when the uninformed players are short-lived, their incentives depend only on their beliefs about the informed player's current-period action, while the uninformed player in Renault, Solan, and Vieille (2013) have \textit{intertemporal incentives}.

%The interesting part is my constructive proof for the second statement. For example, when $K=2$, I construct an equilibrium in which player $1$ plays $H$ when $(a_{t-2},a_{t-1}) =(L,H)$, plays $L$ when $(a_{t-2},a_{t-1}) =(H,L)$, and mixes between $H$ and $L$ when $(a_{t-2},a_{t-1})$ is either $(L,L)$ or $(H,H)$. Player 2 plays $T$ when $(a_{t-2},a_{t-1}) =(L,H)$, plays $N$ when $(a_{t-2},a_{t-1}) =(H,L)$, and mixes between $T$ and $N$ when $(a_{t-2},a_{t-1}) \in \{(L,L),(H,H)\}$. Such a strategy is feasible for the short-run players only when they  can observe the order with which the patient player took his last $2$ actions since their action when $(a_{t-2},a_{t-1}) =(L,H)$ is different from that when $(a_{t-2},a_{t-1}) =(H,L)$.

%The existence of such an equilibrium also implies that there is no guarantee that player 2 plays $T$ with higher probability at histories where player 1 chose $H$ more frequently in the last $K$ periods. In the above equilibrium, player 2 plays $T$ with higher probability when $(a_{t-2},a_{t-1}) =(L,H)$ compared to $(a_{t-2},a_{t-1}) =(H,H)$, and she plays $T$ with higher probability when
%$(a_{t-2},a_{t-1}) =(L,L)$ compared to $(a_{t-2},a_{t-1}) =(H,L)$.

\subsection{The Patient Player Has Submodular Payoffs}\label{sub5.2}
I relax the assumption that
 $u_1 (a,b)$ has strictly increasing differences while maintaining all other assumptions in my baseline model.
I study the case where $u_1(a,b)$ has \textit{weakly decreasing differences}. The only difference between this model and the one in Liu and Skrzypacz (2014) is that player $2$ knows the exact sequence of player $1$'s last $K$ actions in their model while they do not know that in the current model.

Due to the complications in constructing equilibria, I focus on the product choice game, which is also the primary focus of
Liu (2011) and Liu and Skrzypacz (2014).
The weakly decreasing difference assumption translates into $c_T \geq c_N$. I show that when $\delta$ is large enough,
(i) the no-back-loop lemma fails, i.e., there exists  a best reply of the patient player in which he milks his reputation when it is strictly positive and then restores his reputation, and (ii) if in addition, that the prior probability of commitment type $\pi_0$ is not too large,
there exist equilibria where player $1$'s payoff being bounded below his commitment payoff when $u_1(a,b)$ is \textit{sufficiently submodular}, i.e., $c_T$ is large enough relative to $c_N$.
\begin{Proposition}\label{Prop2}
In the product choice game where $u_1(a,b)$ has weakly decreasing differences.
\begin{enumerate}
    \item For every $K \geq 1$ and $c_T \geq c_N >0$,
    there exists $\underline{\delta} \in (0,1)$ such that for every $\delta > \underline{\delta}$,
    there exist $\sigma_2$ and a pure strategy $\widehat{\sigma}_1$ that best replies to $\sigma_2$ such that the no-back-loop property fails under $(\widehat{\sigma}_1,\sigma_2)$.
    \item Suppose $c_T$ is large enough such that $1+c_T > K (1+c_N)$. There exist
    $\underline{\delta} \in (0,1)$, $\overline{\pi}_0>0$, and
    $\eta>0$ such that for every $\delta > \underline{\delta}$ and $\pi_0 < \overline{\pi}_0$, there exists a PBE where player $1$'s payoff is lower than $u_1(a^*,b^*)-\eta$ and the discounted frequency with which he plays $a^*$ is no more than $1-\eta$.
\end{enumerate}
\end{Proposition}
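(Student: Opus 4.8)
The plan is to prove the two statements by explicit constructions, using the one-shot deviation principle for Statement 1 and a stationary mixed reputation-cycle equilibrium for Statement 2.

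For Statement 1, I would fix the consumer's strategy to depend only on the number $n \in \{0,1,\dots,K\}$ of times $L$ appears in the observed window: play $T$ for sure when $n=0$, play $T$ with probability $p$ when $n=1$, and play $N$ when $n \geq 2$. I then claim that for $\delta$ close to $1$ the pure strategy $\widehat{\sigma}_1$ that plays $L$ whenever $n=0$ and $H$ whenever $n\geq 1$ best replies, and it visibly violates the no-back-loop property: starting from the clean record $(H,\dots,H)\in\mathcal{H}_1^*$ it plays $L$, drops to $n=1$, and returns to $\mathcal{H}_1^*$ after $K$ consecutive $H$'s. To verify the best reply I would write the stationary value $V_0$ at the clean record, which converges as $\delta\to1$ to $\frac{(1+c_T)+K\big(p(1+c_N)-c_N\big)}{K+1}$, and check the only two binding one-shot deviations: (i) playing $H$ instead of $L$ at $n=0$ is unprofitable iff $V_0\geq 1$, i.e. iff $p\geq 1-\frac{c_T}{K(1+c_N)}$; and (ii) playing $L$ instead of $H$ at the $n=1$ history whose single $L$ is about to exit (the most tempting re-milking deviation) is unprofitable iff $p\leq \frac{K+Kc_T-c_N}{K+(K+1)c_T-c_N}$. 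A short computation (the cross-multiplication generalizing the $K=1$ case) shows this interval for $p$ is nonempty precisely when $c_T>c_N$; at the boundary $c_T=c_N$ the two endpoints coincide and the seller is exactly indifferent, so the back-loop strategy is still (weakly) a best reply. Deviations into $n\geq 2$ are handled separately and are unprofitable for $\delta\to1$ because $N$ there makes climbing back strictly costly while the recurrent cycle has value bounded away from the punishment payoff.

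For Statement 2, I would construct a stationary Perfect Bayesian equilibrium with a genuine reputation cycle, in the spirit of Liu and Skrzypacz (2014) but with the consumer conditioning only on the count $n$. The consumer trusts at low counts and plays $N$ at high counts; the strategic seller milks (plays $L$) when his record is good and is then pushed into the punished region. The role of $\pi_0<\overline{\pi}_0$ is to guarantee that the commitment type alone does not lift the consumer's posterior above the trust threshold $x$ at the clean record, so trust must be sustained by the strategic type's own behavior and the punishment has bite. The role of the hypothesis $1+c_T>K(1+c_N)$ is exactly the seller-side incentive: even anticipating a $K$-period spell of no-trust after milking, the discounted value of the milk-then-restore cycle, which tends to $\frac{(1+c_T)-Kc_N}{K+1}$, exceeds the value of never milking, so in equilibrium the strategic seller plays $L$ with frequency bounded away from $0$. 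Consequently both $\sum_{b}F^{\sigma}(a^*,b)$ and $U_2^{\sigma}$ are bounded below their first-best levels by a constant $\eta>0$ uniformly in large $\delta$, which is the assertion.

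The hard part will be establishing that the punishment phase is incentive-compatible for the \emph{consumer} while remaining consistent with Bayes' rule. The difficulty is structural: to climb back to a clean record the seller must play $H$ during the punished (high-count) histories, but if he did so deterministically the consumer's posterior that $H$ is played there would equal one, destroying her incentive to play $N$. The resolution I would pursue is to let the strategic seller \emph{mix} on the punished histories, choosing the probability of restoring so that the consumer's posterior on $H$ — computed by averaging over the unobserved calendar time and hence over the unobserved phase/age of the outstanding $L$'s — equals exactly the indifference threshold $x$, while the continuation values make the seller indifferent between restoring and staying punished. Concretely I would posit a count-based mixed strategy, write down the induced stationary occupation measure $\mu$ together with the Bayes-consistent posteriors at each count, and then solve the coupled system given by (a) the seller's indifference at the mixing histories, (b) the consumer's indifference at the trust boundary, and (c) the requirement that punished states carry occupation measure bounded away from zero; taking $\delta\to1$ and $\pi_0\to0$ then yields the uniform bound $\eta$. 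The bookkeeping complication that the seller's transitions depend on the age of each $L$ (which the consumer cannot see) is what makes the belief computation delicate, and managing it — rather than any single incentive inequality — is the main obstacle.
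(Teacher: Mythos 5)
Your construction for Statement 1 is sound and is a legitimate route: the count-based $\sigma_2$ (trust at $n=0$, mix with probability $p$ at $n=1$, distrust at $n\geq 2$) together with the seller strategy ``$L$ at $n=0$, $H$ at $n\geq 1$'' does produce a back loop through the clean record, and your two binding one-shot-deviation inequalities are correct --- I checked that for $K=1$ the interval $\big[1-\tfrac{c_T}{1+c_N},\ \tfrac{1+c_T-c_N}{1+2c_T-c_N}\big]$ is nonempty exactly when $c_T\geq c_N$. This differs from the paper's own construction (sketched in Section \ref{sub5.2} for $K=1$), where the consumer plays $N$ at the $L$-record and \emph{mixes at the clean record}, and the seller's best reply mixes at the $L$-record; either works for Statement 1 since any $\sigma_2$ is admissible there. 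The one loose end is your treatment of the remaining one-shot deviations: at an $n=1$ state whose $L$ is \emph{not} about to exit, deviating to $L$ gains $(1-\delta)\big(p(c_T-c_N)+c_N\big)$, which grows with $c_T$ when $p$ is near its upper endpoint, so you must argue that $p$ can be chosen in the admissible interval (e.g.\ near its lower endpoint) so that the delay cost of flushing an extra $L$, roughly $(1-\delta)\,d\,(c_N+V_0)$ with $d\geq 1$ and $V_0\geq 1$, dominates. This is doable but is not automatic and needs to be written out.

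Statement 2 has a genuine gap. If the consumer plays $T$ with probability one at the clean record, then the strategic seller can deviate to playing $H$ in every period, stay at the clean record forever, and secure exactly $u_1(a^*,b^*)=1$; his equilibrium payoff is then at least $u_1(a^*,b^*)$, contradicting the bound $u_1(a^*,b^*)-\eta$ you are trying to establish. Relatedly, your own incentive check compares the cycle value $\tfrac{(1+c_T)-Kc_N}{K+1}$ to ``the value of never milking,'' but under your $\sigma_2$ that value is $1$, and $\tfrac{(1+c_T)-Kc_N}{K+1}>1$ requires $c_T>K(1+c_N)$, which is strictly stronger than the hypothesis $1+c_T>K(1+c_N)$. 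The missing idea is that the consumer must be \emph{indifferent at the clean record} and mix there with some probability $q<1$: her posterior that the current action is $a^*$ at the clean record must equal exactly $x$, which is achieved by having the strategic type milk at the clean record and by tuning his mixing probability at the punished records so that the stationary occupation measure of the clean record under the strategic type delivers $\Pr(\omega_c\mid\text{clean})=x$; this is precisely where $\pi_0<\overline{\pi}_0$ is needed (otherwise the posterior exceeds $x$ regardless and the consumer strictly trusts). With the consumer mixing, the milk payoff becomes $q(1+c_T)$ rather than $1+c_T$, the seller's value at fully punished states is pinned to $0$ by his indifference there, and the condition $1+c_T>K(1+c_N)$ is what makes the resulting system of indifference conditions solvable with $q\in(0,1)$. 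Your paragraph on the role of $\overline{\pi}_0$ actually states the right fact --- the posterior at the clean record must not exceed $x$ --- but that fact is inconsistent with your assumption that the consumer ``trusts at low counts,'' and resolving that inconsistency is not bookkeeping: it changes the value of the clean record from $1+c_T$ to $q(1+c_T)$ and is the reason the seller's equilibrium payoff can be pushed below his commitment payoff at all.
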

The comparison between Proposition \ref{Prop2} and the conclusions in my baseline model
implies that whether \textit{reputation cycles} occur in equilibrium hinges on the supermodularity or submodularity of the seller's stage-game payoff function. In particular, a patient seller has an incentive to milk and then rebuild his reputation when his effort and consumers' trust are strategic substitutes, but has no incentive to do so when his effort and consumers' trust are strategic complements.

The proof is in Online Appendix G.
For some intuition, consider the case where $K=1$.
When $c_T  \geq c_N>0$, it is still true that consumer $t$ plays $T$ with strictly higher probability when $a_{t-1}=H$. However, the seller has a stronger incentive to exert high effort when $a_{t-1}=L$. As a result, he may find it optimal to first milk his reputation and then restore his reputation.
This \textit{back loop} that contains the clean history, which is ruled out in the case with supermodular payoffs, provides consumers a rationale for not trusting the seller even after they observe high effort in the period before. My proof  constructs an equilibrium where the seller exerts low effort when $a_{t-1}=H$ and mixes between high and low effort when $a_{t-1}=L$. Consumer $t$ plays $N$ when $a_{t-1}=L$ and plays $T$ with probability between $0$ and $1$ when $a_{t-1}=H$.

\section{Conclusion}\label{sec6}
I analyze a novel reputation model in which the consumers have limited memories and do not know the exact sequence of the seller's actions.
I show that when players' stage-game payoffs are monotone-supermodular, it is never optimal for the seller to milk his reputation and later restore his reputation. This stands in contrast to the conclusion in Liu and Skrzypacz (2014) where reputation cycles I ruled out occur in all equilibria. My main result shows that a sufficiently patient seller receives at least his commitment payoff in all equilibria regardless of consumerss' memory length, which to the best of my knowledge, is the first reputation result that allows for arbitrary memory length. I also show that the patient seller will play his commitment action in almost all periods in all equilibria, and that the consumers can approximately attain their first best welfare in all equilibria \textit{if and only if} the consumers' memory length is lower than some cutoff. The intuition is that although a larger $K$ enables more consumers to punish the seller once the seller shirks, it undermines each consumer's incentive to punish the seller after they observe shirking.

\end{spacing}
\newpage
\appendix

\section{Proof of the No-Back-Loop Lemma}\label{secA}
I provide a unified proof for the no-back-loop lemmas stated in Sections \ref{sub3.2} and \ref{sub4.2}.
For every $t \geq K$, player $2_t$'s incentive depends only on the number of times that player 1 takes each action in the last $K$ periods. Therefore, player 1's continuation value and incentive in period $t$ depend only on $(a_{t-K},...,a_{t-1})$. Although the order of actions in the vector $(a_{t-K},...,a_{t-1})$ does not affect player 2's action, it can affect player 1's incentives. Moreover, player 1's action in period $t$ may depend on variables other than $(a_{t-K},...,a_{t-1})$, such as his actions more than $K$ periods ago and previous player 2's actions.

Fix any $\sigma_2: \mathcal{H}_2 \rightarrow \mathcal{B}^*$. Let $V(a_{t-K},...,a_{t-1})$ be player 1's continuation value in period $t$. Let $\beta^* \in \Delta (B)$ be player 2's action at histories that belong to $\mathcal{H}_1^*$ under $\sigma_2$. For every $a \neq a^*$, let $\beta(a)$ be player 2's action under $\sigma_2$ when exactly one of player 1's last $K$ actions was $a$ and the other $K-1$ actions were $a^*$.
A pure strategy $\widehat{\sigma}_1$ is \textit{canonical} if it depends only on the last $K$ actions of player $1$'s.
For every strategy profile $(\sigma_1,\sigma_2)$ and $h^t \in \mathcal{H}_1$, let $\mathcal{H}_1(\sigma_1,\sigma_2|h^t)$ be the set of histories $h^s$ satisfying $h^s \succ h^t$ and $h^s$ occurring with positive probability when the game starts from history  $h^t$ and players use strategies $(\sigma_1,\sigma_2)$.
If $\widehat{\sigma}_1$ is canonical, then
$\mathcal{H}_1(\widehat{\sigma}_1,\sigma_2|h^t)=\mathcal{H}_1(\widehat{\sigma}_1,\sigma_2'|h^t)$ for every $\sigma_2,\sigma_2'$, and $h^t$.

Since player 2's action depends only on player 1's actions in the last $K$ periods, for every $\sigma_2$, there exists a canonical pure strategy $\widehat{\sigma}_1$ that best replies to $\sigma_2$. Therefore, as long as there exists a pure strategy that best replies to $\sigma_2$ and violates the no-back-loop property with respect to $\sigma_2$, there also exists a canonical pure strategy the best replies to $\sigma_2$ and violates the no-back-loop property with respect to $\sigma_2$. Hence, the no-back-loop lemma is implied by the following \textit{no-back-loop lemma*}, which I show next.
\begin{Lemma3}
 For any $\sigma_2: \mathcal{H}_2 \rightarrow \mathcal{B}^*$ and any canonical pure strategy $\widehat{\sigma}_1$ that best replies to $\sigma_2$. If there exists $h^t \in \mathcal{H}_1^* \bigcap \mathcal{H}_1(\widehat{\sigma}_1,\sigma_2)$ such that $\widehat{\sigma}_1(h^t) \neq a^*$, then $\mathcal{H}_1(\widehat{\sigma}_1,\sigma_2|h^t) \bigcap \mathcal{H}_1^*=\emptyset$.
\end{Lemma3}

Suppose by way of contradiction that there exists a canonical pure strategy $\widehat{\sigma}_1$ that best replies to $\sigma_2$ such that there exist two histories $h^t,h^s \in \mathcal{H}_1^* \bigcap \mathcal{H}_1(\widehat{\sigma}_1,\sigma_2)$
that satisfy $h^s \in \mathcal{H}_1(\widehat{\sigma}_1,\sigma_2|h^t)$, and $\widehat{\sigma}_1(h^t)=a'$ for some $a' \neq a^*$. Without loss of generality, let $h^s$ be the first history in $\mathcal{H}_1^*$ that succeeds $h^t$ when player 1 behaves according to $\widehat{\sigma}_1$. Let $h^{s-1} \equiv (a_0,...,a_{s-2})\in \mathcal{H}_1(\widehat{\sigma}_1,\sigma_2|h^t)$. Since $h^s$ is the first history in $\mathcal{H}_1^*$ that succeeds $h^t$, it must be the case that $h^{s-1} \notin \mathcal{H}_1^*$, so $(a_{s-K-1},...,a_{s-2})=(a'',a^*,...,a^*)$ for some $a'' \neq a^*$. Since $h^s \in \mathcal{H}_1^*$,
player 1 plays $a^*$ at $h^{s-1}$ when he uses strategy $\widehat{\sigma}_1$. This implies that
\begin{equation}\label{3.1}
    (1-\delta) u_1(a^*,\beta(a''))+\delta V(a^*,a^*,...,a^*,a^*)
    \geq (1-\delta) u_1(a',\beta(a''))+\delta V(a^*,a^*,...,a^*,a').
\end{equation}
Since $\widehat{\sigma}_1(h^t)=a'$, player 1 weakly prefers $a'$ to $a^*$ at histories in $\mathcal{H}_1^*$, we have:
\begin{equation}\label{3.2}
    (1-\delta) u_1(a^*,\beta^*)+\delta V(a^*,a^*,...,a^*,a^*)
    \leq (1-\delta) u_1(a',\beta^*)+\delta V(a^*,a^*,...,a^*,a').
\end{equation}
Since the seller's stage-game payoff function is strictly supermodular, and $\beta^*$ and $\beta(a'')$ can be ranked according to FOSD under Assumption 2, inequalities (\ref{3.1}) and (\ref{3.2}) imply that $\beta^* \preceq_{FOSD} \beta(a'')$. Let
\begin{equation}\label{U}
    U \equiv \frac{\sum_{\tau=t+1}^{s-2} \delta^{\tau-t} u_1(\widehat{\sigma}_1(h^{\tau}), \sigma_2(h^{\tau}))}{\sum_{\tau=t+1}^{s-2} \delta^{\tau-t}}.
\end{equation}
be player 1's discounted average payoff from period $t+1$ to period $s-2$ when his period $t$ history is $h^t$ and players play according to $(\widehat{\sigma}_1,\sigma_2)$. Since the strategic-type player 1's incentive depends only on his actions in the last $K$ periods, when $(a_{s-K-1},...,a_{s-2})=(a'',a^*,...,a^*)$, the following strategy is optimal for him:
\begin{itemize}
  \item \textbf{Strategy $*$:} Play $a^*$ in period $s-1$, play $a'$ in period $s$, play $\widehat{\sigma}_1(h^{\tau})$ in period $\tau+(s-t)$ for every $\tau \in \{t+1,...,s-2\}$, and play the same action that he has played $s-t$ periods ago in every period after period $2s-t-1$.
\end{itemize}
Since Strategy $*$ is optimal for player 1, it must yield a weakly greater payoff compared to any of the following two deviations starting from a period $s-1$ history where $(a_{s-K-1},...,a_{s-2})=(a'',a^*,...,a^*)$:
\begin{itemize}
  \item \textbf{Deviation A:} Play $a'$ in period $s-1$, $\widehat{\sigma}_1(h^{\tau})$ in period $\tau+(s-t-1)$ for every $\tau \in \{t+1,...,s-2\}$, and play the same action that he has played $s-t-1$ periods ago in every period after $2s-t-2$.
  \item \textbf{Deviation B:} Play $a''$ in period $s-1$, play $a^*$ from period $s$ to $s+K-2$, and play the same action that he has played $K$ periods ago in every period after $s+K-1$.
\end{itemize}
Player 1 prefers Strategy $*$ to Deviation A, which implies that:
\begin{equation*}
    \frac{(1-\delta) u_1(a',\beta(a''))+(\delta-\delta^{s-t-2}) U }{1-\delta^{s-t-2}}
    \leq \frac{(1-\delta) u_1(a^*,\beta(a''))+(1-\delta)\delta u_1(a',\beta^*)+(\delta^2-\delta^{s-t-1}) U }{1-\delta^{s-t-1}}
\end{equation*}
This leads to the following upper bound on $U$:
\begin{equation}\label{3.3}
    (\delta-\delta^{s-t-2}) U \leq (1-\delta^{s-t-2}) u_1(a^*,\beta(a''))
    +\delta(1-\delta^{s-t-2}) u_1(a',\beta^*)-(1-\delta^{s-t-1}) u_1(a',\beta(a'')).
\end{equation}
Player 1 prefers Strategy $*$ to Deviation B, which implies that:
\begin{equation}\label{3.4}
    \frac{(1-\delta) u_1(a'',\beta(a''))+(\delta-\delta^{K}) u_1(a^*,\beta(a'')) }{1-\delta^{K}}
    \leq \frac{(1-\delta) u_1(a^*,\beta(a''))+(1-\delta)\delta u_1(a',\beta^*)+(\delta^2-\delta^{s-t-1}) U }{1-\delta^{s-t-1}}.
\end{equation}
This leads to a lower bound on $U$. The left-hand-side of (\ref{3.4}) equals
\begin{equation*}
  u_1(a^*,\beta(a'')) +  \frac{1-\delta}{1-\delta^K} \underbrace{\Big\{
    u_1(a'',\beta(a''))-u_1(a^*,\beta(a''))
    \Big\}}_{>0, \textrm{ since } a'' \prec a^* \textrm{ and } u_1 \textrm{ is decreasing in } a},
\end{equation*}
and inequality (\ref{3.3}) implies that the right-hand-side of (\ref{3.4}) is no more than:
\begin{equation*}
    u_1(a^*,\beta(a'')) + \delta\underbrace{ \Big\{
    u_1(a',\beta^*)-u_1(a',\beta(a''))
    \Big\}}_{ \leq 0, \textrm{ since } \beta(a'') \succeq \beta^* \textrm{ and } u_1 \textrm{ is increasing in } b}.
\end{equation*}
Since $u_1(a,b)$ is strictly increasing in $b$ and is strictly decreasing in $a$, $a^* \succ a''$, and $\beta(a'') \succeq \beta^*$, inequality (\ref{3.4}) cannot be true. This leads to a contradiction and implies the no-back-loop lemma.

\section{Proof of Theorems 2 and 3}\label{secB}
In Section \ref{subB.1}, I complete the proof in Section \ref{sub3.3} and show that when $K < \overline{K}$, the patient player plays $a^*$ with frequency arbitrarily close to $1$ in all equilibria. I focus on the case where (\ref{optimalcommitment}) is satisfied since when (\ref{optimalcommitment}) is not satisfied, $\overline{K}=1$ and there is no $K$ that meets the requirement of Statement 1, i.e., the statement is trivially satisfied.
In Section \ref{subB.2}, I construct equilibria where the patient player plays $a^*$ with frequency bounded away from $1$ when (\ref{optimalcommitment}) is satisfied and $K \geq \overline{K}$, as well as equilibria where the patient player plays $a^*$ with zero frequency when (\ref{optimalcommitment}) is violated.

%\footnote{The equilibria I construct satisfy sequential rationality, no signaling what you don't know, and at every off-path history $(n_{a})_{a \in A}$ where $n_a \in \mathbb{N}$ denote the number of $a$ in the last $K$ periods, their posterior belief assigns positive probability only to histories where $n_a$ of the last $K$ actions are $a$ for every $a \in A$.}

\subsection{The Frequency of $a^*$ Being Close to $1$ in All Equilibria}\label{subB.1}
I start from the proof of Lemma \ref{L4.1}.
\begin{proof}[Proof of Lemma 3.1:]
For every $t \in \mathbb{N}$, let $p_t(s)$ be the probability that the state is $s$ in period $K+t$ conditional on player 1 being the strategic type,
and let $q_t(s \rightarrow s')$ be the probability that the state in period $t+K+1$ is $s'$ conditional on the state being $s$ in period $t+K$ and player 1 being the strategic type. By definition, $p_0(s)=p(s)$ and $p_{t+1}(s) = \sum_{s' \in S} p_t(s') Q(s' \rightarrow s)$.
According to Bayes rule, we have
\begin{equation*}
    \mu(s)= \sum_{t=0}^{+\infty} (1-\delta)\delta^t p_t(s)
\quad \textrm{and} \quad
    Q(s \rightarrow s') = \frac{\sum_{t=0}^{+\infty} (1-\delta)\delta^t p_t(s) q_t(s \rightarrow s')}{\sum_{t=0}^{+\infty} (1-\delta)\delta^t p_t(s)}.
\end{equation*}
This implies that
\begin{equation*}
     \sum_{s \in S} \mu(s) Q(s \rightarrow s') = \sum_{s \in S} \sum_{t=0}^{+\infty} (1-\delta)\delta^t p_t(s) q_t(s \rightarrow s')
     = \sum_{t=0}^{+\infty} (1-\delta)\delta^t \sum_{s \in S} p_t(s) q_t(s \rightarrow s')=\sum_{t=0}^{+\infty} (1-\delta)\delta^t p_{t+1}(s')
\end{equation*}
\begin{equation*}
   \frac{1}{\delta} \Big\{
    \mu(s')-(1-\delta) p(s')
    \Big\}=  \frac{1}{\delta} \Big\{
 \sum_{t=0}^{+\infty} (1-\delta)\delta^t p_t(s')  -(1-\delta) p(s')
    \Big\}= \sum_{t=0}^{+\infty} (1-\delta)\delta^t p_{t+1}(s').
\end{equation*}
These two equations together imply (\ref{4.8}).
\end{proof}
Next, I show Lemma \ref{L4.3}.
\begin{proof}[Proof of Lemma 3.3:] Since $\mathcal{Q} (S_{j,k} \rightarrow S_{k-1})= \mathcal{Q}(S_{j,k}' \rightarrow S_{k-1})$, $\mathcal{Q}(S_{j,k} \rightarrow S_{k+1})=\mathcal{Q}(S_{j,k}^* \rightarrow S_{k+1})$, and under the hypothesis that $\mathcal{Q} (S_{j,k} \rightarrow S_{k-1})\leq z(1-\delta)$ and $\mathcal{Q} (S_{k-1} \rightarrow S_{j,k}) \leq z(1-\delta)$, we have:
\begin{equation*}
    \underbrace{\sum_{s \in S_{j,k}^*} \sum_{s' \in S_{j,k}} \mu(s) Q(s \rightarrow s')}_{= \sum_{s \in S_{j,k}^*} \mu(s) - \mathcal{Q}(S_{j,k}^* \rightarrow S_{k+1})}+\underbrace{\mathcal{Q} (S_{j,k} \rightarrow S_{k-1})}_{\leq z(1-\delta)}
    \leq \sum_{s \in S_{j,k}^*} \mu(s) - \mathcal{Q}(S_{j,k}^* \rightarrow S_{k+1})+z (1-\delta),
\end{equation*}
\begin{equation*}
    \underbrace{\sum_{s \in S_{j,k}'} \sum_{s' \in S_k} \mu(s) Q(s \rightarrow s')}_{=\sum_{s \in S_{j,k}'} \mu(s)-\mathcal{Q}(S_{j,k}' \rightarrow S_{k-1})} +\underbrace{\mathcal{Q}(S_{j,k} \rightarrow S_{k+1})}_{=\mathcal{Q}(S_{j,k}^* \rightarrow S_{k+1})}
    \geq \sum_{s \in S_{j,k}'} \mu(s)  + \mathcal{Q}(S_{j,k}^* \rightarrow S_{k+1})-z (1-\delta).
\end{equation*}
Suppose there exists no such $y \in \mathbb{R}_+$, that is, $\frac{\sum_{s \in S_{j,k}} \mu(s)}{z(1-\delta)}$ can be arbitrarily large as $\delta \rightarrow 1$. Since
the sum of $\sum_{s \in S_{j,k}^*} \mu(s) - \mathcal{Q}(S_{j,k}^* \rightarrow S_{k+1})+z (1-\delta)$ and $\sum_{s \in S_{j,k}'} \mu(s)  + \mathcal{Q}(S_{j,k}^* \rightarrow S_{k+1})-z (1-\delta)$ equals $\sum_{s \in S_{j,k}} \mu(s)$, we know that when $\delta$ is close to $1$, (\ref{incentive}) is implied by:
\begin{equation*}
    \frac{\sum_{s \in S_{j,k}^*} \mu(s) - \mathcal{Q}(S_{j,k}^* \rightarrow S_{k+1})}{\sum_{s \in S_{j,k}'} \mu(s)  + \mathcal{Q}(S_{j,k}^* \rightarrow S_{k+1})}<K-1,
\end{equation*}
or equivalently,
\begin{equation}\label{incentives}
    \sum_{s \in S_{j,k}^*} \mu(s) < (K-1) \sum_{s \in S_{j,k}'} \mu(s)+ K \mathcal{Q}(S_{j,k}^* \rightarrow S_{k+1}).
\end{equation}
I derive a lower bound for $\mathcal{Q}(S_{j,k}^* \rightarrow S_{k+1})$.
Since $\mathcal{O}(S_{j,k}) = \mathcal{Q}(S_{j,k}^* \rightarrow S_{k+1}) + \mathcal{Q}(S_{j,k}' \rightarrow S_{k-1}) + \mathcal{Q}(S_{j,k} \rightarrow S_k \backslash S_{j,k})$,
under the hypothesis that  $\max \{ \mathcal{Q} (S_{j,k} \rightarrow S_{k-1}),\mathcal{Q} (S_{k-1} \rightarrow S_{j,k}) \} \leq z(1-\delta)$,
\begin{equation*}
    \mathcal{Q}(S_{j,k}^* \rightarrow S_{k+1})=
    \mathcal{O}(S_{j,k})- \mathcal{Q}(S_{j,k}' \rightarrow S_{k-1}) - \mathcal{Q}(S_{j,k} \rightarrow S_k \backslash S_{j,k})
    \geq
    \mathcal{O}(S_{j,k})- \mathcal{Q}(S_{j,k} \rightarrow S_k \backslash S_{j,k}) -z(1-\delta).
\end{equation*}
According to Lemma \ref{Linflow}, we have
 $\mathcal{Q}(S_{j,k}^* \rightarrow S_{k+1}) \geq \mathcal{I}(S_{j,k})- \mathcal{Q}(S_{j,k} \rightarrow S_k \backslash S_{j,k})- \frac{(1-\delta)(1+z\delta)}{\delta}$.
Since at every $s \in S_{j,k}^*$, the state in the next period belongs to $S_{k+1}$ if player 1 does not play $a^*$ at $s$, and belongs to $S_{j,k}$ if player 1 plays $a^*$ at $s$, we have $\mathcal{Q}(S_{j,k}^* \rightarrow S_k \backslash S_{j,k})=0$. This implies that $\mathcal{Q}(S_{j,k} \rightarrow S_k \backslash S_{j,k})=\mathcal{Q}(S_{j,k}' \rightarrow S_k \backslash S_{j,k})$. Since $\mathcal{I}(S_{j,k})=\mathcal{I}(S_{j,k}^*) +  \mathcal{I}(S_{j,k}')
- \mathcal{Q}(S_{j,k}^* \rightarrow S_{j,k}')-\mathcal{Q}(S_{j,k}' \rightarrow S_{j,k}^*)$,
\begin{eqnarray*}
\mathcal{Q}(S_{j,k}^* \rightarrow S_{k+1}) & \geq & \mathcal{I}(S_{j,k})- \mathcal{Q}(S_{j,k} \rightarrow S_k \backslash S_{j,k})- \frac{(1-\delta)(1+z\delta)}{\delta}
{}
\nonumber\\
&=& {}
\mathcal{I}(S_{j,k}^*) +  \mathcal{I}(S_{j,k}')
- \mathcal{Q}(S_{j,k}^* \rightarrow S_{j,k}')-\mathcal{Q}(S_{j,k}' \rightarrow S_{j,k}^*)
- \mathcal{Q}(S_{j,k}' \rightarrow S_k \backslash S_{j,k})- \frac{(1-\delta)(1+z\delta)}{\delta}
{}
\nonumber\\
&\geq & {} \mathcal{I}(S_{j,k}^*) -\mathcal{Q}(S_{j,k}^* \rightarrow S_{j,k}') +  \underbrace{\mathcal{O}(S_{j,k}')- \mathcal{Q}(S_{j,k}' \rightarrow S_k \backslash S_{j,k}) -\mathcal{Q}(S_{j,k}' \rightarrow S_{j,k}^*)}_{\geq 0}- \frac{(1-\delta)(2+z\delta)}{\delta}
{}
\nonumber\\
&\geq & {}
\mathcal{I}(S_{j,k}^*) -\mathcal{Q}(S_{j,k}^* \rightarrow S_{j,k}')- \frac{(1-\delta)(2+z\delta)}{\delta}
\end{eqnarray*}
Since $\mathcal{Q}(S_{j,k}^* \rightarrow S_{j,k}') \leq \mathcal{O}(S_{j,k}^*)$ and $\mathcal{Q}(S_{j,k}^* \rightarrow S_{j,k}') \leq \mathcal{I}(S_{j,k}')$,
\begin{equation*}
    \mathcal{Q}(S_{j,k}^* \rightarrow S_{j,k}') \leq \frac{1}{K} \mathcal{O}(S_{j,k}^*)+\frac{K-1}{K} \mathcal{I}(S_{j,k}')
    \leq \frac{1}{K} \mathcal{I}(S_{j,k}^*)+\frac{K-1}{K} \mathcal{O}(S_{j,k}')+\frac{1-\delta}{\delta}.
\end{equation*}
This together with the lower bound on $\mathcal{Q}(S_{j,k}^* \rightarrow S_{k+1})$ that we derived earlier implies that:
\begin{equation*}
    \mathcal{Q}(S_{j,k}^* \rightarrow S_{k+1}) \geq \frac{K-1}{K} \Big(
\mathcal{I}(S_{j,k}^*)-\mathcal{O}(S_{j,k}')
\Big)-\frac{(1-\delta)(3+z\delta)}{\delta}.
\end{equation*}
Since $\mathcal{O}(S_{j,k}')=\mathcal{Q}(S_{j,k}' \rightarrow S_k \backslash S_{j,k})+\mathcal{Q}(S_{j,k}' \rightarrow S_{j,k}^*)+\mathcal{Q}(S_{j,k}' \rightarrow S_{k-1})$, and $\mathcal{Q}(S_{j,k}' \rightarrow S_{k-1})$ is assumed to be less than $z(1-\delta)$, we know that
\begin{equation*}
    \mathcal{Q}(S_{j,k}^* \rightarrow S_{k+1}) \geq \frac{K-1}{K} \Big(
\mathcal{I}(S_{j,k}^*)-\mathcal{Q}(S_{j,k}' \rightarrow S_k \backslash S_{j,k})-\mathcal{Q}(S_{j,k}' \rightarrow S_{j,k}^*)
\Big)-\frac{(1-\delta)(3+2z\delta)}{\delta}.
\end{equation*}
Hence, when $\delta$ is close to $1$, inequality (\ref{incentives}) is implied by
\begin{eqnarray}\label{incentives1}
\sum_{s \in S_{j,k}^*} \mu(s) &<&
(K-1)\Big\{ \sum_{s \in S_{j,k}'} \mu(s) - \mathcal{Q}(S_{j,k}' \rightarrow S_k \backslash S_{j,k}) \Big\}
{}
\nonumber\\
&  & {}
+ (K-1) \Big\{\mathcal{I}(S_{j,k}^*) - \mathcal{Q}(S_{j,k}' \rightarrow S_{j,k}^*)\Big\}.
\end{eqnarray}
%Since Lemma \ref{L4.1} implies that $\sum_{s \in S_{j,k}'} \mu(s) - \mathcal{Q}(S_{j,k}' \rightarrow S_k \backslash S_{j,k}) \geq -\frac{1-\delta}{\delta}$ and the definition of $\mathcal{I}(\cdot)$ implies that $\mathcal{I}(S_{j,k}^*) - \mathcal{Q}(S_{j,k}' \rightarrow S_{j,k}^*) \geq 0$, we know that for every

I say that a sequence of states $\{s_1,...,s_j\} \subset S_{j,k}$ form a \textit{connected sequence} if for every $i \in \{1,2,...,i-1\}$, there exists $a_i \in A$  such that the state in the next period is $s_{i+1}$ when the state in the current period is $s_i$ and player 1 takes action $a_i$. %A sequence of states is a \textit{minimal connected sequence} if it does not contain any strict subset that is also a connected sequence.
A useful observation is that for every $s \in S_{j,k}$, there exists a unique state $s'$ in $S_{j,k}$ such that playing some $a \in A$ in state $s'$ leads to state $s$. Using this observation, we construct for any $s_1 \in S_{j,k}'$,
a finite sequence of states $\{s_1,...,s_m\} \subset S_{j,k}$ with length $m$ at least one such that (i) for every $i \in \{1,2,...,m-1\}$, there exists an action $a_i \in A$ such that playing $a_i$ in state $s_i$ leads to state $s_{i+1}$ in the next period, (ii) if $m \geq 2$, then $\{s_2,...,s_m\} \subset S_{j,k}^*$, and (iii) no matter which action player 1 takes in state $s_m$, the state in the next period does not belong to $S_{j,k}^*$, or equivalently, there exists an action $a \in A$ such that taking action $a$ at state $s_m$ leads to a state that belongs to $S_{j,k}'$.
Lemma \ref{L4.1} implies that
\begin{eqnarray}\label{4.19a}
\mu(s_2) &\leq& \mu(s_1) Q(s_1 \rightarrow s_2)+ \mathcal{Q}(S \backslash S_{j,k} \rightarrow \{s_2\})
    +\frac{1-\delta}{\delta}
{}
\nonumber\\
&=& {} \mu(s_1)-\mathcal{Q}(\{s_1\} \rightarrow S_k \backslash S_{j,k}) + \mathcal{Q}(S \backslash S_{j,k} \rightarrow \{s_2\}) +\frac{1-\delta}{\delta},
\end{eqnarray}
and for every $i \geq 2$, we have:
\begin{eqnarray}\label{4.19}
 \mu(s_{i+1}) &\leq& \mu(s_i) Q(s_i \rightarrow s_{i+1})+ \mathcal{Q}(S \backslash S_{j,k} \rightarrow \{s_{i+1}\})
    +\frac{1-\delta}{\delta}
{}
\nonumber\\
&=& {} \mu(s_i)+\mathcal{Q}(S \backslash S_{j,k} \rightarrow \{s_{i+1}\})
    +\frac{1-\delta}{\delta}.
\end{eqnarray}
Iteratively apply (\ref{4.19}) and (\ref{4.19a}) for every $i \geq 2$, we obtain:
\begin{equation}\label{4.20}
    \mu(s_i) \leq \mu(s_1) + \mathcal{Q}(S \backslash S_{j,k} \rightarrow \{s_2,...,s_i\}) -\mathcal{Q}(\{s_1\} \rightarrow S_k \backslash S_{j,k})
    +\frac{(1-\delta)(i-1)}{\delta}.
\end{equation}
Summing up inequality (\ref{4.20}) for $i \in \{2,...,m\}$, we obtain:
\begin{eqnarray*}
    \sum_{i=2}^m \mu(s_i)
    &\leq& (m-1) \Big\{ \mu(s_1) + \mathcal{Q}(S_{j,k}^c \rightarrow \{s_2,...,s_m\})
    - \mathcal{Q}(\{s_1\} \rightarrow S_k \backslash S_{j,k})
    \Big\} + \frac{m(m-1)(1-\delta)}{2\delta}
        {}
\nonumber\\
&= & {} (m-1) \Big\{
\underbrace{\mu(s_1)-\mathcal{Q}(\{s_1\} \rightarrow S_k \backslash S_{j,k})}_{\geq -\frac{1-\delta}{\delta}}
\Big\}
        {}
\nonumber\\
& &
+(m-1) \Big\{
\underbrace{\mathcal{Q}(S \backslash S_{j,k}^* \rightarrow \{s_2,...,s_m\})
-\mathcal{Q}(S_{j,k}' \rightarrow \{s_2,...,s_m\})}_{\geq 0}
\Big\}+ \frac{m(m-1)(1-\delta)}{2\delta}
        {}
\nonumber\\
&\leq & {} (K-1) \Big\{
\mu(s_1)-\mathcal{Q}(\{s_1\} \rightarrow S_k \backslash S_{j,k})
\Big\}
        {}
\nonumber\\
& &
      +(K-1) \Big\{
\mathcal{Q}(S \backslash S_{j,k}^* \rightarrow \{s_2,...,s_m\})
-\mathcal{Q}(S_{j,k}' \rightarrow \{s_2,...,s_m\}) \Big\}
+ \Big\{ \frac{m(m-1)(1-\delta)}{2\delta}+\frac{K(1-\delta)}{\delta} \Big\}.
\end{eqnarray*}
One can obtain (\ref{incentives1}) by summing up the above equation for every $s_1 \in S_{j,k}'$ and taking $\delta \rightarrow 1$. This is because the left-hand-side of this sum equals $\sum_{s \in S_{j,k}^*}\mu(s)$. Therefore, after ignoring the last term that vanishes to $0$ as $\delta \rightarrow 1$, the additive property of the operator $\mathcal{Q}$ implies that the right-hand-side equals
\begin{equation*}
    (K-1)\Big\{ \sum_{s \in S_{j,k}'} \mu(s) - \mathcal{Q}(S_{j,k}' \rightarrow S_k \backslash S_{j,k}) \Big\}
+ (K-1) \Big\{\mathcal{I}(S_{j,k}^*) - \mathcal{Q}(S_{j,k}' \rightarrow S_{j,k}^*)\Big\}.
\end{equation*}
This establishes inequality (\ref{incentives1}) and leads to the conclusion of Lemma 3.3.
\end{proof}
Then I show Lemma \ref{L4.4}.
\begin{proof}[Proof of Lemma 3.5:]
Suppose by way of contradiction that for every $y>0$ and $\underline{\delta} \in (0,1)$, there exist $\delta > \underline{\delta}$, an equilibrium under $\delta$, and $k \geq 1$, such that in this equilibrium, $\max \{ \mathcal{Q}(S_{k-1} \rightarrow S_{j,k}), \mathcal{Q}(S_{k} \rightarrow S_{k-1}) \} < y(1-\delta)$ but $\sum_{s \in S_{k}} \mu(s) >  z(1-\delta)$.
Pick a large enough $z$,
Lemma \ref{L4.3} implies that for every $S_{j,k} \subset S_k$, either $\sum_{s \in S_{j,k}} \mu(s)< \frac{z}{2^K}(1-\delta)$, or player 2 has a strict incentive not to play $a^*$ at $S_{j,k}$.
The hypothesis that $\sum_{s \in S_{k}} \mu(s) >  z(1-\delta)$ implies that there exists at least one partition element $S_{j,k}$ such that player 2 has a strict incentive not to play $a^*$ at $S_{j,k}$. Let $S_k'$ be the union of such partition elements.

I start from deriving an upper bound on the ratio between $\sum_{s \in S_{k}'} \mu(s)$ and $\mathcal{Q}(S_{k}' \rightarrow S_{k-1})$.
Let $V(s)$ be player 1's continuation value in state $s$ and let $\overline{V} \equiv \max_{s \in S} V(s)$.
Let $\underline{v}$ be player 1's lowest stage-game payoff. Let $v' \equiv \max_{a \in A, b \prec b^*} u_1(a,b)$ and $v^* \equiv u_1(a^*,b^*)$. Assumptions \ref{Ass1} and inequality (\ref{optimalcommitment}) together imply that $v^*>v'>\underline{v}$.
Since player 1 can reach any state within $K$ periods, we have $V(s) \geq (1-\delta^K) \underline{v} +\delta^K \overline{V}$ for every $s \in S$.
Theorem \ref{Theorem1} suggests that player 1's continuation value at $s^*$ is at least $u_1(a^*,b^*)$. Therefore, $\overline{V} \geq v^*$.
Let $M$ be the largest integer $m$ such that
\begin{equation}\label{B.12}
(1-\delta^m) v' +  \delta^m \overline{V} \geq (1-\delta^K) \underline{v} +\delta^K \overline{V}.
\end{equation}
Applying the L'Hospital Rule, (\ref{B.12}) implies that when $\delta$ is close to $1$, we have
$M \leq K \frac{\overline{V}-\underline{v}}{\overline{V}-v'}$.
Therefore, for any $t \in \mathbb{N}$ and $s \in S_{k}'$, and under any pure-strategy best reply of player 1, if the state is $s$ in period $t$, then there exists $\tau \in \{t+1,...,t+M\}$ such that when player 1 uses this pure-strategy best reply, the state in period $\tau$ does not belong to $S_{k}'$. Therefore,
    $\frac{\sum_{s \in S_{k}'} \mu(s)}{\mathcal{O}(S_{k}')} \leq \frac{1-\delta^M}{\delta^M(1-\delta)}$.
When $\delta \rightarrow 1$, the RHS of the above inequality converges to $M$, which implies that
\begin{equation}\label{B.13}
    \sum_{s \in S_k'} \mu(s) \leq K \cdot \frac{\overline{V}-\underline{v}}{\overline{V}-v'} \cdot \mathcal{O}(S_k').
\end{equation}
Since $\sum_{s \in S_k \backslash S_k'} \mu(s)$ is bounded above by some linear function of $1-\delta$, it must be the case that
$\mathcal{ Q}(S_{k-1} \rightarrow S_k')  \geq \frac{\sum_{s \in S_k'} \mu(s)}{2M}$.
This implies that there exists $s \in S_{k+1}$ and a canonical pure best reply $\widehat{\sigma}_1$ such that:
\begin{enumerate}
  \item the state in the next period, denoted by $s'$, belongs to $S_k'$, and the state belongs to $S_k'$ for $m$ periods,
  \item the state returns to $S_{k+1}$ after these $m$ periods, returns to $s$ after a finite number of periods, and the state never reaches $\cup_{n=0}^{k-1} S_n$ when play starts from $s$.
\end{enumerate}
By definition, player 1 plays $a^*$ in state $s$ under $\widehat{\sigma}_1$ and
$\widehat{\sigma}_1$ induces a cycle of states.
Moreover, it is without loss of generality to focus on best replies that induce a cycle where each state occurs at most once.

I show that $m \leq K-1$. Suppose by way of contradiction that $m \geq K$, namely, after reaching state $s'$, the state belongs to $S_k'$ for at least $K$ periods under player 1's pure-strategy best reply $\widehat{\sigma}_1$. Recall the definition of a \textit{minimal connected sequence}. Every minimal connected sequence contains either one state (if $k=K$) or $K$ states in category $k$. Therefore, the category $k$ state after $K$ periods is also $s'$. As a result, there exists a best-reply of player 1 such that under this best reply and starting from state $s'$, the state remains in category $k$ forever. Due to the hypothesis that player 2 has no incentive to play $b^*$ when the state belongs to $S_k'$, player 1's continuation value under such a best reply is at most $v'$, which is strictly less than his guaranteed continuation value $(1-\delta^K) \underline{v} +\delta^K v^*$. This contradicts the conclusion of Theorem \ref{Theorem1}.

Given that $m \leq K-1$, let us consider an alternative strategy of player 1 under which he plays an action other than $a^*$ in state $s$,
then follows strategy $\widehat{\sigma}_1$. Starting from state $s$, this strategy and $\widehat{\sigma}_1$ lead to the same state after $m+1$ periods. This strategy leads to a strictly higher payoff since the stage-game payoff at state $s$ is strictly greater, and the payoffs after the first period are weakly greater. This contradicts the hypothesis that $\widehat{\sigma}_1$ is player 1's best reply to player 2's equilibrium strategy.
\end{proof}
In summary, Lemma \ref{L4.2} implies that $\max\{ \mathcal{Q}(S_0 \rightarrow S_1),\mathcal{Q}(S_1 \rightarrow S_0)\} \leq \frac{2(1-\delta)}{\delta}$. Lemma \ref{L4.3} and Lemma \ref{L4.4} together imply that $\sum_{s \in S_1} \mu(s)$ is bounded from above by a linear function of $1-\delta$ given that
 $\max\{ \mathcal{I}(S_0), \mathcal{O}(S_0)\} \leq \frac{2(1-\delta)}{\delta}$, which then implies that $\mathcal{Q}(S_1 \rightarrow S_2)$ and $\mathcal{Q}(S_2 \rightarrow S_1)$ are also bounded from above by a linear function of $1-\delta$.
 Iteratively apply this argument, we obtain that for every $k \in \{1,2,...,K\}$, $\sum_{s \in S_k} \mu(s)$
 is bounded from above by a linear function of $1-\delta$.

\subsection{Constructing Equilibria where $a^*$ Occurs with Low Frequency}\label{subB.2}
\paragraph{Case 1:} I consider the case where $(u_1,u_2)$ satisfies (\ref{optimalcommitment}) but $K \geq \overline{K}$.
Since $K \geq \overline{K}$, $b^*$ best replies to $\frac{K-1}{K} a^* + \frac{1}{K} a'$ for some $a' \neq a^*$. Inequality (\ref{optimalcommitment}) implies that every best reply to $a'$ is strictly lower than $b^*$. Hence, $K \geq 2$ and there exists $\alpha \in (0,\frac{K-1}{K})$ such that $\{b^*,b'\} \subset \textrm{BR}_2(\alpha a^*+(1-\alpha)a')$ for some $b' \prec_B b^*$. Let $b''$ be player 2's lowest best reply to $a'$. Since $u_2(a,b)$ has strictly increasing differences, we know that $b'' \preceq_B b' \prec_B b^*$.
Let $\mathcal{H}_1^{**}$ be the set of histories such that player 2 observes at most one $a'$ and does not observe any action other than $a^*$ and $a'$, which will contain the set of histories that occur with positive probability.
Player 2's belief is derived from Bayes rule at every history that belongs to $\mathcal{H}_1^{**}$. For player 2's belief at histories that occur with zero probability,
\begin{enumerate}
\item If player $2_t$ observes two or more $a'$ and observes no action other than $a^*$ and $a'$, then she believes that $(a_{t-2},a_{t-1})=(a',a')$.
\item If player 2 observes $a'' \notin \{a^*,a'\}$, then she believes that the action in the period before is $a''$.
\end{enumerate}

Then I describe player 1's equilibrium strategy. At every history that belongs to $\mathcal{H}_1^{**}$, player 1 plays $a'$ in period $t$ if $t=K-1$ or $t \geq K$ and $(a_{\min\{0,t-K+1\}},...a_{t-1})=(a^*,...,a^*)$. Player 1 plays $a^*$ in period $t$ at other histories that belong to $\mathcal{H}_1^{**}$.
Histories that do not belong to $\mathcal{H}_1^{**}$ occur with zero probability, at which player 1's behavior is given by:
\begin{enumerate}
    \item Player 1 plays $a^*$ if $(a_{t-2},a_{t-1})\neq (a',a')$ and the last $\min \{K,t\}$ actions are either $a^*$ or $a'$.
  \item Player 1 plays $a^*$ with probability $\alpha$ and plays $a'$ with probability $1-\alpha$ if $(a_{t-2},a_{t-1})=(a',a')$ and actions
  in the last $\min \{K,t\}$ periods are either $a^*$ or $a'$.
  \item Player 1 plays $a'$ in period $t$ if actions other than $a^*$ and $a'$ occurred in period $t-1$.
\end{enumerate}

Player $2$ plays $b^*$ at every history that belongs to
$\mathcal{H}_1^{**}$. At every history that (i) does not belong to $\mathcal{H}_1^{**}$, and (ii) actions other than $a^*$ and $a'$ do not occur in the last $K$ periods, player $2$ plays $b^*$ with probability $\beta$ and plays $b'$ with probability $1-\beta$, where
\begin{equation*}
      \beta u_1(a',b^*) +(1-\beta) u_1(a',b')
        = (1-\delta^{K-1}) \Big(\beta u_1(a^*,b^*) +(1-\beta) u_1(a^*,b')\Big)
        \end{equation*}
            \begin{equation}\label{4.3}
        + \delta^{K-1} \underbrace{\frac{u_1(a',b^*)+ (\delta+\delta^2+...+\delta^{K-1})  u_1(a^*,b^*)}{1+\delta+...+\delta^{K-1}}}_{\equiv V_K}.
    \end{equation}
Since $u_1(a',b^*)> u_1(a^*,b^*)> u_1(a',b')> u_1(a^*,b')$,
 $\beta$ is strictly between $0$ and $1$. At every history that does not belong to $\mathcal{H}_1^{**}$ and actions other than $a^*$ and $a'$ occurred in the last $K$ periods, player 2 plays $b''$.

Player 2's incentive constraint  at every history that occurs with zero probability is satisfied under her belief since (i) she mixes between $b^*$ and $b'$ whenever she believes that player 1 plays $\alpha a^* +(1-\alpha) a'$, and (ii) she plays $b''$ whenever she believes that player 1 plays $a'$.
At  histories that occur with positive probability, player 2 believes that
$a'$ is played with probability $1-\pi_0$ and $a^*$ is played with probability $\pi_0$ in period $0$, so she plays a best reply to this mixed action.
From period $1$ to $K-1$, player 2 believes that $a^*$ is played by both types, so she plays her best reply $b^*$.
After period $K$, player 2's belief
assigns probability $1$ to the commitment type upon observing any history where player 1's last $K$ actions were $a^*$, and therefore, she has a strict incentive to play $b^*$.
For player 2's incentive constraints at histories where $a'$ occurred only once, she believes that $(a_{t-K},...,a_{t-1})=(a',a^*,...,a^*)$
with probability
\begin{equation}\label{4.4}
    \frac{\delta^{K-1}}{1+\delta+...+\delta^{K-1}}
\end{equation}
and $(a_{t-K},...,a_{t-1}) \neq (a',a^*,...,a^*)$ with complementary probability. When $\delta \rightarrow 1$, expression (\ref{4.4}) is less than but converges to $\frac{1}{K}$.
Since player 1 plays $a'$ when $(a_{t-K},...,a_{t-1})=(a',a^*,...,a^*)$ and plays $a^*$ at other histories where $a'$ occurred once, player 2 believes that player 1's current period action is $a'$ with probability less than $\frac{1}{K}$ and is $a^*$ with probability more than $\frac{K-1}{K}$.
Hence, there exists $\underline{\delta} \in (0,1)$ such that when $\delta > \underline{\delta}$, player 2s have a strict incentive to play $b^*$ if $a'$ occurred only once in the last $K$ periods.

I verify player 1's incentive constraint: (i) he has no incentive to reach any history that occurs with zero probability starting from any history that occurs with positive probability, and (ii) he has an incentive to play $a'$ when $(a_{t-2},a_{t-1})=(a',a')$ or when $a_{t-1} \notin \{a',a^*\}$.
When $(a_{t-2},a_{t-1})=(a',a')$, player 1 is indifferent between playing $a'$ and $a^*$ in period $t$.

Next, I show that player 1 has no incentive to play $a'$ at every history that belongs to $\mathcal{H}_1^{**}$ where $(a_{t-K+1},...,a_{t-1}) \neq (a^*,...,a^*)$. For every $m \in \{1,2,...,K\}$, let $s_m$ be the state where $a_{t-m}=a'$ and all actions that belong to $\{a_{t-K},..,a_{t-1}\}\backslash \{a_{t-m}\}$
are $a^*$. Let $V_m$ player $1$'s continuation value in state $s_m$.
For every $m \in \{1,2,...,K-1\}$, player $1$ prefers $a^*$ to $a'$ in state $s_m$ if
\begin{equation}\label{4.23}
    V_m
    > (1-\delta) u_1(a',b^*)+ \delta (1-\delta^{K-m}) u_1(a^*,\beta) + (\delta^{K-m+1}-\delta^K) u_1(a^*,b^*) + \delta^K V_K.
\end{equation}
Since
\begin{equation*}
    V_m= (1-\delta)^{K-m} u_1(a^*,b^*) +\delta^{K-m} V_K \textrm{ for every } 1 \leq m \leq K,
\end{equation*}
we have:
\begin{equation*}
        (1-\delta) (1-\delta^{K-m}) u_1(a^*,b^*)
    +\delta (1-\delta^{K-m}) \Big(
    u_1(a^*,b^*)-u_1(a^*,\beta)
    \Big)
\end{equation*}
\begin{equation}\label{4.24}
-(1-\delta) u_1(a',b^*)
    +\delta^{K-m} (1-\delta^m) V_k -(\delta^{K-m+1}-\delta^K) u_1(a^*,b^*) > 0.
\end{equation}
Dividing the above expression by $1-\delta$, and then taking the limit where $\delta \rightarrow 1$, we obtain that inequality (\ref{4.24}) is true when $\delta$ is close to $1$ if
\begin{equation*}
    (K-m) \Big(u_1(a^*,b^*)-u_1(a^*,\beta)\Big)
   +mV_K -(m-1) u_1(a^*,b^*)-u_1(a',b^*)>0,
\end{equation*}
or equivalently,
\begin{equation}\label{4.25}
    \underbrace{(K-m)}_{ \geq 1} (1-\beta) \Big(u_1(a^*,b^*)-u_1(a^*,b')\Big) > (m-1) \underbrace{\Big(u_1(a^*,b^*)-V_K\Big)}_{<0} + \Big(
    u_1(a',b^*)-V_K
    \Big).
\end{equation}
When $\delta$ is close to $1$,
\begin{equation*}
    V_K \approx \frac{1}{K} u_1(a',b^*) + \frac{K-1}{K} u_1(a^*,b^*),
\end{equation*}
and therefore,
\begin{equation*}
    1-\beta = \frac{u_1(a',b^*)-V_K}{u_1(a',b^*)-u_1(a',b')} \approx \frac{K-1}{K} \cdot \frac{u_1(a',b^*)-u_1(a^*,b^*)}{u_1(a',b^*)-u_1(a',b')}
    > \frac{K-1}{K} \cdot \frac{u_1(a',b^*)-u_1(a^*,b^*)}{u_1(a^*,b^*)-u_1(a^*,b')},
\end{equation*}
where the last inequality follows from $u_1(a,b)$ having strictly increasing differences. This implies that
\begin{equation*}
    (1-\beta) \Big(u_1(a^*,b^*)-u_1(a^*,b')\Big)>u_1(a',b^*)-V_K \approx \frac{K-1}{K} \Big( u_1(a',b^*)-u_1(a^*,b^*) \Big).
\end{equation*}
Inequality (\ref{4.25}) is true when $\delta$ is close to $1$
since $K-m \geq 1$ and
$u_1(a',b^*)-V_K$ converges to $\frac{K-1}{K} (u_1(a',b^*)-u_1(a^*,b^*))$ as $\delta \rightarrow 1$.

Next, I show that player 1 has no incentive to play actions other than $a'$ and $a^*$ at every history that belongs to $\mathcal{H}_1^{**}$. It is straightforward to show that he has no incentive to play any action that does not belong to $\{a^*,a',\underline{a}\}$, since playing $\underline{a}$ leads to a strictly higher stage-game payoff for player 1 while not lowering his continuation value. Hence, I only need to show that when $a' \neq \underline{a}$, player 1 has no incentive to play $\underline{a}$ at any history that belongs to $\mathcal{H}_1^{**}$. This is because his payoff at any history that occurs with positive probability is bounded from below by $V_1 \approx \frac{1}{K} u_1(a',b^*) + \frac{K-1}{K} u_1(a^*,b^*)> u_1(a^*,b^*)$. Inequality (\ref{optimalcommitment}) implies that player 1's stage-game payoff is strictly less than $u_1(a^*,b^*)$ when player 2's action is strictly lower than $b^*$. Since player 2 plays $b''$ when actions other than $a^*$ and $a'$ occurred in the last $K$ periods, player 1's continuation value when he plays $\underline{a}$ is at most:
\begin{equation*}
    (1-\delta) u_1(\underline{a},b^*) + (\delta-\delta^2) u_1(a',b')
    +(\delta^2-\delta^{K+1}) u_1(a^*,b')
    +\delta^{K+1} V_K
\end{equation*}
Since $V_1<V_2<...<V_K$, it is sufficient to show that
\begin{equation*}
    V_1=(1-\delta^{K-1}) u_1(a^*,b^*)+\delta^{K-1} V_K  > (1-\delta) u_1(\underline{a},b^*) + (\delta-\delta^2) u_1(a',b')
    +(\delta^2-\delta^{K+1}) u_1(a^*,b')
    +\delta^{K+1} V_K.
\end{equation*}
or equivalently,
\begin{equation*}
    (1-\delta^{K-1}) u_1(a^*,b^*)+(\delta^{K-1}-\delta^{K+1}) V_K  - (1-\delta) u_1(\underline{a},b^*)- (\delta-\delta^2) u_1(a',b')
    -(\delta^2-\delta^{K+1}) u_1(a^*,b')>0.
\end{equation*}
Dividing the left-hand-side of the above inequality by $1-\delta$ and then taking the $\delta \rightarrow 1$ limit, we know that
the above inequality is true when $\delta$ is close to $1$ if
\begin{equation}\label{4.26}
    (K-1) u_1(a^*,b^*)+ 2V_K \geq u_1(\underline{a},b^*)+u_1(a',b')+(K-1) u_1(a^*,b').
\end{equation}
Since $u_1$ has strictly increasing differences, we have:
\begin{equation*}
    u_1(\underline{a},b^*) < u_1(a',b^*)-u_1(a',b')+u_1(\underline{a},b') \leq  u_1(a',b^*)-u_1(a',b')+ u_1(a^*,b^*)
\end{equation*}
\begin{equation*}
  <  u_1(a^*,b^*)+u_1(a',b')-u_1(a^*,b')-u_1(a',b')+ u_1(a^*,b^*)=2u_1(a^*,b^*)-u_1(a^*,b').
\end{equation*}
So the right-hand-side of (\ref{4.26}) is bounded from above by
    $2 u_1(a^*,b^*)+(K-2) u_1(a^*,b') +u_1(a',b')$, which is strictly less than $(K+1) u_1(a^*,b^*)$. Since $V_K > u_1(a^*,b^*)$, the left-hand-side of (\ref{4.26}) is strictly greater than $(K+1) u_1(a^*,b^*)$. This establishes inequality (\ref{4.26}).

In the last step, I show that if any action other than $a^*$ and $a'$ occurred in period $t-1$, player 1 has an incentive to play $a'$ in period $t$. Since $a_{t-1} \notin \{a^*,a'\}$, player 2's actions from period $t$ to period $t+K-1$ are $b''$ regardless of player 1's behavior in those periods, and moreover, player 2 has an incentive to play actions greater than $b''$ in period $s (\geq t+K)$ only if player 1 has played $a^*$ at least $K-1$ times and $a'$ at least once after the last time they played actions other than $a^*$ and $a'$. Since $V_K>V_{K-1}>...>V_1$, player 1's continuation value in period $t$ is bounded from above by:
%Since player 1's stage-game payoff is less than $u_1(a^*,b^*)$ as long as actions other than $a^*$ and $a'$ occurred in the last $K$ periods, and when $\delta$ is close to $1$, his continuation value is strictly greater than $u_1(a^*,b^*)$ at every history, it must be the case that his optimal action in period $t$ is either $a'$ or $a^*$. Hence, it is sufficient to show that player 1 prefers $a'$ to $a^*$ in period $t$ when $a_{t-1} \notin \{a^*,a'\}$. Player 1's continuation value if he plays $a'$ in period $t$ and plays $a^*$ from period $t$ to $t+K-1$ is:
\begin{equation*}
    (1-\delta) u_1(a',b'') +(\delta-\delta^K) u_1(a^*,b'') +\delta^K V_K.
\end{equation*}
This upper bound is attained when player 1 plays $a'$ in period $t$ and plays $a^*$ in the next $K-1$ periods, after which play reaches state $s_K$ and player 1's continuation value is $V_K$. This verifies his incentive to play $a'$ when his previous period action was neither $a^*$ nor $a'$.

\paragraph{Case 2:} I consider the case where $(u_1,u_2)$ violates (\ref{optimalcommitment}). Then there exist $a' \neq a^*$ and $b'$ (notice that $b'$ may equal $b^*$) such that $b'$ best replies to $a'$ and $u_1(a',b') \geq \max_{a \in A} \max_{b \in \textrm{BR}_2(a)} u_1(a,b)$. By definition, $u_1(a',b') \geq u_1(a^*,b^*)$. I construct equilibria where the rational-type of player $1$ plays $a'$ in every period and for every $t \geq 1$, player $2_t$ plays $b^*$ if $a^*$ was played in each of the last $K$ periods and
plays $b'$ if the former is not the case and no action except for $a^*$ and $a'$ appeared in the last $\min \{t,K\}$ periods.
The rest of the construction considers two subcases separately.

If $a'$ is player $1$'s lowest action, then at every history that occurs with positive probability, player $1$ plays $a'$ and player $2$ plays $b'$. Obviously, player $1$ has no incentive to play actions other than $a'$ at any history and player 2's strategy is also optimal given her belief, which verifies that this is an equilibrium.

If $a'$ is not player $1$'s lowest action, then let $a''$ be player $1$'s lowest action. By definition, there exists $\phi \in (0,1)$ such that $b'$ as well as an action strictly lower than $b'$, denoted by $b''$, are both best replies to $\alpha \equiv \phi a' + (1-\phi) a''$.
Upon observing any history that occurs with zero probability, if there exists any action that is neither $a'$ nor $a^*$, player $2$ believes that it occurred in the period before.
At every history that occurs with zero probability, player $1$ plays $a'$ if $a_{t-1} \in \{a^*,a'\}$ and plays $\alpha$ if $a_{t-1} \notin \{a^*,a'\}$.
Since player $2$ assigns probability $1$ to $a_{t-1} \notin \{a',a^*\}$ when she observes at least one action that is not $a'$ and $a^*$, she has an incentive to
mix between $b'$ and $b''$, where his probability of playing $b'$ is denoted by $\beta$ and is given by
\begin{equation}\label{14}
\beta u_1(a'',b') + (1-\beta) u_1(a'',b'')= (1-\delta^{K}) \Big(\beta u_1(a',b')+(1-\beta) u_1(a',b'')\Big) + \delta^K u_1(a',b').
\end{equation}
This implies that at a history where $a_{t-1} \notin \{a^*,a'\}$, player $1$ is indifferent between playing $a'$ and $a''$, and given that $u_1(a^*,b^*) \leq u_1(a',b')$, he strictly prefers $a''$ to any action that is not $a'$ or $a''$. What remains to be verified is that at a history where the last $K$ actions were $a'$, player $1$ has no incentive to play actions other than $a'$. First, playing $a^*$ is suboptimal given that $u_1(a^*,b^*) \leq u_1(a',b')$ and playing actions other than $a^*$ and $a'$ is strictly dominated by playing $a''$. Hence, I only need to verify that player $1$ has no incentive to play $a''$. Player $1$'s payoff when he plays $a''$ at history $(a_{t-K},...,a_{t-1})=(a',...,a')$ is
\begin{equation}\label{15}
  (1-\delta) u_1(a'',b') + \delta (1-\delta^{K}) \Big(\beta u_1(a',b')+(1-\beta) u_1(a',b'')\Big) + \delta^{K+1} u_1(a',b'),
\end{equation}
which by the definition of $\beta$ in (\ref{14}) as well as the assumption that $u_1$ has strictly increasing differences, implies that (\ref{15}) is strictly smaller than $u_1(a',b')$. This verifies that player 1 has no incentive to deviate.


\begin{thebibliography}{99}
\bibitem{pa} Acemoglu, Daron, Ali Makhdoum, Azarakhsh Malekian and Asu Ozdaglar (2022) ``Learning from Reviews: The Selection Effect and the Speed of Learning,'' \textit{Econometrica}, 90, 2857-2899.
\bibitem{pa} Acemoglu, Daron and Alexander Wolitzky (2014) ``Cycles of Conflict: An Economic Model,'' \textit{American Economic Review}, 104, 1350-1367.
\bibitem{pa} Batman, Daniel and Laura  Shaw (1991) ``Evidence for Altruism: Toward a Pluralism of Prosocial Motives,'' \textit{Psychological Inquiry}, 2, 107-122.
\bibitem{pa} Bhaskar, V. and Caroline Thomas (2019) ``Community Enforcement of Trust with Bounded Memory,'' \textit{Review of Economic Studies}, 86, 1010-1032.
%\bibitem{pa} Barlo, Mehmet, Guilherme Garmona and Hamid Sabourian (2009) ``Repeated Games with One-Memory,'' \textit{Journal of Economic Theory}, 144, 312-336.
%\bibitem{pa} Barlo, Mehmet, Guilherme Garmona and Hamid Sabourian (2016) ``Bounded Memory Folk Theorem,'' \textit{Journal of Economic Theory}, 163, 728-774.
\bibitem{pa} Clark, Daniel, Drew Fudenberg and Alexander Wolitzky (2021) ``Record-Keeping and Cooperation in Large Societies,'' \textit{Review of Economic Studies}, 88, 2179-2209.
\bibitem{pa} Cripps, Martin, George Mailath and Larry Samuelson (2004) ``Imperfect Monitoring and Impermanent Reputations,'' \textit{Econometrica}, 72, 407-432.
\bibitem{pa} Cripps, Martin and Caroline Thomas (2019) ``Strategic Experimentation in Queues,'' \textit{Theoretical Economics}, 14, 647-708.
\bibitem{pa} Deb, Joyee (2020) ``Cooperation and Community Responsibility,'' \textit{Journal of Political Economy}, 128, 1976-2009.
\bibitem{pa} Deb, Joyee, Takuo Sugaya and Alexander Wolitzky (2020) ``The Folk Theorem in Repeated Games With Anonymous Random Matching,'' \textit{Econometrica}, 88, 917-964.
\bibitem{pa} Dellarocas, Chrysanthos (2006) ``Reputation Mechanisms'' Handbook on Information Systems and Economics, T. Hendershott (ed.), Elsevier Publishing, 629-660.
%\bibitem{pa} Echenique, Federico (2004) ``Extensive-Form Games and Strategic Complementarities,'' \textit{Games and Economic Behavior}, 46, 348-364.
\bibitem{pa} Ekmekci, Mehmet (2011) ``Sustainable Reputations with Rating Systems,'' \textit{Journal of Economic Theory}, 146, 479-503.
\bibitem{pa} Ekmekci, Mehmet, Olivier Gossner and Andrea Wilson (2012) ``Impermanent Types and Permanent Reputations,'' \textit{Journal of Economic Theory}, 147, 162-178.
\bibitem{pa} Ekmekci, Mehmet and Lucas Maestri (2022) ``Wait or Act Now? Learning Dynamics in Stopping Games,'' \textit{Journal of Economic Theory}, 205, 105541.
\bibitem{pa} Ellison, Glenn (1994) ``Cooperation in the Prisoner's Dilemma with Anonymous Random Matching,'' \textit{Review of Economic Studies}, 61, 567-588.
\bibitem{pa} Ely, Jeffrey and Juuso V\"{a}lim\"{a}ki (2003) ``Bad Reputation,'' \textit{Quarterly Journal of Economics}, 118, 785-814.
%\bibitem{pa} Fudenberg, Drew, David Kreps and Eric Maskin (1990) ``Repeated Games with Long-Run and Short-Run Players,'' \textit{Review of Economic Studies}, 57, 555-573.
\bibitem{pa} Fudenberg, Drew and David Levine (1989) ``Reputation and Equilibrium Selection in Games with a Patient Player,'' \textit{Econometrica}, 57, 759-778.
\bibitem{pa} Fudenberg, Drew and David Levine (1992) ``Maintaining a Reputation when Strategies are Imperfectly Observed,'' \textit{Review of Economic Studies}, 59, 561-579.
\bibitem{pa} Fudenberg, Drew and Jean Tirole (1991) ``Perfect Bayesian Equilibrium and Sequential Equilibrium,'' \textit{Journal of Economic Theory}, 53, 236-260.
%\bibitem{pa} Ghosh, Parikshit and Debraj Ray (1996) ``Cooperation in Community Interaction without Information Flows,'' \textit{Review of Economic Studies}, 63, 491-519.
\bibitem{pa} Gossner, Olivier (2011) ``Simple Bounds on the Value of a Reputation,'' \textit{Econometrica}, 79, 1627-1641.
\bibitem{pa} Heller, Yuval and Erik Mohlin (2018) ``Observations on Cooperation,'' \textit{Review of Economic Studies}, 88, 1892-1935.
%\bibitem{pa} H\"{o}rner, Johannes and Nicholas Lambert (2021) ``Motivational Ratings,'' \textit{Review of Economic Studies}, 92, 644-663.
\bibitem{pa} Hu, Ju (2020) ``On the Existence of the Ex Post Symmetric Random Entry Model,'' \textit{Journal of Mathematical Economics}, 90, 42-47.
\bibitem{pa} Jehiel, Philippe and Larry Samuelson (2012) ``Reputation with Analogical Reasoning,'' \textit{Quarterly Journal of Economics}, 127(4), 1927-1970.
\bibitem{pa} Kandori, Michihiro (1992) ``Social Norms and Community Enforcement,'' \textit{Review of Economic Studies}, 59, 63-80.
\bibitem{pa} Kaya, Ay\c{c}a and Santanu Roy (2022) ``Market Screening with Limited Records,'' \textit{Games and Economic Behavior}, 132, 106-132.
\bibitem{pa} Levine, David (1998) ``Modeling Altruism and Spitefulness in Experiments,'' \textit{Review of Economic Dynamics}, 1, 593-622.
\bibitem{pa} Levine, David (2021) ``The Reputation Trap,'' \textit{Econometrica}, 89,  2659-2678.
\bibitem{pa} Li, Yingkai and Harry Pei (2021) ``Equilibrium Behaviors in Repeated Games,'' \textit{Journal of Economic Theory}, 193, 105222.
\bibitem{pa} Liu, Qingmin (2011) ``Information Acquisition and Reputation Dynamics,'' \textit{Review of Economic Studies}, 78, 1400-1425.
\bibitem{pa} Liu, Qingmin and Andrzej Skrzypacz (2014) ``Limited Records and Reputation Bubbles,'' \textit{Journal of Economic Theory} 151, 2-29.
\bibitem{pa} Mailath, George  and Larry Samuelson (2001) ``Who Wants a Good Reputation?'' \textit{Review of Economic Studies}, 68, 415-441.
\bibitem{pa} Mailath, George  and Larry Samuelson (2015) ``Reputations in Repeated Games,'' \textit{Handbook of Game Theory with Economic Applications}, 165-238.
%\bibitem{pa} Niehaus, Paul (2011) ``Filtered Social Learning,'' \textit{Journal of Political Economy}, 119, 686-720.
\bibitem{pa} Pei, Harry (2020) ``Reputation Effects under Interdependent Values,'' \textit{Econometrica}, 88(5), 2175-2202.
\bibitem{pa} Pei, Harry (2022) ``Reputation Building under Observational Learning,'' \textit{Review of Economic Studies}, forthcoming.
\bibitem{pa} Quah, John and Bruno Strulovici (2012) ``Aggregating the Single Crossing Property,'' \textit{Econometrica}, 80, 2333-2348.
\bibitem{pa} Renault, J\'{e}r\^{o}me, Eilon Solan and Nicolas Vieille (2013) ``Dynamic Sender-Receiver Games,'' \textit{Journal of Economic Theory}, 148, 502-534.
\bibitem{pa} Sorin, Sylvain (1999) ``Merging, Reputation, and Repeated Games with Incomplete Information,'' \textit{Games and Economic Behavior}, 29, 274-308.
\bibitem{pa} Sugaya, Takuo and Alexander Wolitzky (2020) ``Do a Few Bad Apples Spoil the Barrel?: An Anti-Folk Theorem for Anonymous Repeated Games with Incomplete Information,'' \textit{American Economic Review}, 110, 3817-3835.
%\bibitem{pa} Sugaya, Takuo and Alexander Wolitzky (2021) ``Communication and Community Enforcement,'' \textit{Journal of Political Economy},  129, 2595-2628.
\bibitem{pa} Takahashi, Satoru (2010) ``Community Enforcement When Players Observe Partners' Past Play,'' \textit{Journal of Economic Theory}, 145, 42-62.
\bibitem{pa} Vong, Allen (2022) ``Certification for Consistent Quality Provision,'' Working Paper.
\end{thebibliography}
\end{document}